\documentclass[sigconf, nonacm]{acmart}

\newcommand\vldbdoi{10.14778/3574245.3574267}
\newcommand\vldbpages{842 - 855}
\newcommand\vldbvolume{16}
\newcommand\vldbissue{4}
\newcommand\vldbyear{2022}
\newcommand\vldbauthors{\authors}
\newcommand\vldbtitle{\shorttitle} 
\newcommand\vldbavailabilityurl{https://github.com/brucechin/dodb}
\newcommand\vldbpagestyle{empty} 

\usepackage{amsmath}
\usepackage{amsthm}
\usepackage{graphicx}
\usepackage{array}
\usepackage{float}
\usepackage{subfig}
\usepackage{algorithm}
\usepackage{algpseudocode}
\usepackage{tikz}
\usepackage{comment}
\usepackage{hyperref}
\usepackage{multirow}
\hypersetup{colorlinks=true,citecolor=red,linkcolor=blue}
\usepackage[makeroom]{cancel}
\usepackage{stmaryrd}
\usepackage{booktabs}
\usepackage{makecell}
\usepackage{pifont}
\usepackage{wrapfig}
\usepackage{lipsum}
\usepackage{listings}
\usepackage{mathtools}
\usepackage{xspace}
\usepackage{balance}

\usepackage{adjustbox}



\graphicspath{{./figs/}}

\setcounter{page}{1}

\definecolor{b2}{RGB}{51,153,255}
\definecolor{mygreen}{RGB}{80,180,0}

\definecolor{yl}{RGB}{255,80,0}

\definecolor{thegreen}{RGB}{40,100,0}
\definecolor{theblue}{RGB}{0,0,204}

\theoremstyle{plain}
\newtheorem{theorem}{Theorem}[section]
\newtheorem{lemma}[theorem]{Lemma}

\newtheorem{fact}[theorem]{Fact}
\newtheorem{claim}[theorem]{Claim}

\theoremstyle{definition}
\newtheorem{definition}[theorem]{Definition}

\newtheorem{remark}[theorem]{Remark}

\renewcommand{\tilde}{\widetilde}

\DeclareMathOperator{\poly}{poly}

\DeclareMathOperator{\R}{{\mathbb R}}
\DeclareMathOperator*{\E}{{\mathbb{E}}}

\DeclarePairedDelimiter{\ceil}{\lceil}{\rceil}
\newcommand{\eps}{\epsilon}
\newcommand{\ex}[1]{\mathbb{E}\normalfont\lbrack #1 \rbrack}

\newcommand{\LHS}{\mathrm{LHS}}

\newcommand{\group}[2]{\gamma_{#1}{#2}}
\newcommand{\cmark}{\ding{51}}%
\newcommand{\code}[1]{{\textsc{#1}}}

\newcommand{\sys}{\textsc{Adore}\xspace}
\newcommand{\filteralg}{{\textsc{DoFilter}}}
\newcommand{\groupsalg}{{\textsc{DoGroup}$_s$}}
\newcommand{\grouphalg}{{\textsc{DoGroup}$_h$}}

\newcommand{\ignore}[1]{}

\algnewcommand\algorithmicforeach{\textbf{for each}}
\algdef{S}[FOR]{ForEach}[1]{\algorithmicforeach\ #1\ \algorithmicdo}

\newcommand{\lianke}[1]{{\color{orange}[Lianke: #1]}}

\newcommand{\vldbrevision}[1]{{\color{black} #1}}

\def\Snospace~{\S{}}

\title{Adore: Differentially Oblivious Relational Database Operators}

\date{}

\author{Lianke Qin}
\affiliation{
\institution{UC Santa Barbara}
}
\email{lianke@ucsb.edu}

\author{Rajesh Jayaram}
\affiliation{
\institution{Carnegie Mellon University}
}
\email{rkjayara@cs.cmu.edu}

\author{Elaine Shi}
\affiliation{
\institution{Carnegie Mellon University}
}
\email{runting@cs.cmu.edu}

\author{Zhao Song}
\affiliation{
\institution{Adobe Research}
 }
\email{zsong@adobe.com}

\author{Danyang Zhuo}
\affiliation{
\institution{Duke University}
}
\email{danyang@cs.duke.edu}

\author{Shumo Chu}
\affiliation{
\institution{p0x labs}
}
\email{chushumo@cs.washington.edu}

\pagestyle{plain}

\begin{document}

\begin{abstract}
There has been a recent effort in applying differential privacy on memory access patterns to enhance data privacy. This is called differential obliviousness. Differential obliviousness is a promising direction because it provides a principled trade-off between performance and desired level of privacy. To date, it is still an open question whether differential obliviousness can speed up database processing with respect to full obliviousness. In this paper, we present the design and implementation of \textbf{Adore}: \textbf{A} set of \textbf{D}ifferentially \textbf{O}blivious \textbf{RE}lational database operators. Adore includes selection with projection, grouping with aggregation, and foreign key join. We prove that they satisfy the notion of differential obliviousness. 
Our differentially oblivious operators have reduced cache complexity, runtime complexity, and output size compared to their state-of-the-art fully oblivious counterparts. We also demonstrate that our implementation of these differentially oblivious operators can outperform their state-of-the-art fully oblivious counterparts by up to $7.4\times$.


\end{abstract}

\maketitle
\pagestyle{\vldbpagestyle}
\begingroup\small\noindent\raggedright\textbf{PVLDB Reference Format:}\\
\vldbauthors. \vldbtitle. PVLDB, \vldbvolume(\vldbissue): \vldbpages, \vldbyear.\\
\href{https://doi.org/\vldbdoi}{doi:\vldbdoi}
\endgroup
\begingroup
\renewcommand\thefootnote{}\footnote{\noindent
This work is licensed under the Creative Commons BY-NC-ND 4.0 International License. Visit \url{https://creativecommons.org/licenses/by-nc-nd/4.0/} to view a copy of this license. For any use beyond those covered by this license, obtain permission by emailing \href{mailto:info@vldb.org}{info@vldb.org}. Copyright is held by the owner/author(s). Publication rights licensed to the VLDB Endowment. \\
\raggedright Proceedings of the VLDB Endowment, Vol. \vldbvolume, No. \vldbissue\ %
ISSN 2150-8097. \\
\href{https://doi.org/\vldbdoi}{doi:\vldbdoi} \\
}\addtocounter{footnote}{-1}\endgroup

\ifdefempty{\vldbavailabilityurl}{}{
\vspace{.3cm}
\begingroup\small\noindent\raggedright\textbf{PVLDB Artifact Availability:}\\
The source code, data, and/or other artifacts have been made available at \url{https://github.com/brucechin/dodb}.
\endgroup
}

\balance

\section{Introduction}

Moving data and computation to the cloud is the most dominant trend in the industry today. Cloud databases~\cite{bigdata1,bigdata2, bigdata3, bigdata4} collect and analyze a vast amount of user data, including sensitive information
such as health data, financial records, and social interactions.
These databases allow developers to run complex queries using a SQL interface, the de facto standard for data analytics. 
Because of these developments, cloud data is often the central target of attacks
\cite{hack1, hack2, hack3, hack4, hack5}, protecting sensitive data in cloud databases has become more important than ever.

A promising direction is to use hardware enclaves, such as Intel SGX ~\cite{SGX-enclave},
and RISC-V Sanctum~\cite{enclave2}, to provide secure data processing inside the cloud.
These enclaves are protected regions in CPUs, where a remotely attested 
piece of code can run without interference from a potentially adversarial
hypervisor and OS.
Major processor vendors have all equipped their new generation of CPUs 
with hardware enclaves.
Cloud providers like Microsoft and Alibaba provide enclave 
support in their public cloud offerings \cite{azureconfidential, alibabasgx}.
Some cloud databases~\cite{Azure-SQL-encrypted, azuresgx} have already used Intel SGX to protect user data, and it is also an area of active research~\cite{priebe2018enclavedb, cipherbase}.

Unfortunately, the Achilles' heel of using hardware enclaves is that enclaves alone do not protect the access patterns of encrypted data outside the enclave's memory.
For applications like big data analytics that require managing a large 
amount of data, an enclave has to fetch encrypted data residing outside the enclave (e.g., a server's main memory, disks).
This leads to {\em access pattern} attacks~\cite{pattern1, pattern2}.
A long list 
of practical access pattern attacks of this form~\cite{attack1, attack2, attack3, attack4, attack5, kkno16} have been discovered for encrypted databases such as CryptDB~\cite{popa2011cryptdb} and TrustedDB~\cite{bajaj2013trusteddb}.

One approach to address this vulnerability is to 
make the memory access patterns of enclave-based database systems 
{\em oblivious}, which means that the access patterns of the system are indistinguishable for different input data. 
This notion of obliviousness was first proposed by Goldreich and Ostrovsky~\cite{oram}.
However, making the database systems fully oblivious incurs a huge performance penalty.
For example, any query output including intermediate results must be padded with filler tuples to the worse-case size, which is usually much larger than the actual result size. 
In recent enclave-based databases (e.g., Opaque~\cite{zheng2017opaque}, ObliDB~\cite{eskandarian2017oblidb}), their fully oblivious modes\footnote{ObliDB calls its fully oblivious mode padding mode.} are significantly slower than their partially oblivious or non-oblivious counterparts.
While their partially oblivious or non-oblivious mode either does not protect memory access patterns at all or has arbitrary leakage, such as leaking the sizes of intermediate results and outputs. The ramifications of such leakages are not understood and may likely lead to new attacks.

\begin{table*}[ht]
\small
\centering
\caption[caption for table 1]{ Cache complexity/Private Memory Size/Output size  comparisons of our system with ObliDB ~\cite{eskandarian2017oblidb}. $N$ denotes input size, $R$
denotes output size, $B$ denotes block size, $M$ denotes
the size of private memory. Let FO denote Full obliviousness. Let DO denote Differential obliviousness. Let * denote the result from our interpretation of their algorithms, the original work didn't explicit state the result. Let \cmark denote the best choice. \vldbrevision{ObliDB's hash-based grouping with aggregation fails when the total number of distinct groups exceeds the enclave private memory capacity $M$. Cache complexity measures the total numbers of data blocks movement between private memory and untrusted memory, which is the dominant overhead. Runtime complexity measures the total number of CPU instructions executed on the decrypted data within enclave.}
}
\begin{adjustbox}{width=\textwidth}

\begin{tabular}{|l|l|c|l|l|l|l|} \hline
 {\bf Operator} & {\bf Algorithms} & {\bf Privacy} & {\bf Private Mem. Size} & {\bf Cache Complexity} & {\vldbrevision{\bf Runtime Complexity}} &  {\bf Output Size} \\ \hline
 Selection with projection &  ObliDB & FO & 1 \cmark  & $2N$*  & \vldbrevision{$O(N)$*} & $N$   \\ 
 Selection with projection& {\bf Alg. \ref{alg:dofilter}} & DO & $\poly\log (N)$ & $(N+R)/B$ \cmark & \vldbrevision{$O(N + R)$} & $R + $ $\poly\log(N)$ \cmark  \\ \hline
 Grouping with aggregation & ObliDB & FO & $M (M > R) $& $2N$* & \vldbrevision{$O(N)$*  } & $M $* \\
 Grouping with aggregation &{\bf Alg. \ref{alg:dogrouph}} & DO & $M$ ($M \geq O(\eps^{-1}\log^2(1/\delta))$) & $N/B + 11 NR/9MB$  \cmark & \vldbrevision{$O(NR/M)$} & $\frac{11}{9}R$ \\ \hline
 Foreign key join & ObliDB & FO & $\poly\log(N)$ \cmark  & $N\cdot\log^2(N)$*  & \vldbrevision{$O(N \log^2(N))$*} &  $N$  \\ 
 Foreign key join &{\bf Alg. \ref{alg:dojoin}} & DO & $\poly\log(N)$\cmark & $6(N/B)\cdot\log(N/B) + N/B + R/B + \poly\log(N)/B$ \cmark & \vldbrevision{$O(N \log(N))$ \cmark}  & $R + \poly\log(N) $ \cmark \\ \hline
\end{tabular}
\end{adjustbox}

\label{tab:complexity}
\end{table*}

Recently, there are rising interests in adopting differential privacy to protect access pattern leakage. To apply this idea to databases, instead of making access patterns indistinguishable between all inputs, \emph{we make the access patterns satisfy differential privacy}~\cite{dmns06, dr14}, a privacy model that only requires indistinguishability among neighboring databases. 
This notion is called \emph{differential obliviousness}, and was introduced by Chan et al.~\cite{ccms19}. This relaxation from full obliviousness opens up new design spaces for more efficient algorithms, yet still provides provable
privacy guarantees for each database record. 
Differential oblivious algorithms only add dummy reads/writes during execution to obfuscate the memory access pattern, and they do not change the query results (except additional dummy tuples). The query accuracy is therefore not affected.

This raises two salient questions: (1) How to design differentially oblivious database operators? (2) Can differentially oblivious database operators outperform their state-of-the-art fully oblivious operators?

In this paper, we present \textbf{Adore}: \textbf{A} set of \textbf{D}ifferentially \textbf{O}blivious \textbf{RE}lational database operators, including selection with projection,  grouping with aggregation, and foreign key join. We pick these operators because they are sufficient to support important database workloads, such as big data benchmark (BDB) \cite{bdb}. We use three key theoretical performance metrics to guide our design: (1) cache complexity, \vldbrevision{(2) runtime complexity,} and (3) output size. Cache complexity measures the total numbers of blocks read from untrusted memory to enclave memory (a.k.a. {\em private memory}), and written from private memory to untrusted memory.  In this scenario, the enclave memory is the ``cache'' and each page is a ``block'' (i.e., the atomic unit being swapped in and out). 
Cache complexity is a dominant source of query latency because moving data between trusted and untrusted memory requires \vldbrevision{memory copying, encryption, decryption and the overhead from SGX ECALLS and OCALLS}.
Using this metric is further justified by our microbenchmark results in~\autoref{sec:breakdown}: in most queries, the memory copy, encryption, and decryption together constitute more than $80\% $ of total query completion time. 
\vldbrevision{Second, we consider runtime complexity, which is the total number of CPU instructions executed on the decrypted data within enclave to implement the filter, aggregation and join operators, but they are not the dominant overhead compared with data movement between enclave private memory and untrusted memory which is measured by cache complexity metric.} 
Output size is also an important metric: the output size decides how much data needs to transfer from trusted to untrusted memory to generate the output.
\vldbrevision{In Table~\ref{tab:complexity}, we provide detailed comparison between our differentially oblivious database operators and ObliDB. We can reduce the cache complexity and output size over ObliDB asymptotically. }

We implement these differentially oblivious database operators. 
Our prototype is developed on top of Intel SGX, because of its availability. We acknowledge that SGX will be deprecated starting from the 11th generation of Intel CPU Core CPUs, but it will continue on Intel Xeon CPUs for cloud usage, which is our target deployment scenario. Choosing SGX also means our prototype is susceptible to known SGX vulnerabilities. These vulnerabilities have known solutions, and patching our prototypes for these vulnerabilities is out of scope of our paper. (See \autoref{sec:discussion}.)

We evaluate our implementation using workloads from BDB.
We show that our operators can substantially outperforms the fully oblivious operators in ObliDB.
Overall, our operators provide up to $7.4\times$ performance improvement over ObliDB.
Our operators also allow scaling to larger data compared with the existing oblivious operators: our operators can process input tables containing 30 million tuples in groupby in BDB, while ObliDB fails because the total number of distinct groups is larger than $M$. Our source code and scripts for running the evaluation are available anonymously at \url{https://github.com/brucechin/dodb}.

Our paper makes the following contributions:
\begin{itemize}
    \item We apply the notion of differential obliviousness to database operators to enhance data privacy by designing three new differentially oblivious database operators.
    \item We formally prove that these operators satisfy the notion of differential obliviousness and have reduced cache complexity and output size.
    \item We demonstrate empirically these differential oblivious operators' performance gain compared to their state-of-the-art fully oblivious counterparts.
\end{itemize}

\vldbrevision{\textbf{Roadmap.} We first introduce our threat model and background knowledge in \autoref{sec:background}. We present our differentially oblivious operators including filter, grouping with aggregation and foreign key join in \autoref{sec:ops}. We present a differentially private distinct count algorithm in \autoref{sec:dp_distinct} and use it in differentially oblivious grouping with aggregation. We evaluate the performance improvement of our algorithm in \autoref{sec:eval}.
We discuss the related work in \autoref{sec:related_work}.
We discuss the potential future work in \autoref{sec:discussion}.
We conclude our paper in \autoref{sec:conclusion}.}

\section{Background}\label{sec:background}
In this section, we first describe our threat model (\autoref{sec:threat}). Then, we formally define differential obliviousness and compare
it with full obliviousness (\autoref{sec:models}).

\subsection{Threat Model}
\label{sec:threat}

We use Intel SGX as an example to discuss the threat model.
Intel SGX provides confidentiality and integrity of its enclave memory (i.e., \emph{private memory}), which is located in a preconfigured part of DRAM called the Processor Reserved Memory (PRM). The content in the enclave memory is encrypted. The enclave memory also guarantees integrity: only the code residing inside the enclave can modify the enclave memory after the enclave is created. The enclave memory size has an upper bound (i.e., 128\,MB). An SGX enclave has a predefined entry point, so a user process or the OS cannot invoke the enclave to run at arbitrary memory addresses.
SGX provides \emph{remote attestation} to allow a remote system to verify what code is loaded into an enclave, and set up a secure communication channel to the enclave. 

These SGX features allow us to trust the code running inside the enclave. Untrusted processes and the operating system cannot tamper with the database source code inside the enclave. The execution and memory accesses for the private memory are also invisible to the untrusted processes and the operating system.

However, the database requires an untrusted component for I/O.
For a trusted data owner to use the database, the data owner sends an encrypted query to the untrusted component, and the untrusted component forwards the query to the enclave. The enclave decrypts the query and asks the untrusted component to load encrypted data from the public memory into the enclave.
The enclave then decrypts the input data, processes the data, and returns the encrypted result to the untrusted component.
The untrusted component forwards the result back to the trusted data owner, who has a decryption key to see the query result. 
During the query processing, the enclave can also send encrypted intermediate results to the untrusted memory and later load them back. This is often needed because enclaves have limited memory. The enclave checks the MACs of the input data and the intermediate results to prevent the cloud server from modifying them.

Unfortunately, the access patterns in the public memory are exposed to the untrusted cloud server. This means an attacker can watch how the enclave reads the encrypted data, writes the encrypted output, and reads/writes the intermediate result.
Data access pattern leakage is sufficient for the attacker to extract secrets and data from many encrypted systems~\cite{theoretical-memory-side-channel, timing-attacks-on-crypto, bernstein2005cache-aes}. Our threat model is the same with ObliDB~\cite{eskandarian2017oblidb}.

\subsection{Differential Obliviousness}\label{sec:models}

\ignore{
To formally define differential obliviousness, we first define differential privacy in Definition~\ref{def:dp}.

\begin{definition}[Differential Privacy]\label{def:dp}

A randomized algorithm $\mathcal{A}$ is $(\epsilon, \delta)-$differential private if for any two \textbf{neighboring databases}
$D_1$, $D_2$ and any subset of possible outputs $O$ of the algorithm:
\begin{align*}
\Pr[\mathcal{A}(D_1) \in O ] \leq e^{\epsilon} \cdot \Pr[\mathcal{A}(D_2) \in O] + \delta .
\end{align*}
Here, the probability is taken over the randomized coin flips of the algorithm $\mathcal{A}$, and the neighboring databases is defined as two databases that only differed by one tuple.
\end{definition}
}

\noindent{\bf Differential obliviousness.}
The notion of differential obliviousness was proposed by Chan et al. \cite{ccms19}. It essentially requires that the memory traces of an algorithm satisfy differential privacy~ \cite{dmns06, dr14}. To provide some background, differential privacy was introduced in the seminal work by Dwork et. al \cite{dmns06, dr14}, which is a framework for 
adding noise to data so that the published result would not harm any individual user's privacy.
Over the years, differential privacy has become the de facto standard for privacy, with growing acceptance in the industry. 
In the differential privacy literature, we typically assume that the data curator is fully trusted, and thus we care about adding noise to the computation result. 
However, in our setting, the data curator (i.e., the cloud provider)
is untrusted. Our goals therefore depart from the standard differential privacy literature. 
Instead of requiring the outputs of the computation to be differentially private, we require that the database system's observable runtime behavior, namely, the access patterns, be differentially private. As mentioned, Chan et al. \cite{ccms19} formulated this notion as differential obliviousness.

With differential obliviousness, the untrusted cloud provider cannot 
extract private information for each individual by observing memory access patterns. A differentially oblivious system is resilient to the attacks mentioned in \autoref{sec:threat}. We formally define differential obliviousness in Definition~\ref{def:do}.

Henceforth, we may view a database as an ordered sequence of records. We say that two databases
$D_1$ and $D_2$ are {\it neighboring}, iff they are of the same length, and moreover, they differ in exactly one record.

\begin{definition}[Differential Obliviousness~\cite{ccms19}]\label{def:do}
An algorithm $\mathcal{A}$ is $(\epsilon, \delta)$-differentially oblivious if for any two {\bf neighboring} {\bf databases}
$D_1$, $D_2$, and any subset of memory access patterns $S$:
\[ \Pr[ \mathcal{M}( \mathcal{A} , D_1) \in S] \leq e^{\epsilon} \cdot \Pr[ \mathcal{M} ( \mathcal{A}, D_2) \in S]  + \delta. \]
\end{definition}
Here, we use $\mathcal{M}(\mathcal{A}, D)$ to denote the distribution of memory access patterns when we apply the algorithm $\mathcal{A}$ on $D$.  The $\epsilon$ parameter is a metric of privacy loss. It also controls the privacy-utility trade-off. The $\delta$ parameter accounts for a negligible probability on which the upper bound $\epsilon$ does not hold.
The memory access pattern is a sequence of memory operations, including the address of each operation and the type of operation (read or write). 
Since the data contents are encrypted, we may assume that the adversary observes only the addresses and types of the operations but not the contents.

It is important to note that in  Definition~\ref{def:do} above, we allow the databases $D_1$ and $D_2$ to contain two types
of records, {\it real} records and {\it filler} records. We allow the filler records so that deleting one entry from the database can be accomplished by replacing the entry with a filler.

Differential obliviousness perfectly captures the threat model enclave-based database systems face: as we discussed in \autoref{sec:threat}, for an enclave-based database system, the data and code execution within the enclave can be considered secure, and the data stored outside the enclave is encrypted but accesses to it leak information. Specifically, in our SGX-based scenario, each memory access observable by the adversary is a page swap event: whenever the SGX enclave wants to swap in or out a new (encrypted) memory page, it needs to contact the untrusted OS for help.

\vspace{3pt}
\noindent{\bf Comparison with full obliviousness.}
It is also instructive to compare the notion of differential obliviousness with the more classical, {\it full obliviousness} notion first proposed by Goldreich~\cite{g87}. We formally define full obliviousness below in Definition~\ref{def:fo}.

\begin{definition}[Full Obliviousness]\label{def:fo}
An algorithm $\mathcal{A}$ is oblivious if for any two databases $D_1$, $D_2$ of the same size and 
any subset of possible memory access patterns $S$:
\begin{align*}
 \Pr[ \mathcal{M}( \mathcal{A}, D_1 ) \in S ] \leq \Pr[ \mathcal{M}( \mathcal{A}, D_2) \in S] + \delta  
 \end{align*}
\end{definition}

Differential obliviousness is a relaxation of full obliviousness in the following senses:
(1) differential obliviousness only requires the memory access patterns over {\it neighboring} databases to be indistinguishable;
(2) the definition of indistinguishability is also relaxed in differential obliviousness, in the sense that we additionally allow a multiplicative $e^\epsilon$ factor when measuring the distance between the two access pattern distributions. 

These relaxations make designing more I/O efficient algorithms possible. For example, in a fully oblivious model, a database system has to add filler tuples to the result until it reaches the worst-case size. In database queries, this worst-case size could be orders of magnitude worse than the average-case size. However, with differential obliviousness, it suffices for the database system to add a small, random number of fillers so that the output size is indistinguishable for two neighboring databases. 

\ignore{
\noindent\textbf{Databases with Filler Tuples.} In differential oblivious setting, one technicality is that the input size could leak sensitive information. For example, if $D'$ is the database that we delete $1$ tuple from $D$, apparently $D$ and $D'$ are neighboring databases. However, the size difference of $D$ and $D'$ itself leaks what is the input data. As a result, in this paper, we assume that all the input, as well as intermediate result and output, contains ploy logarithmic to the overall size filler tuples. If we delete one tuple from the database, we simply
mark this tuple as a filler tuple ($\bot$). 
}

\ignore{
The well-known composition rule (see \autoref{sec:comp}) says that the overall privacy parameter, $\eps$, $\delta$ is additive of the privacy parameters of each operator. 
We refer to the overall privacy parameter $\eps$ and $\delta$ consumed by the entire query as the ``privacy budget''. As mentioned, these budget must be distributed among the operators.}

\section{Differentially Oblivious Operators}
\label{sec:ops}
In this section, we propose a series of differentially oblivious algorithms 
that implement major relational operators, including 
selection with projection, grouping with aggregation, and foreign key join. 

\paragraph{ Overview of our differentially oblivious operators.}
We propose a differentially oblivious algorithm for selection with projection with optimal cache complexity. The main technique of this algorithm is inspired by a theoretical result on differentially oblivious compaction~\cite{ccms19}: using a differentially private prefix-sum sub-routine
to guide the memory access of filtering (\autoref{sec:sel}).
Next, we propose a differentially oblivious algorithm for grouping with aggregation.
Notably, to develop this algorithm, we propose a novel, practical differentially private distinct count algorithm (\autoref{alg:dpcount}).
This is the \emph{first} practical differentially private streaming algorithm for distinct count with provable approximation guarantees to the best of our knowledge!
We use this algorithm to estimate the number of groups produced and then use a pseudorandom function to partition the input database into smaller partitions such that the groups generated in each partition can fit into the private memory with high probability.  
Last, we present our differentially oblivious foreign key join algorithm 
based on oblivious sort
(\autoref{sec:join}). \vldbrevision{We summarize the notations in Table~\ref{tab:notations}.}

\begin{table}[!ht]
\small
\caption{Notations used in this section}
\vldbrevision{
\begin{tabular}{|c|c|}
\hline
Notation & Description     \\ \hline
$\Pi$    & Projection operator \\ \hline
$\sigma_{\phi}$ & Filter operator with filtering predicate $\phi$         \\ \hline
 $\epsilon$ & The multiplicative factor in differential obliviousness  \\ \hline
  $\delta$ & The additive factor in differential obliviousness   \\ \hline
  $I$ & Input table of size $N$ \\ \hline 
  $P$ & A FIFO buffer in the private memory \\ \hline
  $c$ & Read counter in $I$ \\ \hline
  $\tilde{Y}_c$ & Differentially private prefix sum in first $c$ elements \\ \hline
  $t$ & A single tuple from $I$ \\ \hline
  $L$ & Consisting of grouping attributes and aggregation operators \\ \hline 
  $M$ & Private enclave memory size \\ \hline
  $h$ & Hash function in \textsc{DoGroup}$_h$ \\ \hline
  $\tilde{G}$ & Estimated number of distinct elements in data stream \\ \hline
\end{tabular}
}

\label{tab:notations}
\end{table}

\ignore{

\paragraph{ Distance-preserving requirement.}
One invariant that all our differentially oblivious algorithms carefully maintain is that they are all \emph{distance preserving}. 
This requires that running the algorithm on two neighboring databases must produce outputs that are neighboring too.
Notably, this invariant is not naturally guaranteed in many database systems since most SQL queries do not enforce an explicit order on the output. Distance-preserving property allows the differential obliviousness guarantees to be composed later on, i.e., we can apply a differentially oblivious operator to the outcome of another differentially oblivious operator, and be able to reason about the total privacy budget consumed. We discuss composition in more detail in \autoref{sec:comp}.

}

\subsection{Selection with Projection \texorpdfstring{$(\sigma, \Pi)$}{}}
\label{sec:sel}
A selection operator takes a relation and outputs a subset of the relation according to a filtering predicate. Such an operation is denoted $\sigma_{\phi}(R)$, where $\phi$ is the filtering predicate  and $R$ is the input table.
Intuitively, selection operators act like filtering operations in functional programming languages.
A projection operator transforms one relation into another, possibly with a different schema: it is written ($\Pi_{a_{1},\ldots,a_{n}}(R)$) where $ a_{1},...,a_{n}$ is a set of attribute names. The result of such a projection keeps components of the tuple defined by the set of projected attributes and discards the other attributes. In many database systems, projection is usually inlined in selection. We follow this tradition.
Now, we give the differentially oblivious algorithm for $\sigma_{\phi}(\Pi(R))$, where $\phi$ is the filtering predicate and $R$ is the input table. To better understand our algorithm, we start with a na\"ive non-oblivious algorithm:

\paragraph{Na\"ive non-oblivious algorithm.}
It is clear that a non-private filtering algorithm can achieve
linear time, by reading each input tuple $t$ once and writing it when 
$\phi(t) = \code{TRUE}$. However, this na\"ive algorithm is not 
differentially oblivious. This is because after reading a tuple from input,
whether or not another tuple is written to the output leaks whether the previous tuple from input 
evaluated to \code{TRUE} or \code{FALSE}. Intuitively,
one can visualize the 
memory access pattern of this algorithm using two pointers, a read 
pointer and a write pointer. The attacker observes how fast these two pointers move in each step.

Thus, the main idea of our differentially oblivious filtering 
algorithm, \filteralg, is to obfuscate how fast each pointer advances 
\emph{just enough} to achieve differential obliviousness. 
{\filteralg} is inspired by the theoretical result of differentially 
oblivious stable compaction from \cite{ccms19}. To determine how much noise to add on memory access at each step, 
we query a differentially private oracle 
for computing prefix sum in data streams.

\paragraph{Differentially private prefix-sum.}
\vldbrevision{For a data stream $D$ that
consists of only $0$s and $1$s with length $|D| = n$, $D \in \{0, 1\}^n$}, 
the prefix-sum \vldbrevision{$Y_c$} is the count of 
how many $1$s appear in the first \vldbrevision{$c$} elements \vldbrevision{of data stream $D$}.
Now, suppose we have a $(\eps, \delta)-$differentially private prefix sum algorithm that can answer up to \vldbrevision{$n$} queries, and each answer \vldbrevision{$\tilde{Y}_c \in [Y_c - s, Y_c + s]$} with high probability.
To make the traces of the write pointer differentially private,
we can always move the output pointer to \vldbrevision{$\tilde{Y}_c - s$}, and keep the scanned but not yet output tuples in the private buffer. 
Most importantly, we only need a $2s$ sized buffer in private memory and the algorithm would not encounter errors with high probability. 

We use the binary mechanism of Chan et al.~\cite{css11} 
as our DP prefix-sum oracle. This mechanism essentially builds 
a binary interval tree to store noisy partial sums 
for the optimal approximation-privacy 
trade-off. \vldbrevision{For each $c \in [n]$, the estimated prefix-sum 
$\tilde{Y}_c$} from 
the binary mechanism preserves $\epsilon$-differential privacy
while has $O( \epsilon^{-1} \cdot (\log T) \cdot \sqrt{\log t}
\cdot \log ( 1 / \delta )$ error with at least $1- \delta$ probability
~\cite[Theorem 3.5, 3.6]{css11}.

\begin{algorithm}[t]\caption{\filteralg: Differentially Oblivious Filtering}\label{alg:dofilter}
\begin{algorithmic}[1]
\Procedure{\filteralg}{$I, \Pi, \phi, \epsilon, \delta, s $} \Comment{Theorem~\ref{thm:dofilter}.}
    \State $P \gets \emptyset$ \Comment{a FIFO buffer in private memory of size $2s$} 
    \State $I' \gets \emptyset$ \Comment{output table}
    \State $c \gets 0$ \Comment{current read counter in $I$}
    \State \vldbrevision{$\tilde{Y} \gets \code{DPPrefixSum}(\epsilon, \delta)$   \Comment{$\tilde{Y}_c \in [Y_c - s, Y_c + s]$ with high probability}}
    \While{$c < |I|$}
        \State $T \gets \{I_c, I_{c+1}, \ldots I_{c+s-1}\}$ \Comment{read the next $s$ tuples}
        \State $c \gets c + s $ \Comment{update the read counter}
        \For{$t \in T$}
            \If{$\phi(t) = \code{TRUE}$}
                \State $P \gets P.\code{push}(\Pi(t))$ 
            \EndIf
        \EndFor
        \State Pop $P$ to write $I'$ until $|I'| = \tilde{Y}_c - s$
    \EndWhile
    \State write all tuples from $P$ and filler tuples to 
    $I'$ s.t. $|I'| = \tilde{Y}_N + s$ ($N = |I|$) 
\EndProcedure
\end{algorithmic}
\end{algorithm}

\paragraph{Differentially Oblivious Filtering.}
We present the detailed {\filteralg} in~\autoref{alg:dofilter}. 
\vldbrevision{Let $I$ be the input table of length $|I| = N$.}
Let $s$ be the approximation error (with probability at least $1-\delta$ for each query) of the DP prefix-sum oracle (line 2).
We create a FIFO buffer $P$ in private memory with size $2s$, 
the output table $I'$ outside the private memory, and a counter $c$ to
indicate the number of tuples read so far (line 3-5).
Then we repeat the following until reaching the end of $I$:
we read the next $s$ tuples, and update the counter $c$ (line 7-8). 
For each tuple $t$, we push it to $P$ only if 
the predicate evaluates to \code{TRUE} (line 9-13).
We then pop $P$ to fill the output table $I'$ till it reaches size 
$\tilde{Y}_c - s$ (line 14).
After we reach the end of $I$, we pop all the tuples in $P$ and 
add filler tuples if necessary to append on $I'$ till it reaches
size $\tilde{Y}_N$ where $N = |I|$ (line 16).

\paragraph{Correctness failures to privacy failures.}
\autoref{alg:dofilter} is designed to have correctness failure (i.e. the algorithm \vldbrevision{does not} return the correct result) of probability at most $\delta$. 
This means that \autoref{alg:dofilter} can fail when the DP prefix-sum oracle's estimation
is off by more than $s$. 
When this happens, the size $P$ private memory could either overflow at line 11 or 
underflow (i.e. has nothing to pop) at line 14.
In practice, we do not need to worry about this for two reasons. First, $\delta$ is negligible (usually set to $2^{-20} - 2^{-40}$). Second, in case that users want perfect correctness, we can use the standard technique to convert the correctness failures to privacy failures: Instead of failing, if overflow is about to happen at line 11, we can simply write the overflowed tuple to the output. If underflow happens at line 14, we can write a filler tuple to the output. 
Applying these approaches converts the at most $\delta$ probability correctness error to at most $\delta$ probability privacy error, which is negligible.

\begin{theorem}[Main result for filter]\label{thm:dofilter}
For any  $\epsilon \in (0,1)$, $\delta \in (0,1)$ and input $I$ with $N$ tuples , there is an ($\epsilon, \delta$)-differentially oblivious filtering algorithm ({\filteralg} in Algorithm~\ref{alg:dofilter}) that uses  $O(\log(1/\epsilon) \cdot \log^{1.5}N \cdot \log ( N/\delta))$
private memory and $(N+R)/B$ cache complexity, and its output size is $R + \poly \log(N)$.
\end{theorem}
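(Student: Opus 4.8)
The plan is to establish the three claimed bounds --- privacy, cache complexity, and output size --- separately, with the differential obliviousness guarantee being the crux. First I would set up the reduction from the memory access pattern of \filteralg{} to the output of the underlying DP prefix-sum oracle \code{DPPrefixSum}. The key observation is that the entire observable trace of Algorithm~\ref{alg:dofilter} is a deterministic function of (i) the fixed loop structure, which reads tuples in blocks of size $s$ and is data-independent, and (ii) the sequence of write-pointer positions, which after iteration $c$ is exactly $\min(|I'|, \tilde{Y}_c - s)$ where $\tilde{Y}_c$ is the oracle's answer on the stream of predicate-evaluation bits $\phi(I_1),\dots,\phi(I_c)$. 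Since the read schedule is independent of the data and the write schedule is a post-processing of the oracle's answers, the whole trace is a post-processing of $(\tilde{Y}_1,\dots,\tilde{Y}_N)$. Differential privacy is preserved under post-processing, so it suffices that the vector of prefix-sum answers is $(\epsilon,\delta)$-DP with respect to neighboring input streams.

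Next I would argue the neighboring-database step. If $D_1$ and $D_2$ are neighboring input tables differing in exactly one record, then the induced bit-streams $b^{(1)} = (\phi(I_1),\dots)$ and $b^{(2)}$ differ in at most one coordinate, hence are neighboring streams in the sense required by the binary mechanism of Chan et al.~\cite{css11}. By the cited guarantee (\cite[Theorem 3.5]{css11}), the binary mechanism answering all $N$ prefix-sum queries is $\epsilon$-differentially private; composing with the post-processing argument above yields that $\mathcal{M}(\filteralg, D_1)$ and $\mathcal{M}(\filteralg, D_2)$ are $(\epsilon,\delta)$-indistinguishable, where the $\delta$ slack is absorbed by the earlier discussion converting the at-most-$\delta$ correctness failure (buffer overflow/underflow) into an at-most-$\delta$ privacy failure. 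One subtlety worth spelling out is that the final write at line 16 pads to $\tilde Y_N + s$ rather than the true count $R$, so the output length itself is a (post-processed) function of the DP answer and does not leak $R$ exactly; I would note this explicitly since it is what makes the output size $R + \poly\log N$ rather than worst-case $N$.

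For the private-memory bound I would invoke \cite[Theorem 3.5, 3.6]{css11}: the binary mechanism's error on any of the $N$ queries is $O(\epsilon^{-1} \cdot \log^{1.5} N \cdot \log(1/\delta))$ with probability at least $1-\delta$, so setting $s$ to this quantity makes the buffer $P$ of size $2s$ sufficient except with probability $\delta$; the $\log(1/\epsilon)$ factor in the theorem statement comes from storing the noisy partial sums of the interval tree to adequate precision. The FIFO buffer $P$ and the oracle's internal state (a binary tree over $[N]$ with $O(\log N)$ active nodes at any time, each an $O(\epsilon^{-1}\log(1/\delta))$-magnitude noisy count) together fit in the claimed $O(\log(1/\epsilon)\cdot\log^{1.5}N\cdot\log(N/\delta))$ words. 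For cache complexity: the algorithm reads each of the $N$ input tuples exactly once in contiguous blocks, costing $N/B$ block transfers, and writes the output exactly once, in order, costing $R'/B$ where $R' = \tilde Y_N + s = R + \poly\log N$; summing gives $(N+R)/B$ up to the lower-order $\poly\log(N)/B$ term, which matches the table entry. Output size is immediate from line 16: $|I'| = \tilde Y_N + s \le R + 2s = R + \poly\log N$.

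The main obstacle is the privacy argument, specifically pinning down exactly what the adversary observes and showing it is a function only of the oracle's answers (not, say, of $P$'s internal contents or of which specific tuples satisfied $\phi$). I would need to be careful that the in-enclave operations --- evaluating $\phi(t)$, pushing to $P$, popping from $P$ --- generate no address-level leakage distinguishable across neighbors: pushes/pops to a FIFO occur at data-independent positions within private memory, and the number of pops at line 14 is $\max(0, (\tilde Y_c - s) - |I'|)$, already a function of the public block index $c$ and the DP answers. Establishing this "the trace factors through $\tilde Y$" claim rigorously --- including the boundary behavior at line 16 and the overflow/underflow-to-filler conversion --- is where the real work lies; the complexity and size bounds are then routine bookkeeping on top of the \cite{css11} guarantees.
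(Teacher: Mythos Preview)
Your proposal is correct and follows essentially the same approach as the paper's proof: reduce the observable trace to the sequence of DP prefix-sum answers $\tilde{Y}_c$ (the paper phrases this as ``we only leak $\tilde{Y}_k$ for $k=s,2s,\dots$''), invoke the binary mechanism's $\epsilon$-DP guarantee on neighboring bit-streams, and then use the oracle's utility bound with a union bound over the $N/s$ rounds to control the buffer overflow/underflow probability by $\delta$. Your exposition is actually more careful than the paper's in making the post-processing step and the adversary's view explicit; the paper's proof is terser but structurally identical, and the cache-complexity and output-size bookkeeping you outline matches the paper's line for line.
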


\begin{proof}
First, we prove \autoref{alg:dofilter} is ($\epsilon, 0$)-differential oblivious if $P$ has infinite capacity:
For two neighboring input $I, I'$, assuming $I, I'$, we only leaks $\tilde{Y}_k$ 
($k=s, 2s, \ldots, n)$. This leakage is bounded by leaking 
all $\tilde{Y}_i$ ($i \in [N]$).
From the DP guarantee provided by the DP prefix sum oracle~\cite{css10}, all writes have at most ($\epsilon, 0$)-DP 
leakage.

Second, we prove \autoref{alg:dofilter} has at most $\delta$ 
probability of privacy failure with 
$O(\log(1/\epsilon) \cdot \log^{1.5} N \cdot \log (N/\delta) )$ private memory. 
Let $s = O(\log(1/\epsilon) \cdot \log^{1.5} N \cdot \log (N/\delta ))$.
We set $P = 2s$. 
Let $Y_c$ denote number of actual filtered tuples generated so far
(at line 14). From 
the DP guarantee provided by the DP prefix sum oracle, 
we know that for each $c$, 
$Y_c - s \leq \tilde{Y}_c \leq Y_c + s$ with $1 - \frac{\delta}{N}$ 
probability. By the union bound, we know that for all rounds of batched read, with at least 
$1 - \delta$ probability, $P$ over-flows or $P$ under-flows:
\begin{align*}
\Pr[Y_c-(\tilde{Y}_c - s) > 2s \lor Y_c < \tilde{Y}_c - s] \geq 1 -\delta    
\end{align*}

Now we can conclude that the failure probability of \autoref{alg:dofilter} is at most $\delta$.
In \autoref{alg:dofilter}, 
the number of tuples we read is $N$, and the number of tuples we write to the output is $\tilde{Y}_N + s$. From the utility-privacy
bound from the DP oracle~\cite{css10}, we know $\tilde{Y}_N + s = R + \poly \log (N)$. Therefore, the output size of the algorithm is $R + \poly \log (N)$.
In addition, all the read and write in Algorithm~\ref{alg:dofilter} are batched, with the batch size $s$. Thus, the cache complexity of \autoref{alg:dofilter} is $(N+R)/B$ as long as $s \geq B$. 
\end{proof}

\begin{remark}
Remark: Note that $\Omega((N+R)/B)$ cache complexity is a trivial lower bound. Thus, our cache complexity is optimal. 

We adapted the technique in \cite{ccms19}, which presents a DO stable compaction algorithm. Stable compaction is a different problem since it keeps all the elements in the input. Also, in~\cite{ccms19}, they don't have a notion of the cache complexity and therefore don't provide any bound for that.
\end{remark}

\subsection{Grouping with Aggregation \texorpdfstring{($\gamma$, $\alpha$)}{}}
\label{sec:groupby}
A grouping operator groups a relation and/or aggregates some columns. It usually denoted $\group{L}{(R)}$ where 
$L$ is the list which consists of two kinds of elements:
grouping attributes, namely attributes of $R$ by which $R$ will be grouped,
and aggregation operators applied to attributes of $R$.
For example, $\group{c_1,c_2,\alpha_1(c_3)}{(R)}$ partitions the 
tuples in $R$ into groups according to attributes $\{c_1, c_2\}$, 
and outputs the aggregation value of $\alpha_3$ on $c_3$ for each group.

We propose a non-oblivious hash based grouping based on randomized 
partitioning on grouping attributes. 
This can be done by applying a pseudorandom function
(PRF) on the grouping attributes $L$. 
One key challenge is to make each partition fit 
into the private memory (size $M$). 
To obtain the correct parameter s
for the randomized partitioning algorithm, we apply 
a preprocessing step. Specifically, we use a randomized 
streaming distinct count algorithm to get the estimated number of groups produced by this query $\tilde{G}$.
As a result, roughly we need $k = \ceil{\tilde{G}/M}$ sequential scans to find all the groups.
To make this algorithm differentially oblivious is yet
another challenge. One first observation is: this algorithm is ``almost'' oblivious if we pad the output 
in each round to $M$, 
except that the number of sequential scans following the preprocessing step leaks information. 
As a result, we need to use a differentially private 
distinct count algorithm. Additionally, we need to bound the failure probability of the randomized partitioning algorithm, such that the size of each partition will not overflow $M$.

Unfortunately, despite few theoretical results  \cite{advstreaming,hkmms20,cgkm20}, 
to the best of our knowledge there is no practical differentially private distinct count streaming algorithm.
To remedy this, we propose a differentially private distinct count algorithm based on the classical distinct count estimator of Bar-Yossef et al. \cite{bjkst02}. The core technique we leverage here is to use properties of uniform order statistics to bound the concentration of both the  approximation error and the sensitivity at the same time. 
\vldbrevision{The general version of this differentially private distinct count algorithm and detailed proofs are in \autoref{sec:dp_distinct}. And we use it to design our differentially oblivious grouping with aggregation algorithm.}

Our differentially private distinct count algorithm (\autoref{alg:dpcount}) first 
creates a priority-queue $P$ of size $t$ in
the private memory (line 2). 
Then, for each element $x_i$ in the stream, our algorithm
applies a PRF $h$ to obtain a hash value of 
the element ($h(x_i) \in [0, 1)$).
Our algorithm uses the priority-queue to keep
the $t$ smallest hash values of the stream (line 4 - 11).
In the end, we pop $P$ to get the $t$-th smallest hash value $v$ (line 13).
Finally, we output the estimated value of distinct count in line 14. The unbiased estimation should be $t/v$, as stated in \cite{bjkst02}. Here, since the estimated value is used to calculate the number of partitions needed, we can only over estimate. We need to add proper noise to make the algorithm differentially private as well. As a result, the algorithm outputs the noisy count as shown in line 14. \vldbrevision{We will use this algorithm with approximation parameter $\eta = 0.1$ to design our differentially oblivious grouping algorithm.}

\begin{algorithm}[t]\caption{DoGroup$_h$: Differentially Oblivious Grouping}\label{alg:dogrouph}
\begin{algorithmic}[1]
\Procedure{DOGroup$_h$}{$I, L, \epsilon, \delta$} \Comment{Theorem~\ref{thm:dogrouph}}
    \State \vldbrevision{$\tilde{G} \gets \code{DPDistinctCount}(I, \epsilon, 0.1, \delta/2)$ \Comment{\autoref{alg:dpcount} with approximation parameter $\eta = 0.1$}}
    \State $k \gets \ceil{\tilde{G}/0.9M}$ \Comment{$M$: size of the private memory}
    \State Verify that $\sqrt{ 0.5 \tilde{G} \log (2k/\delta) } \leq 0.1M $
    \State $R \gets \emptyset$ \Comment{output table}
    \For{$i \in {0, \ldots, k-1}$}
        \State $\mathcal{H} \gets \emptyset$ \Comment{Hash table for grouping}
        \For{$t \in I$}
            \If{$h(t.L) \in [i/k, (i+1)/k)$} 
            \State \Comment{$h: [m \times \ldots \times m ] \leftarrow [0, 1) $, is a PRF}
                \If{$\mathcal{H}.\code{hasKey}(t.L)$}
                   \State update $\mathcal{H}(t.L)$'s aggregate values using $t$
                \Else
                   \State $\mathcal{H}(t.L) \gets t$
                \EndIf
            \EndIf
        \EndFor
        \State write all tuples in $\mathcal{H}$ and filler tuples to $R$, s.t. $R$'s size increased by $M$ 
    \EndFor
\EndProcedure
\end{algorithmic}
\end{algorithm}

In \autoref{alg:dogrouph}, we present our differentially oblivious grouping algorithm.
\vldbrevision{Let $I$ be the input table. Let $L$ be the list which consists of two kinds of elements:
grouping attributes and aggregation operators.}
The algorithm first computes the
1.1-approximate differentially private distinct count $\tilde{G}$
(line 2), and use $\tilde{G}$ to calculate the number of partitions
 $k = \ceil{\tilde{G}/0.9M}$ (line 3). We set this $k$ value to overestimate the number of iterations we need in order to obtain a negligible failure probability of out-of-enclave memory error among k iterations.
 To ensure the size of each partition is less than $M$ with at least $1 - \delta$ probability, we verify 
 that $\sqrt{ 0.5 \tilde{G} \log (2k/\delta) } \leq 0.1M$ (line 4).
 Next, the algorithm sequentially scans $I$ a total of $k$ times. In $i$-th scan, the algorithm creates an empty hash table $\mathcal{H}$ (line 7). Then, for each tuple $t$, the algorithm applies a PRF $h$ on the list of grouping attributes of $t$. Here $h(t.L)$. $h(t.L)$ falling into $[i/k, (i+1)/k)$ means this group is within the partitioned groups of current sequential scan. In this case, the algorithm either update the aggregate values if $\mathcal{H}$ already contains $t$, or create a new entry for $t$ in $\mathcal{H}$ (line 9 - 16). In the end of each sequential scan, we
 output all groups in $\mathcal{H}$ and filler tuples so that 
 size $M$ is written to the output (line 18).

\begin{theorem}[Main result for grouping]\label{thm:dogrouph}
For any  $\epsilon \in (0,1)$, $\delta \in (0,1)$ and input $I$ with size $N$ and $O(\eps^{-1}\log^2(1/\delta))$ private memory $M$, there is an ($\epsilon, \delta$)-differentially oblivious and distance preserving grouping algorithm ({\grouphalg} in Algorithm~\ref{alg:dogrouph}) that uses  $M$
private memory, has $N/B+11NR/9MB$ cache complexity and its output size is $\frac{11}{9}R$.
\end{theorem}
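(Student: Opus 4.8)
The plan is to establish four things in order: the private‑memory usage, the output size, the cache complexity, and finally the $(\epsilon,\delta)$‑differential obliviousness together with distance preservation. The first three are essentially deterministic bookkeeping once we condition on two good events: (a) the estimate $\tilde{G}$ returned by \code{DPDistinctCount} with $\eta=0.1$ satisfies $G\le\tilde{G}\le 1.1\,G = 1.1R$, which holds with probability $\ge 1-\delta/2$ by its guarantee, where $G$ is the true number of distinct groups; and (b) no scan overflows the private memory, which the line‑4 check is designed to guarantee. Only the overflow bound and the privacy argument are genuinely probabilistic, and the latter is where I expect the real work.

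\paragraph{Step 1: deterministic accounting.} Conditioned on $G\le\tilde{G}\le 1.1R$, set $k=\ceil{\tilde{G}/0.9M}$, so $\tilde{G}/k\le 0.9M$ and $k\le \tilde{G}/(0.9M)+1\le \tfrac{11R}{9M}+1$. The \code{DPDistinctCount} pass needs a priority queue of $O(\eps^{-1}\log^2(1/\delta))$ words, which fits in $M$ by hypothesis; in each of the $k$ subsequent scans the hash table $\mathcal{H}$ holds at most $M$ entries on the no‑overflow event; hence the total private memory is $M$. The output consists of exactly $kM$ tuples (real groups padded with fillers), so its size is $kM\le \tfrac{11}{9}R+M=\tfrac{11}{9}R+O(M)$. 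For cache complexity: the count pass costs $N/B$; each of the $k$ scans reads $I$ sequentially for $N/B$, totalling $kN/B\le \tfrac{11}{9}\tfrac{NR}{MB}+\tfrac{N}{B}$; the writes cost $kM/B=O(R/B)$, which is dominated by the scan cost in the regime $N\gtrsim M$. Since every access is a sequential batch of size $\ge B$, these are genuine cache bounds, giving $N/B+\tfrac{11}{9}\tfrac{NR}{MB}$ overall.

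\paragraph{Step 2: bounding the overflow probability.} Fix scan $i$. Modeling the PRF $h$ as a random function, the number of real groups routed into $\mathcal{H}$ in scan $i$ is $X_i\sim\mathrm{Bin}(G,1/k)$, a sum of $G$ independent indicators with $\mathbb{E}[X_i]=G/k\le \tilde{G}/k\le 0.9M$. By Hoeffding's inequality, $\Pr[X_i>M]\le\Pr[X_i-\mathbb{E}[X_i]>0.1M]\le\exp(-2(0.1M)^2/G)\le\exp(-2(0.1M)^2/\tilde{G})$. The check on line~4, $\sqrt{0.5\,\tilde{G}\log(2k/\delta)}\le 0.1M$, is exactly the inequality $2(0.1M)^2/\tilde{G}\ge\log(2k/\delta)$, so $\Pr[X_i>M]\le\delta/(2k)$; a union bound over the $k$ scans bounds the overall overflow probability by $\delta/2$. (I will note here that the algorithm explicitly verifies this line‑4 condition, which is the precise requirement.)

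\paragraph{Step 3: differential obliviousness and distance preservation --- the main obstacle.} The access pattern of \grouphalg{} is the trace of \code{DPDistinctCount}, followed by $k$ shape‑identical blocks, where block $i$ reads all $N$ input tuples in input order, performs updates to the in‑enclave hash table $\mathcal{H}$ (invisible to the adversary, since $\mathcal{H}$ lives in private memory), and writes out exactly $M$ tuples. Hence, conditioned on $k$ (equivalently, on $\tilde{G}$) and on no overflow, the post‑count trace is a \emph{deterministic} function of $(N,M,k)$ and carries no information about the data beyond $\tilde{G}$; since $k=\ceil{\tilde{G}/0.9M}$ is a post‑processing of $\tilde{G}$ and \code{DPDistinctCount} is $(\epsilon,\delta/2)$‑differentially private, the trace distributions on neighboring inputs are $(\epsilon,\delta/2)$‑indistinguishable, and folding in the $\le\delta/2$ overflow event (treated as an arbitrary access pattern, as in the ``correctness failures to privacy failures'' paragraph of \autoref{sec:sel}) yields $(\epsilon,\delta)$‑differential obliviousness. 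For distance preservation: on neighboring inputs the multiset of real groups changes by $O(1)$ --- one tuple may leave one group and join another --- so the real outputs stay within constant distance, which is what downstream composition needs, and the padded length $kM$ is itself a post‑processing of $\tilde{G}$ and so leaks nothing beyond the privacy already accounted for. The hard part is exactly this last step: one must argue cleanly that the adversary's only data‑dependent observation is $\tilde{G}$ --- in particular that \code{DPDistinctCount}'s own internal access pattern is oblivious or subsumed by its DP guarantee, and that padding every scan to exactly $M$ genuinely hides the per‑partition load --- and then combine the $(\epsilon,\delta/2)$‑DP estimate with the $\delta/2$ overflow event into a single $(\epsilon,\delta)$‑DO bound without double counting.
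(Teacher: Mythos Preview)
Your proposal is correct and mirrors the paper's own proof: the paper likewise argues that the only data‑dependent leakage is $k$ (hence $\tilde{G}$), invokes the $(\epsilon,\delta/2)$‑DP guarantee of \code{DPDistinctCount} together with its obliviousness, bounds the per‑scan overflow by $\delta/(2k)$ via the binomial/Hoeffding tail encoded in the line‑4 check, and union‑bounds to $(\epsilon,\delta)$‑DO, with the same $k\le 1.1R/(0.9M)$ bookkeeping for cache complexity and output size. If anything you are slightly more careful than the paper in explicitly flagging the obliviousness of the count subroutine and in attempting the distance‑preservation argument, which the paper's proof does not spell out.
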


\begin{proof}

\vldbrevision{Because at the end of each scan, we write to the output table until its size is increased by $M$ and the enclave private memory size $M$ is public, }
the only information that \autoref{alg:dogrouph} leaks is
$k$, the number of sequential scans of $I$.
This only leaks $\tilde{G}$ though. Moreover, $\tilde{G}$ is $(\eps, \delta/2)$-differentially private. \vldbrevision{And the differentially private distinct count algorithm is oblivious.}
Thus, \autoref{alg:dogrouph} is $(\eps, 0)$-differentially oblivious if we ignore the failure case.

Next, we prove that the failure probability of \autoref{alg:dogrouph} is at most $\delta$. 
Since $\sqrt{ 0.5 \tilde{G} \log (2k/\delta) } \leq 0.1M $ (line 4) and the expected number of group generated in each sequential scan is $0.9M$, let $G_i$
be the number of groups $i$-th sequential scan generated. From the properties of binomial distribution (detailed lemmas can be found in \autoref{sec:rand_partition}),
we have $\Pr[G_i \leq M] \leq \delta/2k$.
Applying a union bound over $k$ scans, we can bound the failure probability of the randomized partitioning by at most $\delta/2$. Applying union bound again with the $(\eps, \delta)$-differentially private $\tilde{G}$, we can bound the failure probability of \autoref{alg:dogrouph} to at most $\delta$. 

Finally, \autoref{alg:dogrouph} requires a sequential scan of input in preprocessing, whose cache complexity is $N/B$.
In the following steps, the cache complexity is $k(M + N) /B$, where $k \leq 1.1 R/ 0.9M$, which is no more than 
$\frac{11R(M+N)}{9MB}$. Here $M$ is absorbed by $N$. Thus, the overall cache complexity of \autoref{alg:dogrouph} is $N/B + 11NR/9MB$. Because we need to write $M$ groups to the output for $k$ passes, the output size is $\frac{11}{9}R$.

\end{proof}

\subsection{Foreign Key Join \texorpdfstring{($\bowtie$)}{}}
\label{sec:join}
Foreign key join is the most widely used join operator in data analytics. We write $R \bowtie S$\footnote{$\bowtie$ is normally used for natural join, we abuse the notion here.} to represent 
a join on the primary key and foreign key pairs of the relations $R$ and $S$

A typical oblivious join algorithm first pads the tuples from two joined tables to the same size, and add a ``mark'' column to every tuple to mark which table is this tuple from. Then, it performs an oblivious sort on the concatenation of both joined tables.
This oblivious sort routes the tuples to be joined from both tables
to the same group. 
Next, the algorithm makes a sequential scan of the sorted table to generate the result table. This can be done obliviously since for each tuple read from the primary key table, the algorithm will output a filler tuple instead. 
Last, the algorithm uses another oblivious sort to remove all the filler tuples.
This algorithm is first implemented in Opaque~\cite{zheng2017opaque} and then followed by ObliDB~\cite{eskandarian2017oblidb}.

We develop \textsc{DoJoin}, a differentially oblivious foreign key join algorithm. In \textsc{DoJoin}, the neighboring databases $D_1$ and $D_2$ should contain the same primary key table and their foreign-key tables have the same length but differ in one record.
\textsc{DoJoin} improves the standard oblivious foreign key join in three aspects. 
First, 
we use the more efficient bucket oblivious sort~\cite{acnprs20} replacing the 
bitonic sort used in ObliDB. Compared with bitonic sort, bucket oblivious sort 
has better asymptotic complexity ($O(n \log n)$ compared with $O(n \log^2 n)$)
and still relatively small constant $6$.
Second, our algorithm only sorts the input once, removing filler
tuples is done by our differential oblivious filtering algorithm (\autoref{alg:dofilter}, \autoref{sec:sel}).
Lastly, our algorithm pads filler tuples in the output to the size of differential 
obliviousness requirement, rather than to the worse case size,
which could be much smaller in practice.

\begin{algorithm}[!ht]\caption{\textsc{DoJoin}: DO Foreign Key Join}\label{alg:dojoin}
\begin{algorithmic}[1]
\Procedure{\textsc{DoJoin}}{$R, S, k_R, k_S, \epsilon, s$} \Comment{Theorem~\ref{thm:dojoin}}
\State \Comment{$k_R$ is the PK of $R$, $k_S$ is a FK in $S$ referring $k_R$ }
    \State pad the size of each row of $R$ and $S$ to the 
    greater row size of $R$ and $S$
    \State $R' \gets \rho(\texttt{mark}(R, \text{`r'}), k_R \rightarrow k)$
    \State $S' \gets \rho(\texttt{mark}(S, \text{`s'}), k_S \rightarrow k)$
    \State $I \gets \code{BucketObliviousSort}(R' || S', k || mark)$

    \State $t \gets \bot$ \Comment{Current tuple from $R$ to be joined}
    \State $I' \gets \emptyset$
    \For{$x_i \in I$}
        \If{$x_i.mark =$ `r'}
        \State $t \gets x_i $
        \State $I' \gets I' || \bot$ \Comment{$\bot$ means filler tuple}
        \Else \Comment{$x_i.mark =$ `s'}
        \State $I' \gets I' || (x_i \cup t)$
        \EndIf
    \EndFor
    \State $T \gets \code{DoFilter}(I', \code{ID}, \lambda t. t \neq \bot, \epsilon, \delta)$ \Comment{remove filler tuples, Algorithm~\ref{alg:dofilter}.}
    \State \Return
\EndProcedure
\end{algorithmic}
\end{algorithm}

We present \textsc{DoJoin} in \autoref{alg:dojoin}.
\vldbrevision{Let $R$ and $S$ be the primary key and foreign key tables. $k_R$ is the PK of $R$, $k_S$ is a FK in $S$ referring $k_R$.}
\textsc{DoJoin} first pads tuples from $R$ and $S$ to the same size and adds an additional ``mark column'' to each tuple to mark which relation it comes from.
This result in $R'$ and $S'$ (line 3 - 5).  Next, \textsc{DoJoin} concatenates $R'$ and $S'$ ($R' || S'$)
and then sort the result first by the key column and then by the mark column. For tuples with the same key, the tuple from $R'$ will always be read first (if it exists). 
Now, \textsc{DoJoin} sequentially scans the sorted table: 
if a tuple from $R'$ is scanned, assign it to the working tuple, 
and output a filler tuple to the output (line 9 - 11); if a tuple from $S'$ is 
scanned, we join it with the working tuple and write the joined tuple to the output (line 13).
Lastly, we call \textsc{DoFilter} to remove the filler tuples.

\begin{theorem}[Main result for join]\label{thm:dojoin}
For any  $\epsilon \in (0,1)$, $\delta \in (0,1)$, input $I$ with size $N$,
private memory of size $M$ and result size $R$, there is an ($\epsilon, \delta$)-differentially oblivious and distance preserving foreign key join algorithm (DoJoin in \autoref{alg:dojoin}) that uses  $O(\log(1/\epsilon) \cdot \log^{1.5}N \cdot \log ( N/\delta))$
private memory and has $6 (N/B) \log(N/B) + (N+R)/B$ cache complexity, and its output size is $R + \poly \log(N)$.
\end{theorem}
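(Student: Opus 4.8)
The plan is to derive all four claimed properties of \textsc{DoJoin} — differential obliviousness, distance preservation, the private-memory/cache bounds, and the output size — from two ingredients only: the full obliviousness of \code{BucketObliviousSort}, and the guarantees of \filteralg\ in Theorem~\ref{thm:dofilter}. Everything else in Algorithm~\ref{alg:dojoin} (padding, marking, concatenation, and the single linear scan in lines 7--16) has a data-independent access pattern, and the sort-then-scan step is a standard correct foreign-key join, so the proof is essentially a reduction plus some bookkeeping.

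For \textbf{differential obliviousness}, I would split the memory trace of \textsc{DoJoin} into three segments: (i) lines 3--5 (padding, marking, concatenation), which touch each tuple in a fixed order and whose trace is a deterministic function of the public size $N$; (ii) \code{BucketObliviousSort} (line 6) together with the linear scan (lines 7--16), where the sort is fully oblivious (its trace distribution depends only on $N$) and the scan reads $I$ sequentially and writes exactly one output tuple per input tuple (hence data-independent); and (iii) the terminal call to \filteralg\ (line 17). For neighboring $D_1,D_2$ — same primary-key table, foreign-key tables differing in one record — segments (i) and (ii) are \emph{identically} distributed, so it suffices to bound segment (iii). The key observation is a sensitivity claim: if $b\in\{0,1\}^{N}$ is the indicator of ``this tuple of the sorted array $I$ came from $S$'', then changing one record of the foreign-key table moves a single $1$ inside $b$ (the affected tuple leaves one key-block and joins another), so the prefix sums of $b_1$ and $b_2$ agree up to an additive $\pm1$ in every coordinate. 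Since the trace of \filteralg\ depends on its input only through the noisy prefix sums emitted by the binary mechanism, and a single-element move changes those sums with $\ell_\infty$-sensitivity $1$ (and $\ell_1$-sensitivity $O(\log N)$ over the interval tree, i.e. at most a factor $2$ over the single-insertion case), Theorem~\ref{thm:dofilter} gives that segment (iii) is $(\epsilon,\delta)$-differentially oblivious for this neighbor relation — up to the constant in $\epsilon$ induced by treating a move as a deletion followed by an insertion, which one can make exact by running \filteralg\ with budget $\epsilon/2$. Composing the three segments yields $(\epsilon,\delta)$-differential obliviousness.

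For \textbf{distance preservation}, observe that each tuple of $S$ yields exactly one joined tuple in $I'$ and each tuple of $R$ yields exactly one filler, so $|I'|=N$ regardless of the data, and on neighboring $D_1,D_2$ the arrays $I'_1,I'_2$ differ in exactly one real joined tuple under the natural coupling (fillers are interchangeable). Moreover the total prefix sum $Y_N$ equals $|S|$ for \emph{both} databases, so coupling the randomness of the prefix-sum oracle makes the final padding target $\tilde Y_N+s$ identical; the two outputs therefore have the same length and differ in exactly one record, which is precisely distance preservation. The \textbf{complexity bounds} then follow by summing the three segments: private memory is dominated by the $O(s)$-sized FIFO buffer of \filteralg, giving $O(\log(1/\epsilon)\cdot\log^{1.5}N\cdot\log(N/\delta))$ by Theorem~\ref{thm:dofilter} (\code{BucketObliviousSort} needs only $\poly\log N$ private memory, which is absorbed); cache complexity is $6(N/B)\log(N/B)$ for the sort, $O(N/B)$ for the sequential scan of $I$ and write of $I'$, and $(N+R)/B$ for \filteralg\ on an input of size $N$ with output $R+\poly\log N$, so the $O(N/B)$ term is absorbed and the total is $6(N/B)\log(N/B)+(N+R)/B$; and the output size is exactly that of the terminal \filteralg\ call, namely $R+\poly\log N$, again by Theorem~\ref{thm:dofilter}.

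The main obstacle I expect is the sensitivity argument inside the differential-obliviousness step: one must be careful that the adversary's view really is just the oblivious-sort trace plus the \filteralg\ trace (in particular that the sorted order is only \emph{accessed} obliviously, never revealed), and that a one-record change in the foreign-key table perturbs the filtering indicator $b$ by a single move rather than an arbitrary permutation of positions. Pinning down the constant in the privacy budget — the factor $2$ from viewing a move as an insertion plus a deletion — is the one place where the accounting must be done carefully; the remaining claims are routine sums on top of Theorem~\ref{thm:dofilter} and the stated constant-$6$ cache complexity of bucket oblivious sort.
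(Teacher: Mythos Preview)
Your decomposition and complexity accounting mirror the paper's (much terser) proof, which simply asserts that \code{BucketObliviousSort} is fully oblivious, that \filteralg\ is $(\epsilon,\delta)$-DO, and then sums the cache costs. Where you go further than the paper is in trying to justify \emph{why} the $(\epsilon,\delta)$-DO guarantee of \filteralg\ transfers --- and that is where there is a genuine gap. When one foreign-key record changes its key, that tuple moves from position $p$ to position $q$ in the sorted concatenation, and every tuple in between shifts by one index. Hence the filter indicator $b$ does \emph{not} change in only two positions: it flips at every index in $[p,q)$ where an $R$-tuple and an $S$-tuple are adjacent, which can be $\Theta(q-p)$ positions. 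Your observation that the prefix sums nevertheless differ by at most $1$ everywhere is correct, but the binary mechanism's $\epsilon$-DP is established via the $\ell_1$-sensitivity of the \emph{tree-node} partial sums, not of the prefix sums; and many dyadic intervals inside $[p,q]$ can each change by $\pm 1$ (e.g., if $R$- and $S$-tuples alternate there, all $q-p$ leaf values flip). So your ``$\ell_1$-sensitivity $O(\log N)$ over the interval tree, at most a factor $2$'' claim is false, and halving the budget does not repair it. The paper's proof never confronts this: it invokes Theorem~\ref{thm:dofilter} without checking that the two inputs to \filteralg\ are neighboring in the sense that theorem requires --- and under the paper's own definition they need not be.

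The same sort-induced shift also breaks your distance-preservation argument. After \filteralg, the two output sequences agree as multisets up to one joined tuple, but the changed tuple lands at a different rank in the sorted output, so as \emph{ordered} sequences the outputs can disagree at $\Theta(q-p)$ positions, not one. (The paper states distance preservation in the theorem but its proof is silent on it.) Your private-memory, cache-complexity, and output-size computations are fine and match the paper.
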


\begin{proof}
\textsc{DoJoin} is $(\eps, \delta)$-differentially oblivious follows that \textsc{BucketObliviousSort} is fully oblivious and 
\textsc{DoFilter} is $(\eps, \delta)$-differentially oblivious
with $O(\log(1/\epsilon) \cdot \log^{1.5}N \cdot \log ( N/\delta))$ private memory. 

\textsc{DoJoin} requires sorting $R' || S'$ obliviously once. It uses \textsc{BucketOblivousSort}, which has cache complexity of
$6(N/B) \log{N/B}$. Additionally, the \textsc{DoJoin} algorithm uses the \textsc{DoFilter} algorithm which has a cache complexity $(N+R)/B$. Thus, the cache complexity of \textsc{DoJoin} is $6 (N/B) \log(N/B) + (N+R)/B$. From the output size of \textsc{DoFilter} algorithm, we know the output size of \textsc{DoJoin} is $R + \poly \log(N)$.
\end{proof}

\section{Differentially Private Distinct Count} \label{sec:dp_distinct}
In this section, we describe a differentially private distinct count algorithm based on \cite{bjkst02}. We first prove the main technical lemmas 
about order statistics properties of random sampling in \autoref{sec:tech_lemma}. Next, we define $(\eps, \delta)$-sensitivity and 
introduce the Laplacian mechanism for $(\eps, \delta)$-sensitivity in \autoref{sec:newsensitivity}. This follows by our analysis of \cite{bjkst02}: its $(\eps, \delta)$-sensitivity and its 
approximation ratio concentration.
Last, we develop a differentially private distinct count algorithm (Algorithm~\ref{alg:dpcount}) based
on \cite{bjkst02}, \vldbrevision{which is used to implement \textsc{DoGroup}$_h$ (Algorithm~\ref{alg:dogrouph}) with approximation parameter $\eta = 0.1$}. 

\subsection{Order Statistics Properties of Random Sampling}\label{sec:tech_lemma}
For any integer $n$, we use $[n]$ to denote the set $\{1,2,\cdots,n\}$.
Let $x_1,x_2,\dots,x_n \sim [0,1]$ be independently and uniformly sampled, and let $x_{(1)} \leq x_{(2)} \leq \dots \leq x_{(n)}$ be the order statistics of the samples $\{x_i\}_{i=1}^n$. For simplicity, we write $y_i = x_{(i)}$ to align with the notation above, so that $y_i$ is the $i$-th smallest value in $\{x_i\}_{i=1}^n$. Fix any $1 \leq t \leq n/2$. Our goal is to prove that $| \frac{1}{y_t} - \frac{1}{y_{t+1}}| \leq O(\frac{n}{t^2})$ with large constant probability. We begin with the following simple claim which lower bounds $y_t$. We note that the bound improves for larger $t$, so one can use whichever of the two bounds is better for a given value of $t$. We delay the proof of Claim~\ref{claim:1} to Section~\ref{sec:tech_lemma:app}.

\begin{claim}\label{claim:1}
Let $t \in [n]$, and fix any $0 < \delta < 1/2$. Then we have the following two bounds:

\begin{enumerate}
    \item $\Pr[ y_t > \delta \frac{t}{n} ] \geq 1 - \delta$. 
    \item $\Pr[ y_t >  \frac{t}{2n} ] \geq 1 - \exp(-t/6)$. 
\end{enumerate}

\end{claim}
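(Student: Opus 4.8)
The plan is to reduce both statements to tail bounds on a single binomial random variable via the standard order-statistic/CDF identity. The key observation is that, since $y_t = x_{(t)}$ is the $t$-th smallest of the $n$ i.i.d.\ uniform samples, the event $\{y_t \le a\}$ is exactly the event that at least $t$ of the $x_i$ fall in $[0,a]$; hence, writing $Z_a \sim \mathrm{Bin}(n,a)$, we have $\Pr[y_t \le a] = \Pr[Z_a \ge t]$ for every $a \in [0,1]$. Both parts then follow by choosing $a$ appropriately and applying a concentration inequality to $Z_a$; note that under the hypotheses $\delta < 1/2$ and $t \le n$ the values of $a$ we pick are always $\le 1$, so the reduction is valid.

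For part (1), I would take $a = \delta t/n$, so that $\E[Z_a] = na = \delta t$. Since $Z_a \ge 0$, Markov's inequality gives $\Pr[Z_a \ge t] \le \E[Z_a]/t = \delta$, and therefore $\Pr[y_t > \delta t/n] = 1 - \Pr[Z_a \ge t] \ge 1 - \delta$.

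For part (2), I would take $a = t/(2n)$, so that $\mu := \E[Z_a] = t/2$, and we need $\Pr[Z_a \ge t] = \Pr[Z_a \ge 2\mu]$. Applying the multiplicative Chernoff bound $\Pr[Z_a \ge (1+\lambda)\mu] \le \exp\!\big(-\lambda^2 \mu/(2+\lambda)\big)$ with $\lambda = 1$ yields $\Pr[Z_a \ge t] \le \exp(-\mu/3) = \exp(-t/6)$, which is the claimed bound. I do not expect any real obstacle here; the only minor care needed is to invoke a form of the Chernoff bound whose constant produces exactly $t/6$ in the exponent (the $\lambda^2\mu/(2+\lambda)$ form does, as would the $\lambda^2\mu/3$ form valid for $\lambda \le 1$), and to confirm the edge cases where $a$ could exceed $1$, which cannot happen under the stated hypotheses.
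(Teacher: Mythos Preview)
Your proposal is correct and matches the paper's proof essentially line for line: both parts reduce to bounding $\Pr[|\{i: x_i \le a\}| \ge t]$ for the appropriate $a$, applying Markov's inequality in part (1) and a multiplicative Chernoff bound (with mean $t/2$ and deviation factor $\lambda=1$) in part (2). The only cosmetic difference is that you phrase the reduction via the explicit order-statistic/binomial identity, while the paper argues it directly in words.
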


We now must lower bound $y_{t+1}$, which we do in the following claim. We delay the proof to Section~\ref{sec:tech_lemma:app}.

\begin{claim}\label{claim:2}
Fix any $4 <\alpha <n/2$, and $1 \leq t \leq n/2$. Then we have
\begin{align*}
  \Pr[  y_{t+1} < y_t + \alpha/n ] \geq 1- \exp( - \alpha / 4 ) .
\end{align*}
\end{claim}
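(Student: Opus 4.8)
The plan is to bound the complementary event $\{y_{t+1} \ge y_t + \alpha/n\}$, i.e.\ the event that there is a gap of length at least $\alpha/n$ between the $t$-th and $(t+1)$-th smallest of the $n$ uniform samples, and to show it has probability at most $\exp(-\alpha/4)$.

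First I would condition on the value $y_t = v$ of the $t$-th order statistic and invoke the standard decomposition of uniform order statistics: conditioned on $\{y_t = v\}$, the $n-t$ samples that exceed $v$ are i.i.d.\ uniform on $[v,1]$ and independent of the $t$ samples that are at most $v$. Hence, conditioned on $y_t = v$, the quantity $y_{t+1} - v$ is the minimum of $n-t$ i.i.d.\ uniforms on $[0, 1-v]$, so for every $0 \le s \le 1-v$,
\[
  \Pr[\, y_{t+1} - y_t \ge s \mid y_t = v \,] \;=\; \left(1 - \frac{s}{1-v}\right)^{n-t},
\]
and this conditional probability is $0$ when $s > 1-v$ (the gap cannot even fit below $1$).

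Second I would set $s = \alpha/n$ and bound the right-hand side uniformly in $v$. Whenever $\alpha/n \le 1-v$ we have $1 - \frac{\alpha/n}{1-v} \le 1 - \frac{\alpha}{n}$ since $1-v \le 1$; we have $n-t \ge n/2$ since $t \le n/2$; and $\alpha/n < 1$ since $\alpha < n/2$. Combining these with $1 - x \le e^{-x}$ gives
\[
  \Pr[\, y_{t+1} - y_t \ge \alpha/n \mid y_t = v \,] \;\le\; \left(1 - \frac{\alpha}{n}\right)^{n/2} \;\le\; e^{-\alpha/2} \;\le\; e^{-\alpha/4},
\]
and taking expectation over $v$ yields $\Pr[y_{t+1} \ge y_t + \alpha/n] \le e^{-\alpha/4}$, which is the claim.

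I do not expect a genuine obstacle here: the one point that needs care is the conditional law of $y_{t+1}$ given $y_t$, which is exactly the ``memorylessness'' of uniform order statistics and should either be cited or derived in one line, together with the degenerate case $1-v < \alpha/n$ where the conditional probability is simply $0$. In fact the computation above already proves the stronger bound $e^{-\alpha/2}$, and exchangeability of the $n+1$ spacings of $n$ uniform points would even give $e^{-\alpha}$; I would nevertheless state the claim with the weaker constant $e^{-\alpha/4}$ to match the statement and the somewhat loose constants already present in Claim~\ref{claim:1}.
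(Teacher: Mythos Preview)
Your proof is correct and follows essentially the same approach as the paper: condition on $y_t$, observe that the remaining $n-t$ points are i.i.d.\ uniform on $[y_t,1]$, and bound the probability that none of them land in $[y_t,y_t+\alpha/n]$ by $(1-\alpha/n)^{n/2}$. Your use of $1-x\le e^{-x}$ to reach $e^{-\alpha/2}$ is actually cleaner than the paper's route through a second-order Taylor bound on $\log(1-\alpha/n)$, and your explicit handling of the degenerate case $\alpha/n>1-v$ is a nice touch the paper glosses over.
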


We then want to bound $ | \frac{1}{y_t} - \frac{1}{y_{t+1}}|$ in the following lemma and delay the proof to Section~\ref{sec:tech_lemma:app}.
\begin{lemma}\label{lem:main}
Fix any $0 < \beta \leq 1/2$, $1 \leq t \leq n/2$, and $\alpha$ such that $4 < \alpha < \beta t/2$. Then we have the following two bounds:

\begin{enumerate}
    \item $\Pr[ | \frac{1}{y_t} - \frac{1}{y_{t+1}}| < \frac{\alpha}{\beta^2} \frac{n}{t^2} ] \geq 1-\beta - \exp( - \alpha / 4 ) $.
    \item $ \Pr[ | \frac{1}{y_t} - \frac{1}{y_{t+1}}| < 4 \alpha \frac{n}{t^2} ] \geq 1- \exp( - t / 6 ) - \exp( - \alpha / 4 ) $.
\end{enumerate}
\end{lemma}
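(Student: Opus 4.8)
The plan is to reduce Lemma~\ref{lem:main} to Claims~\ref{claim:1} and~\ref{claim:2} via one elementary identity followed by a union bound; essentially all of the probabilistic work has already been done in those two claims. Since $0 < y_t \le y_{t+1}$, I would start from
\[
\Bigl| \frac{1}{y_t} - \frac{1}{y_{t+1}} \Bigr| = \frac{y_{t+1} - y_t}{y_t\, y_{t+1}} \le \frac{y_{t+1} - y_t}{y_t^2},
\]
so it suffices to upper bound the gap $y_{t+1} - y_t$ and to lower bound $y_t$ --- and these are precisely what the two claims supply.

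First I would condition on the event $E_\alpha = \{ y_{t+1} < y_t + \alpha/n \}$. By Claim~\ref{claim:2} this has probability at least $1 - \exp(-\alpha/4)$; to invoke that claim I need $4 < \alpha < n/2$, and the hypothesis $\alpha < \beta t/2$ together with $\beta \le 1/2$ and $t \le n/2$ gives $\alpha < n/8 < n/2$, so this is fine. On $E_\alpha$ we have $y_{t+1} - y_t < \alpha/n$, hence $|1/y_t - 1/y_{t+1}| < (\alpha/n)/y_t^2$.

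For Part~1 I would intersect $E_\alpha$ with $F_\beta = \{ y_t > \beta t/n \}$, which has probability at least $1-\beta$ by Claim~\ref{claim:1}(1) applied with $\delta = \beta$ (allowed since $\beta \le 1/2$). On $E_\alpha \cap F_\beta$ the bound becomes $(\alpha/n)/(\beta t/n)^2 = \frac{\alpha}{\beta^2}\cdot\frac{n}{t^2}$, and a union bound yields $\Pr[E_\alpha \cap F_\beta] \ge 1 - \beta - \exp(-\alpha/4)$. For Part~2 I would instead intersect $E_\alpha$ with $F' = \{ y_t > t/(2n) \}$, which has probability at least $1 - \exp(-t/6)$ by Claim~\ref{claim:1}(2); on $E_\alpha \cap F'$ the bound becomes $(\alpha/n)/(t/2n)^2 = 4\alpha\cdot\frac{n}{t^2}$, and a union bound gives probability at least $1 - \exp(-t/6) - \exp(-\alpha/4)$. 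The two parts just correspond to the two choices of lower bound on $y_t$ from Claim~\ref{claim:1}, the first being better for small $t$ and the second for large $t$.

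There is no genuine obstacle in this step: the only things to watch are the bookkeeping of parameter ranges (checking $\alpha < n/2$ from $\alpha < \beta t/2$, and $\delta = \beta \le 1/2$ for Claim~\ref{claim:1}(1)) and the fact that the identity $|1/y_t - 1/y_{t+1}| = (y_{t+1}-y_t)/(y_t y_{t+1})$ requires $y_t > 0$, which holds almost surely and in particular on both of the conditioning events above.
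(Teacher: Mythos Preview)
Your proposal is correct and follows essentially the same approach as the paper: condition on the events from Claims~\ref{claim:1} and~\ref{claim:2} and apply a union bound. Your algebra is in fact slightly cleaner --- the paper reparametrizes via $t' = n y_t$ and invokes $1/(1+x) > 1-x$ (which is where the hypothesis $\alpha < \beta t/2$ is used in the bound), whereas your direct estimate $(y_{t+1}-y_t)/(y_t y_{t+1}) \le (y_{t+1}-y_t)/y_t^2$ reaches the same conclusion without that detour.
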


\begin{remark}

Notice that the above lemma is only useful when $t$ is larger than some constant, otherwise the bounds $4 < \alpha < \delta t/2$ for  $0 < \delta < 1/2$ will not be possible. Note that if we wanted bounds on $| \frac{1}{y_t} - \frac{1}{y_{t+1}}|$ for $t$ smaller than some constant, such as $t=1,2,$ ect. then one can simply bound $| \frac{1}{y_t} - \frac{1}{y_{t+1}}| < \frac{1}{y_t}$ and apply the results of Claim \ref{claim:1}, which will be tight up to a (small) constant. 
\end{remark}

\subsection{\texorpdfstring{$(\eps,\delta)$}{}-Sensitivity}\label{sec:newsensitivity}

In what follows, let $\mathcal{X}$ be the set of databases, and say that two databases $X,X' \in \mathcal{X}$ are neighbors if $\|X-X'\|_1 \leq 1$.
\begin{definition}[$\ell$-sensitivity~\cite{dr14}]\label{def:puresensitive}
Let $f: \mathcal{X} \to \R$ be a function. We say that $f$ is $\ell$-sensitivity if for every two neighboring databases $X,X' \in \mathcal{X}$, we have $|f(X) - f(X')| \leq \ell$.
\end{definition}
\begin{theorem}[The Laplace Mechanism \cite{dmns06}]
Let $f: \mathcal{X} \to \R$ be a function that is $\ell$-sensitive. Then the algorithm $A$ that on input $X$ outputs $A(X) = f(X) +\text{Lap}(0,\ell/\eps)$ preserves $(\eps,0)$-differential privacy.
\end{theorem}

In other words, we have $\Pr[ A(X) \in S ] = (1 \pm \eps) \Pr[ A(X') \in S ]$ for any subset $S$ of outputs and neighboring data-sets $X,X' \in \mathcal{X}$. 
\vldbrevision{We now introduce a small generalization of pure sensitivity (Definition \ref{def:puresensitive}), that allows the algorithm to \textit{not} be sensitive with a very small probability $\delta$. The difference between $\ell$-sensitivity and $(\ell,\delta)$-sensitivity is precisely analogous to the difference between $\eps$-differential privacy and $(\eps,\delta)$-differential privacy, where in the latter we only require the guarantee to hold on a $1-\delta$ fraction of the probability space. Thus, to achieve $(\eps,\delta)$-differential privacy (as is our goal), one only needs the weaker $(\ell,\delta)$ sensitivity bounds.}

\begin{definition}[$(\ell,\delta)$-sensitive]\label{def:generalsensitive}
Fix a randomized algorithm $\mathcal{A}: \mathcal{X} \times R \to \R$ which takes a database $X \in \mathcal{X}$ and a random string $r \in R$, where $R = \{0,1\}^m$ and $m$ is the number of random bits used. We say that $\mathcal{A}$ is $(\ell,\delta)$-sensitive if for every $X \in \mathcal{X}$ there is a subset $R_X \subset R$ with $|R_X| > (1-\delta)|R|$ such that for any neighboring datasets $X,X' \in \mathcal{X}$ and any $r \in R_X$ we have $|\mathcal{A}(X,r) - \mathcal{A}(X',r)| \leq \ell$
\end{definition}
Notice that our algorithm for count-distinct is $(O(\alpha\frac{n}{t}),O(e^{-t} + e^{-\alpha}))$-sensitive, following from the technical lemmas proved above. We now show that this property is enough to satisfy $(\eps,\delta$)-differential privacy after using the Laplacian mechanism.

\begin{lemma}\label{lem:generaldp}
Fix a randomized algorithm $\mathcal{A}: \mathcal{X} \times R \to \R$ that is $(\ell,\delta)$-sensitive. Then consider the randomized laplace mechanism $\overline{\mathcal{A}}$ which on input $X$ outputs $\mathcal{A}(X,r) + \text{Lap}(0,\ell/\eps)$ where $r \sim R$ is uniformly random string. Then the algorithm $\overline{\mathcal{A}}$ is $(\eps,2(1+e^\eps) \delta )$-differentially private.
\end{lemma}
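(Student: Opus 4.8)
The plan is to compare, for a fixed output set $S \subseteq \R$ and two neighboring databases $X, X'$, the probabilities $\Pr[\overline{\mathcal{A}}(X) \in S]$ and $\Pr[\overline{\mathcal{A}}(X') \in S]$, where the randomness is over both the string $r \sim R$ and the independent Laplace noise $Z \sim \text{Lap}(0,\ell/\eps)$. First I would split the probability space over $r$ according to whether $r$ lands in the ``good'' set $R_X$ (guaranteed by $(\ell,\delta)$-sensitivity) or its complement, which has measure at most $\delta$. On the good event, $|\mathcal{A}(X,r) - \mathcal{A}(X',r)| \leq \ell$, so conditioned on any such fixed $r$ the two random variables $\mathcal{A}(X,r) + Z$ and $\mathcal{A}(X',r) + Z$ are Laplace distributions whose location parameters differ by at most $\ell$; the standard Laplace-mechanism pointwise density ratio bound then gives $\Pr[\mathcal{A}(X,r) + Z \in S] \leq e^\eps \Pr[\mathcal{A}(X',r) + Z \in S]$ for every such $r$. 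Integrating over $r \in R_X$ yields the $e^\eps$ multiplicative comparison for the good part.

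Next I would handle the bad event. Here I cannot use $X$'s good set, but I want an additive $\delta$-type slack. The subtlety is that the definition of $(\ell,\delta)$-sensitivity gives a good set $R_X$ for the database $X$ and a (possibly different) good set $R_{X'}$ for $X'$; the asymmetry is exactly why the final additive term is $2(1+e^\eps)\delta$ rather than $\delta$. Concretely, when bounding $\Pr[\overline{\mathcal{A}}(X)\in S]$ from above, the contribution from $r \notin R_X$ is at most $\Pr[r \notin R_X] \leq \delta$. To turn $e^\eps \Pr[\mathcal{A}(X',r)+Z \in S \mid r \in R_X]\Pr[r\in R_X]$ into a statement about $\Pr[\overline{\mathcal{A}}(X') \in S]$, I must add back the $r$ for which $r \in R_X$ fails, or $r \notin R_X$; crossing between the $R_X$-conditioning and the unconditional $\overline{\mathcal{A}}(X')$ distribution costs another $\delta$ on each side, and the $e^\eps$ factor multiplies one of those. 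Carefully bookkeeping these crossings — $r \in R_X \setminus R_{X'}$, $r \notin R_X$, and the interplay with the $e^\eps$ factor — is where the constant $2(1+e^\eps)$ comes from.

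I would organize the bound as follows:
\begin{align*}
\Pr[\overline{\mathcal{A}}(X) \in S]
&= \int_{R_X} \Pr[\mathcal{A}(X,r)+Z \in S]\,\d r + \int_{R \setminus R_X} \Pr[\mathcal{A}(X,r)+Z\in S]\,\d r \\
&\leq \int_{R_X} e^\eps \Pr[\mathcal{A}(X',r)+Z \in S]\,\d r + \delta \\
&\leq e^\eps \int_{R} \Pr[\mathcal{A}(X',r)+Z \in S]\,\d r + \delta \\
&= e^\eps \Pr[\overline{\mathcal{A}}(X') \in S] + \delta,
\end{align*}
which actually gives the cleaner bound $e^\eps \Pr[\overline{\mathcal{A}}(X')\in S] + \delta$ when the good set is defined symmetrically for the neighboring pair. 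However, if Definition~\ref{def:generalsensitive} is read literally — $R_X$ depends only on $X$, and the sensitivity bound holds for all neighbors $X'$ simultaneously on $r \in R_X$ — then to also get the lower-bound direction (swapping the roles of $X$ and $X'$) one needs $R_{X'}$, and reconciling $R_X$ with $R_{X'}$ forces the extra factor. The main obstacle, then, is not the Laplace computation (which is routine given the classical mechanism) but the measure-theoretic bookkeeping of the two possibly-different good sets and tracking how the $\delta$-slack propagates through the $e^\eps$ multiplicative factor to land exactly on $2(1+e^\eps)\delta$. I would finish by verifying the bound is symmetric in $X \leftrightarrow X'$, so that $(\eps, 2(1+e^\eps)\delta)$-differential privacy holds in both directions, completing the proof.
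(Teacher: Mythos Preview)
Your displayed computation is correct and in fact establishes the stronger bound $(\eps,\delta)$-DP, which immediately implies the weaker $(\eps,2(1+e^\eps)\delta)$ claimed in the lemma. The paper's proof takes a different, lossier route: it first intersects $R^* = R_X \cap R_{X'}$ (so $|R^*| > (1-2\delta)|R|$), proves the $e^\eps$ ratio bound conditioned on $r \in R^*$, and then pays for transitioning from $r \sim R^*$ back to $r \sim R$ on \emph{both} sides, picking up an additive $2\delta$ on the $X$ side and $e^\eps \cdot 2\delta$ on the $X'$ side, which is where $2(1+e^\eps)\delta$ comes from. Your argument avoids this loss because you only split on $R_X$ and then use nonnegativity of the integrand to extend $\int_{R_X}$ to $\int_R$ for free on the $X'$ side.

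Your hedging at the end is unwarranted. You write that ``reconciling $R_X$ with $R_{X'}$ forces the extra factor,'' but it does not: differential privacy requires the one-sided inequality for \emph{every ordered pair} of neighbors, and since the neighbor relation is symmetric, the reverse inequality for $(X,X')$ is just your same argument applied to the pair $(X',X)$ using $R_{X'}$. No intersection is needed. So trust your displayed chain; it already proves the lemma (with room to spare), and the discussion afterward about the ``main obstacle'' of bookkeeping two good sets is describing a detour the paper takes, not a necessity of the proof.
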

The proof is delayed to Section~\ref{sec:newsensitivity:app}.

\subsection{Analysis of Distinct Count}

In this section, we thoroughly analyze the properties of Distinct Count~\cite{bjkst02}. We first describe the algorithm in Algorithm~\ref{alg:count}. Then we prove its $(\ell, \delta)$-sensitivity in Lemma~\ref{lem:dc-sensitivity} and a tighter $(\eps, \delta)$-approximation result in Lemma~\ref{lem:approximation} (compared with the approximation result in~\cite{bjkst02}).

\begin{algorithm}[!ht]\caption{Distinct Count~\cite{bjkst02}}\label{alg:count}
\begin{algorithmic}[1]
\Procedure{\textsc{DistinctCount}}{$I, t$} \Comment{Lemma~\ref{lem:dc-sensitivity}}
    \State $d \gets \emptyset$ \Comment{$d$ is a priority-queue of size $t$}
    \For{$x_i \in I$}
        \State $y \gets h(x_i) $ \Comment{$h$: $[m] \rightarrow [0,1]$, is a PRF}
        \If{$|d| < t$} 
        \State $d.\code{push}(y)$
        \ElsIf{$y < d.\code{top}() \; \land \; y \notin d$} 
        \State $d.\code{pop}()$
        \State $d.\code{push}(y)$
        \EndIf
    \EndFor
    \State $v \gets d.\code{top}()$
    \State \Return $t/v$
\EndProcedure
\end{algorithmic}
\end{algorithm}

\noindent\textbf{Sensitivity of distinct count.} By analyzing the Distinct Count Algorithm~\ref{alg:count}, we show that it is $(20\log(4/\delta) \frac{n}{t}, \delta)$-sensitive in Lemma~\ref{lem:dc-sensitivity}. We will use its sensitivity to design our differential private distinct count algorithm~\ref{alg:dpcount}.

\begin{lemma}[Sensitivity of DistinctCount]\label{lem:dc-sensitivity}
Assume $r \in R$ is the source of randomness 
of the PRF in DistinctCount (Algorithm~\ref{alg:count}), 
where $R \in \{0, 1\}^m$, $n$ is the number of 
distinct element of the input,
for any $16 < t < n/2$, DistinctCount is $(20\log(4/\delta) \frac{n}{t}, \delta)$-sensitive.
\end{lemma}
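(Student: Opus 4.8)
The plan is to reduce the sensitivity question to the movement of a single order statistic of $n$ i.i.d.\ uniform samples, and then to quote the concentration bounds of Lemma~\ref{lem:main}. Fix the PRF seed $r \in R$. Since the PRF is a deterministic function of $r$, the output of \textsc{DistinctCount} on $X$ depends only on the set $S$ of distinct elements of $X$: if $y_1 \le y_2 \le \cdots \le y_n$ are the sorted hash values $\{h(x): x \in S\}$ with $n = |S|$, then $\code{DistinctCount}(X, r) = t/y_t$. Treating $h$ as a truly random function (the standard PRF-to-random-function argument shows the statistical difference is negligible and can be absorbed into $\delta$), the values $y_1, \dots, y_n$ are distributed as the order statistics of $n$ i.i.d.\ $\mathrm{Uniform}[0,1]$ variables, so Claims~\ref{claim:1} and~\ref{claim:2} and Lemma~\ref{lem:main} apply to them. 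Importantly, for a fixed $r$ these order statistics are a function of $X$ alone, which is exactly what we need to produce the set $R_X$ demanded by Definition~\ref{def:generalsensitive}.

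Next I would establish the combinatorial core: if $X'$ is a neighbor of $X$, then the multiset of distinct-element hash values of $X'$ differs from that of $X$ by at most one deletion and at most one insertion, so the $t$-th smallest hash value moves by at most one index; concretely $y_t(X') \in [\,y_{t-1}(X),\, y_{t+1}(X)\,]$ (both endpoints exist since $16 < t < n/2$). This follows from a short case analysis: deleting $y_j$ with $j \le t$ raises the $t$-th smallest value to $y_{t+1}$; inserting a hash below $y_t$ lowers it, but never below $y_{t-1}$; and doing both keeps it in the same interval. Since $\code{DistinctCount}(X,r) = t/y_t(X)$ also lies in $[\,t/y_{t+1}(X),\, t/y_{t-1}(X)\,]$, and $\code{DistinctCount}(X',r) = t/y_t(X')$, comparing the two according to whether $y_t(X')$ is above or below $y_t(X)$ gives
\[
 |\code{DistinctCount}(X,r) - \code{DistinctCount}(X',r)| \;\le\; \max\Big\{\, t\big(\tfrac{1}{y_{t-1}} - \tfrac{1}{y_t}\big),\ \ t\big(\tfrac{1}{y_t} - \tfrac{1}{y_{t+1}}\big) \Big\}.
\]

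Then I would bound each of the two terms with Lemma~\ref{lem:main}(2). Applied at index $t$ it yields $t\big(1/y_t - 1/y_{t+1}\big) < 4\alpha\, n/t$ outside a failure set of probability at most $\exp(-t/6) + \exp(-\alpha/4)$; applied at index $t-1$ it yields $t\big(1/y_{t-1} - 1/y_t\big) < 4\alpha\, n\, t/(t-1)^2 < 4.6\,\alpha\, n/t$ (using $t > 16$) outside a failure set of probability at most $\exp(-(t-1)/6) + \exp(-\alpha/4)$. Choosing $\alpha = \Theta(\log(1/\delta))$, the union of these failure events has probability at most $\delta$ and the displayed bound becomes at most $20\log(4/\delta)\, n/t$, provided $t$ is large enough relative to $\log(1/\delta)$ (which is what $t > 16$ encodes in the small-constant-$\delta$ regime of interest; here $n$ is the number of distinct elements of $X$, and a neighbor has $n\pm O(1)$ distinct elements, which only affects constants). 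Letting $R_X$ be the complement of this failure event — a set whose membership depends only on $X$ and $r$, and which has size $\ge (1-\delta)|R|$ — we get $|\mathcal{A}(X,r)-\mathcal{A}(X',r)| \le 20\log(4/\delta)\,n/t$ for every neighbor $X'$ and every $r \in R_X$, which is precisely the claimed $(20\log(4/\delta)\tfrac{n}{t},\delta)$-sensitivity.

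The step I expect to be the main obstacle is the order-statistic bookkeeping in the second paragraph, together with the requirement that the ``good'' event be measurable with respect to $X$ alone (so that it is a legitimate $R_X$ in the sense of Definition~\ref{def:generalsensitive}, uniformly over all neighbors $X'$). A secondary point is pinning down the constant: one must tune $\alpha$ and delimit the range of $t$ versus $\log(1/\delta)$ for which the constant $20$ is valid. The probabilistic estimates themselves are immediate once Lemma~\ref{lem:main} is in hand.
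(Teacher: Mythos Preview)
Your proposal is correct and follows essentially the same route as the paper: reduce to $|F(X,r)-F(X',r)| \le \max\{\,t|1/y_{t-1}-1/y_t|,\ t|1/y_t-1/y_{t+1}|\,\}$, apply Lemma~\ref{lem:main}(2) at indices $t$ and $t-1$ (using $t>16$ to absorb the $(t-1)^2$ factor), set $\alpha = 4\log(4/\delta)$, and union bound. If anything, you are more careful than the paper in spelling out why $y_t(X')\in[y_{t-1}(X),y_{t+1}(X)]$ and why the good event depends only on $X$, both of which the paper leaves implicit.
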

The proof is delayed to Section~\ref{sec:analysis_distinct_count:app}.

\noindent\textbf{Lemma for approximation guarantees.} Then we show the approximation guarantees for the Distinct Count with high probability in Lemma~\ref{lem:approximation}.

\begin{lemma}\label{lem:approximation}
Let $x_1,x_2,\dots,x_n \sim [0,1]$ be uniform random variables, and let $y_1,y_2,\dots,y_n$ be their order statistics; namely, $y_i$ is the $i$-th smallest value in $\{x_j\}_{j=1}^n$. Fix $\eta \in (0,1/2),\delta \in (0, 1/2)$. Then if $t > 3(1+\eta) \eta^{-2} \log(2/\delta)$, with probability $1-\delta$ we have 
\begin{align*}
    (1-\eta) \cdot n \leq \frac{t}{y_t} \leq (1+\eta) \cdot n .
\end{align*}     
\end{lemma}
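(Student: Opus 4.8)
The plan is to convert the two-sided statement about $t/y_t$ into a two-sided statement about the order statistic $y_t$ itself, and then control $y_t$ through the binomial count of samples lying below a threshold. Note first that $(1-\eta)\cdot n \leq t/y_t \leq (1+\eta)\cdot n$ holds if and only if $\frac{t}{(1+\eta)n} \leq y_t \leq \frac{t}{(1-\eta)n}$. For any threshold $\tau \in (0,1)$, the count $N_\tau := |\{ i \in [n] : x_i < \tau \}|$ is distributed as $\mathrm{Binomial}(n,\tau)$ (the $x_i$ are continuous, so ties occur with probability zero), and by the definition of order statistics $y_t < \tau \iff N_\tau \geq t$ and $y_t > \tau \iff N_\tau < t$. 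So it suffices to bound two binomial tail events and apply a union bound.

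First I would establish the lower bound on $t/y_t$. Set $\tau_- = \frac{t}{(1-\eta)n}$, so that $\mu_- := \mathbb{E}[N_{\tau_-}] = n\tau_- = \frac{t}{1-\eta}$ and hence $t = (1-\eta)\mu_-$. The bad event $\{y_t > \tau_-\}$ is exactly $\{N_{\tau_-} < (1-\eta)\mu_-\}$, so by the multiplicative Chernoff lower-tail bound (valid for all $\eta \in (0,1)$) it has probability at most $\exp(-\eta^2 \mu_-/2) = \exp\bigl(-\tfrac{\eta^2 t}{2(1-\eta)}\bigr) \leq \exp(-\eta^2 t/2)$. Next, for the upper bound on $t/y_t$, set $\tau_+ = \frac{t}{(1+\eta)n}$, so that $\mu_+ := \mathbb{E}[N_{\tau_+}] = \frac{t}{1+\eta}$ and $t = (1+\eta)\mu_+$. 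The bad event $\{y_t < \tau_+\}$ is exactly $\{N_{\tau_+} \geq (1+\eta)\mu_+\}$, so by the multiplicative Chernoff upper-tail bound (valid for $\eta \in (0,1]$, and here $\eta < 1/2$) it has probability at most $\exp(-\eta^2 \mu_+/3) = \exp\bigl(-\tfrac{\eta^2 t}{3(1+\eta)}\bigr)$.

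To finish, I would substitute the hypothesis $t > 3(1+\eta)\eta^{-2}\log(2/\delta)$. This immediately gives $\exp\bigl(-\tfrac{\eta^2 t}{3(1+\eta)}\bigr) < \delta/2$ for the second bad event. For the first, since $3(1+\eta) \geq 2$ on the range $\eta \in (0,1/2)$ we get $\frac{\eta^2 t}{2} > \frac{3(1+\eta)}{2}\log(2/\delta) \geq \log(2/\delta)$, hence $\exp(-\eta^2 t/2) < \delta/2$ as well. A union bound over the two bad events then shows that with probability at least $1-\delta$ we simultaneously have $\frac{t}{(1+\eta)n} \leq y_t \leq \frac{t}{(1-\eta)n}$, i.e. $(1-\eta)n \leq t/y_t \leq (1+\eta)n$, which is the claim.

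I do not expect any serious obstacle: the argument is a standard order-statistic/Chernoff computation. The only points requiring care are (i) checking that the particular multiplicative Chernoff forms invoked are valid in the regime $0 < \eta < 1/2$, which they are, and (ii) tracking the asymmetry between the denominators $2(1-\eta)$ and $3(1+\eta)$ so that the single assumed lower bound on $t$ suppresses both tails — which works precisely because $3(1+\eta) \geq \max\{2,\, 2(1-\eta)\}$ for $\eta$ in this range.
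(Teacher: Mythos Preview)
Your proof is correct and follows essentially the same approach as the paper: both reduce the two-sided bound on $t/y_t$ to two binomial tail events for the counts of samples below thresholds $t/((1\pm\eta)n)$, apply the multiplicative Chernoff upper and lower tail bounds to get exponents $-\eta^2 t/(3(1+\eta))$ and $-\eta^2 t/2$ respectively, and then union bound. Your write-up is in fact slightly more careful than the paper's in tracking the $2(1-\eta)$ denominator before relaxing it, but the argument is otherwise identical.
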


The proof is delayed to Section~\ref{sec:analysis_distinct_count:app}.

\subsection{Differentially Private Distinct Count}

\begin{algorithm}[t]\caption{\textsc{DPDistinctCount}: Differentially Private Distinct Count}\label{alg:dpcount}
\begin{algorithmic}[1]
\Procedure{\textsc{DPDistinctCount}}{$I, \eps, \eta, \delta$} \Comment{Theorem~\ref{thm:main}}
    \State $\textsc{pQueue} \gets \emptyset$ \Comment{$\textsc{pQueue}$ is a priority-queue of size $t$}     \Comment{ $t \geq \max\big(3(1+ {\eta}/{4})({\eta}/{4})^{-2}\log({6}/{\delta}),\; 20 \eps^{-1} ({\eta}/{4})^{-1} \cdot \log({24(1+e^{-\eps})}/{\delta}) \cdot \log({3}/{\delta}) )$}

    \For{$x_i \in I$}
        \State $y \gets h(x_i) $ \Comment{$h$: $[m] \rightarrow [0,1]$, is a PRF}
        \If{$|\textsc{pQueue}| < t$} 
        \State $\textsc{pQueue}.\code{push}(y)$
        \ElsIf{$y < \textsc{pQueue}.\code{top}() \; \land y \notin \textsc{pQueue}$} 
        \State $\textsc{pQueue}.\code{pop}()$
        \State $\textsc{pQueue}.\code{push}(y)$
        \EndIf
    \EndFor
    \State $v \gets \textsc{pQueue}.\code{top}()$
    \State $\vldbrevision{\tilde{G}} \leftarrow (1+\frac{3}{4} \eta ) \frac{t}{v} + \text{Lap}(20 \eps^{-1}\frac{n}{t}\log(24(1+e^{-\eps})/\delta))$
    \State \Return $\vldbrevision{\tilde{G}}$
\EndProcedure
\end{algorithmic}
\end{algorithm}

We present our main result for differentially private distinct count algorithm below:
\begin{theorem}[main result]\label{thm:main}
For any $0 < \eps < 1$, $0 < \eta < 1/2$, $0 < \delta < 1/2$, there is an distinct count algorithm (Algorithm~\ref{alg:dpcount}) such that:
\begin{enumerate}
    \item The algorithm is $(\eps, \delta)$-differentially private.
    \item  With probability at least $1 - \delta$, the estimated distinct count $\tilde{A}$ satisfies: 
    \begin{align*}
        n \leq \vldbrevision{\tilde{G}} \leq (1+\eta) \cdot n,
    \end{align*}
    where $n$ is the number of distinct elements in the data stream.
\end{enumerate}
The space used by the distinct count algorithm is \begin{align*}
O\Big( ( \eta^{-2} + \eps^{-1} \eta^{-1}  \log(1/\delta)) \cdot \log(1/\delta) \cdot \log n \Big)
\end{align*}
bits.
\end{theorem}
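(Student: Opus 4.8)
The proof decomposes into three parts, matching the three claims of the theorem: the $(\eps,\delta)$-privacy guarantee, the approximation guarantee $n \le \tilde G \le (1+\eta)n$, and the space bound. The strategy is to instantiate the general machinery already developed in the excerpt — Lemma~\ref{lem:dc-sensitivity} (sensitivity of \textsc{DistinctCount}), Lemma~\ref{lem:approximation} (approximation of $t/y_t$), and Lemma~\ref{lem:generaldp} (Laplace mechanism for $(\ell,\delta)$-sensitive algorithms) — with the parameter choices baked into the comment on line~2 of Algorithm~\ref{alg:dpcount}, namely $t \ge \max\!\big(3(1+\eta/4)(\eta/4)^{-2}\log(6/\delta),\ 20\eps^{-1}(\eta/4)^{-1}\log(24(1+e^{-\eps})/\delta)\log(3/\delta)\big)$.

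\textbf{Part 1 (privacy).} First I would observe that the underlying estimator $t/v = t/y_t$ computed by the priority-queue is exactly \textsc{DistinctCount}$(I,t)$, so by Lemma~\ref{lem:dc-sensitivity} it is $(20\log(4/\delta')\tfrac{n}{t},\delta')$-sensitive for the appropriate $\delta'$. Scaling by the constant $(1+\tfrac34\eta)$ only multiplies the sensitivity by that constant (absorbed into the choice of noise scale). Then I apply Lemma~\ref{lem:generaldp}: adding $\mathrm{Lap}(\ell/\eps)$ noise with $\ell = 20\eps^{-1}\tfrac{n}{t}\log(24(1+e^{-\eps})/\delta)$ — note the algorithm writes the scale directly, so effectively $\ell/\eps$ with $\ell = 20\tfrac nt \log(\cdots)$ — yields $(\eps, 2(1+e^\eps)\delta')$-differential privacy. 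Choosing $\delta' = \delta/(2(1+e^\eps))$, or more simply tracking constants so that the stated lower bound on $t$ makes $\delta'$ small enough, gives the $(\eps,\delta)$ bound. The bookkeeping here is where the somewhat unusual constants ($24$, $1+e^{-\eps}$) come from, and getting them to line up cleanly is the first thing to be careful about.

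\textbf{Part 2 (approximation).} Here I condition on the good event of Lemma~\ref{lem:approximation} with parameter $\eta/4$ in place of $\eta$: since $t \ge 3(1+\eta/4)(\eta/4)^{-2}\log(6/\delta)$, with probability $\ge 1-\delta/3$ we have $(1-\eta/4)n \le t/y_t \le (1+\eta/4)n$. Multiplying by $(1+\tfrac34\eta)$ gives $(1+\tfrac34\eta)(1-\tfrac\eta4)n \le (1+\tfrac34\eta)t/y_t \le (1+\tfrac34\eta)(1+\tfrac\eta4)n$; the left factor is $1 + \tfrac\eta2 - \tfrac{3\eta^2}{16} \ge 1$ for $\eta \in (0,1/2)$, so the lower bound $\tilde G \ge n$ holds once the additive Laplace noise is controlled. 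For the noise, the scale is $20\eps^{-1}\tfrac nt\log(24(1+e^{-\eps})/\delta)$; since $t \ge 20\eps^{-1}(\eta/4)^{-1}\log(24(1+e^{-\eps})/\delta)\log(3/\delta)$, the scale is at most $\tfrac{\eta}{4\log(3/\delta)}\,n$, so with probability $\ge 1-\delta/3$ the magnitude of the Laplace variable is at most $\tfrac{\eta}{4}n$. Putting these together and re-examining the arithmetic: $(1+\tfrac34\eta)(1+\tfrac\eta4) + \tfrac\eta4 \le 1+\eta$ for $\eta<1/2$, and on the low side one wants the $(1+\tfrac34\eta)$ multiplier to have enough slack to absorb a downward noise deviation of $\tfrac\eta4 n$ while staying $\ge n$ — this is exactly why the algorithm multiplies by $(1+\tfrac34\eta)$ rather than $1$, and checking these two polynomial-in-$\eta$ inequalities is the technical heart of Part~2. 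A union bound over the two failure events (approximation, noise) plus the privacy-side failure gives total failure probability $\le \delta$.

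\textbf{Part 3 (space).} The only data structure is the priority queue of size $t$, each entry a hash value in $[0,1]$ discretized to $O(\log n)$ bits (enough to distinguish $n$ distinct items with high probability). So the space is $O(t\log n)$, and substituting the two terms in the lower bound for $t$ gives $O\big((\eta^{-2} + \eps^{-1}\eta^{-1}\log(1/\delta))\cdot\log(1/\delta)\cdot\log n\big)$ bits, matching the claim; one should note $\log(24(1+e^{-\eps})/\delta) = O(\log(1/\delta))$ for $\eps \in (0,1)$. The main obstacle overall is not any single deep step but the careful propagation of the three sources of failure and the constant-factor inequalities in $\eta$ in Part~2; the privacy argument and the space bound are essentially immediate given the earlier lemmas.
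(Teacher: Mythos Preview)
Your proposal is correct and follows essentially the same approach as the paper: invoke Lemma~\ref{lem:dc-sensitivity} for the $(\ell,\delta')$-sensitivity of the base estimator, rescale $\delta'$ and apply Lemma~\ref{lem:generaldp} to obtain $(\eps,\delta)$-privacy; invoke Lemma~\ref{lem:approximation} with accuracy parameter $\eta/4$, multiply by $(1+\tfrac34\eta)$, bound the Laplace tail, and union-bound the three $\delta/3$ failure events for the approximation guarantee; then read off the space as $O(t\log n)$. The paper's proof executes exactly these steps in the same order with the same constants, so your plan matches it essentially line for line.
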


\vldbrevision{The proof is delayed to Section~\ref{sec:dpdc:app}.}

\begin{claim}\label{cla:t-max}
For any $0 < \delta \leq 10^{-3}$, $0.1  \leq \eta < 1$ 
and $0 < \eps < 1$, then we have 
\begin{align*}
    & ~ 3( 1 + \eta / 4 ) \cdot ( \eta / 4 )^{-2} \cdot \log(6/\delta) \\
     & ~ \leq 25  \eps^{-1} ( \eta/ 4 )^{-1} \cdot \log(24(1+e^{-\eps})/\delta) \cdot \log(3/\delta) .
\end{align*}     
\end{claim}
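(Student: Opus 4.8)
The plan is to show that in the stated parameter regime the first term inside the $\max$ defining $t$ in Algorithm~\ref{alg:dpcount} is dominated by the second term, via a short chain of elementary estimates. First I would cancel the common factor $(\eta/4)^{-1}$ and set $a := \eta/4$, which lies in $[0.025,\,0.25)$ since $0.1 \le \eta < 1$; this reduces the claim to
\begin{align*}
3(1+a)\,a^{-1}\,\log(6/\delta) \;\le\; 25\,\eps^{-1}\,\log\!\big(24(1+e^{-\eps})/\delta\big)\,\log(3/\delta).
\end{align*}

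Next I would discard the factors on the right-hand side that are harmlessly large: since $0 < \eps < 1$ we have $\eps^{-1} > 1$, and since $24(1+e^{-\eps}) > 24 > 6$ we have $\log(24(1+e^{-\eps})/\delta) \ge \log(6/\delta)$, so it suffices to prove $3(1+a)\,a^{-1} \le 25\,\log(3/\delta)$. The map $a \mapsto (1+a)/a = 1 + 1/a$ is decreasing, so over $a \in [0.025,\,0.25)$ it attains its maximum at $a = 0.025$, giving $3(1+a)\,a^{-1} \le 3 \cdot 41 = 123$. On the other hand, $\delta \le 10^{-3}$ forces $\log(3/\delta) \ge \log 3000 \ge 8 > 123/25$, hence $25\,\log(3/\delta) \ge 123$, which closes the chain.

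I do not anticipate a genuine obstacle; every step is a monotonicity remark or a substitution of the extreme value of a parameter. The only things to watch are the directions of the inequalities — in particular that $\eps^{-1}$ and $\log(3/\delta)$ are both large (not small) in the regime of interest — and the numeric check that $123/25 = 4.92$ sits comfortably below $\log 3000$, which holds with a wide margin for the natural logarithm ($\log 3000 \approx 8.0$) and a fortiori for any base at least $2$.
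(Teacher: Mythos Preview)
Your proposal is correct and follows essentially the same approach as the paper: both proofs cancel the common $(\eta/4)^{-1}$, bound the remaining $\eta$-dependent factor numerically (you get $3(1+a)/a \le 123$, the paper gets the equivalent $6(4/\eta+1) \le 246$), use $\eps^{-1} \ge 1$, and then absorb the constant via the smallness of $\delta$. The only tactical difference is which logarithmic factor does the work: the paper first uses $\log(6/\delta) \le 2\log(3/\delta)$ and then $\log(24/\delta) \ge 10$ to dominate $246 \le 25\cdot 10$, whereas you compare $\log(6/\delta) \le \log(24(1+e^{-\eps})/\delta)$ directly and use $\log(3/\delta) \ge \log 3000$ to dominate $123 \le 25\cdot 8$; both routes are equally elementary and your version is slightly cleaner.
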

\vldbrevision{The proof is delayed to Section~\ref{sec:dpdc:app}.}

\begin{lemma}\label{lem:approxdp}
For any $0 < \eps < 1$, $0 < \delta \leq 10^{-3}$, there is an distinct count algorithm (Algorithm~\ref{alg:dpcount}) such that:
\begin{enumerate}
    \item The algorithm is $(\eps, \delta)$-differentially private.
    \item  With probability at least $1 - \delta$, the estimated distinct count $\vldbrevision{\tilde{G}}$ satisfies: 
    \begin{align*}
        n \leq \vldbrevision{\tilde{G}} \leq 1.1 n,
    \end{align*}
    where $n$ is the number of distinct elements in the data stream.
\end{enumerate}
The space used by the distinct count algorithm is \begin{align*}
O\Big( ( 100 + 10 \eps^{-1} \log(1/\delta)) \cdot \log(1/\delta) \cdot \log n \Big)
\end{align*}
bits.
\end{lemma}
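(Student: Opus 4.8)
The plan is to obtain Lemma~\ref{lem:approxdp} as a direct specialization of Theorem~\ref{thm:main} to the parameter regime $\eta = 0.1$, $\delta \le 10^{-3}$. First I would verify the hypotheses of Theorem~\ref{thm:main}: we are given $0 < \eps < 1$; taking $\eta = 0.1$ satisfies $0 < \eta < 1/2$; and $\delta \le 10^{-3}$ certainly satisfies $0 < \delta < 1/2$. Hence Theorem~\ref{thm:main} applies and immediately gives parts (1) and (2): Algorithm~\ref{alg:dpcount} is $(\eps,\delta)$-differentially private, and with probability at least $1-\delta$ its output $\tilde{G}$ satisfies $n \le \tilde{G} \le (1+\eta)\,n = 1.1\,n$.

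For the space bound, I would substitute $\eta = 0.1$ into the general estimate $O\big((\eta^{-2} + \eps^{-1}\eta^{-1}\log(1/\delta))\log(1/\delta)\log n\big)$ from Theorem~\ref{thm:main}, using $\eta^{-2} = 100$ and $\eta^{-1} = 10$, which yields exactly the claimed $O\big((100 + 10\eps^{-1}\log(1/\delta))\log(1/\delta)\log n\big)$ bits. To keep the bookkeeping transparent I would also invoke Claim~\ref{cla:t-max}: since $0.1 \le \eta$, $\delta \le 10^{-3}$, and $\eps < 1$, the first argument of the $\max$ defining the priority-queue size $t$ in Algorithm~\ref{alg:dpcount} is dominated up to a constant by the second, so in this regime $t = \Theta\big(\eps^{-1}\eta^{-1}\log(24(1+e^{-\eps})/\delta)\log(3/\delta)\big)$; since the algorithm stores only the $t$ smallest hashes, each using $O(\log n)$ bits, this reconfirms the space figure.

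There is no real technical obstacle here beyond careful instantiation, since all the substantive work---the $(\ell,\delta)$-sensitivity of \textsc{DistinctCount} (Lemma~\ref{lem:dc-sensitivity}), the generalized Laplace mechanism for $(\ell,\delta)$-sensitivity (Lemma~\ref{lem:generaldp}), and the sharpened approximation bound (Lemma~\ref{lem:approximation})---is already established in the proof of Theorem~\ref{thm:main}. The one point worth a line of comment is that for $\delta$ as large as $10^{-3}$ one has $\log(1/\delta)\approx 6.9$, so the constant $100$ coming from $\eta^{-2}$ is \emph{not} swallowed by the $\eps^{-1}\log(1/\delta)$ term; this is why both summands are retained in the stated bound, exactly as written in the lemma.
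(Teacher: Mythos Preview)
Your proposal is correct and follows essentially the same route as the paper: both instantiate Theorem~\ref{thm:main} with $\eta = 0.1$ and invoke Claim~\ref{cla:t-max} to collapse the $\max$ defining $t$ to its second argument in the stated parameter regime. The paper additionally works out the resulting Laplace scale factor explicitly (obtaining $0.02\,n/\log(3/\delta)$), but this is bookkeeping for the simplified algorithm rather than an extra ingredient in the proof of the lemma itself.
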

\vldbrevision{The proof is delayed to Section~\ref{sec:dpdc:app}.}

\section{Measuring Empirical Speedup}
\label{sec:eval}
We evaluate our DO operators on Big Data Benchmark~\cite{bdb}.
We compare our performance with 
ObliDB~\cite{eskandarian2017oblidb}, and Spark SQL~\cite{sparksql}. 
We run our experiments on a machine with Intel Core-i7 9700 (8 cores @ 3.00GHz, 12\,MB cache). The machine has SGX hardware and 64GB DDR4 RAM, and it runs Ubuntu 18.04 with SGX Driver version 2.6, SGX PSW version 2.9, and SGX SDK version 2.9. \vldbrevision{We set the SGX max heap size as 224 MB and EPC page swapping will be triggered during processing large tables.
}
We fill  data from the Big Data Benchmark~\cite{bdb}.
We evaluate the performance under three tiers of input table sizes: For filter operator benchmark, \textit{Rankings} table contains \vldbrevision{small (100K), medium (1M), large (10M)} rows \vldbrevision{and each \textit{Rankings} row is $308$ bytes}. For groupby operator benchmark, \textit{UserVisits} table contains \vldbrevision{small (300K), medium (3M) and large (30M) rows} \vldbrevision{ and each \textit{UserVisits} row is $529$ bytes}. For foreign key join benchmark, \textit{Rankings} and \textit{UserVisits} contain \vldbrevision{small (100K, 300K), medium (300K, 900K), large (1M, 3M) rows.}
All codebases are compiled and run under SGX prerelease and hardware mode. We do not compare our operators with those in Opaque~\cite{zheng2017opaque}. This is because Opaque's open-sourced version does not pad the result of an operator to the worse-case length, which means Opaque's open-sourced version does not satisfy the notion of full obliviousness.

\begin{figure*}[!h]
\centering
\subfloat[Selection with projection-100K ]{\includegraphics[width = 0.36 \textwidth]{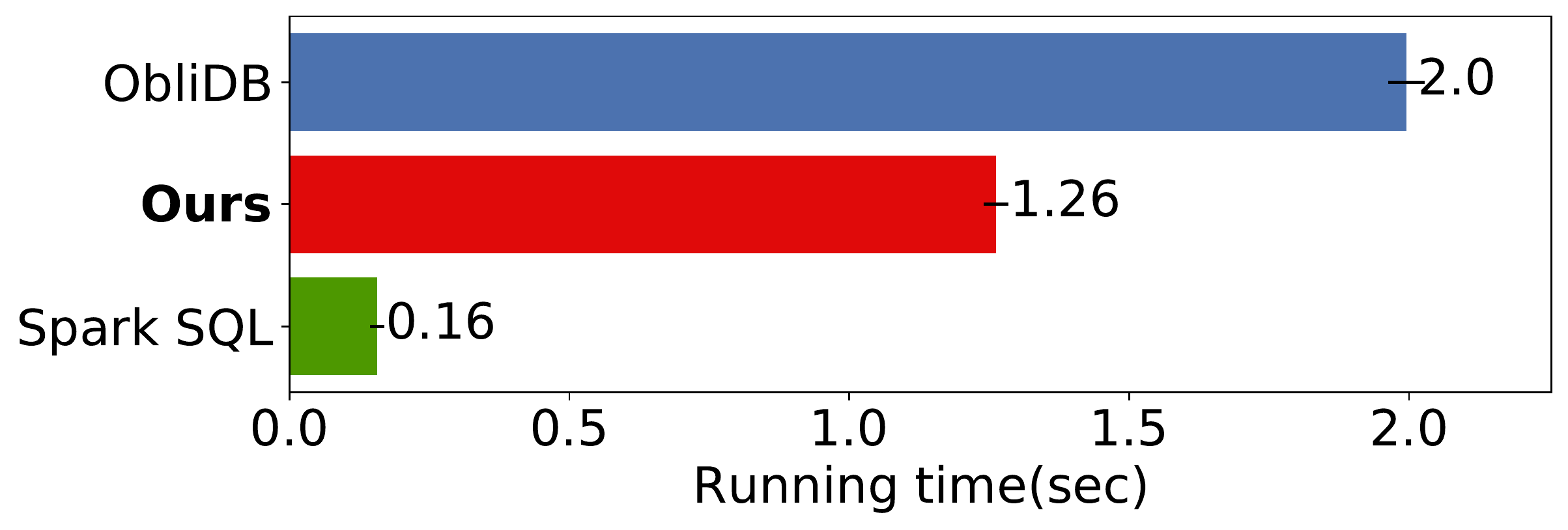}}
\subfloat[Selection with projection-1M ]{\includegraphics[width = 0.31 \textwidth]{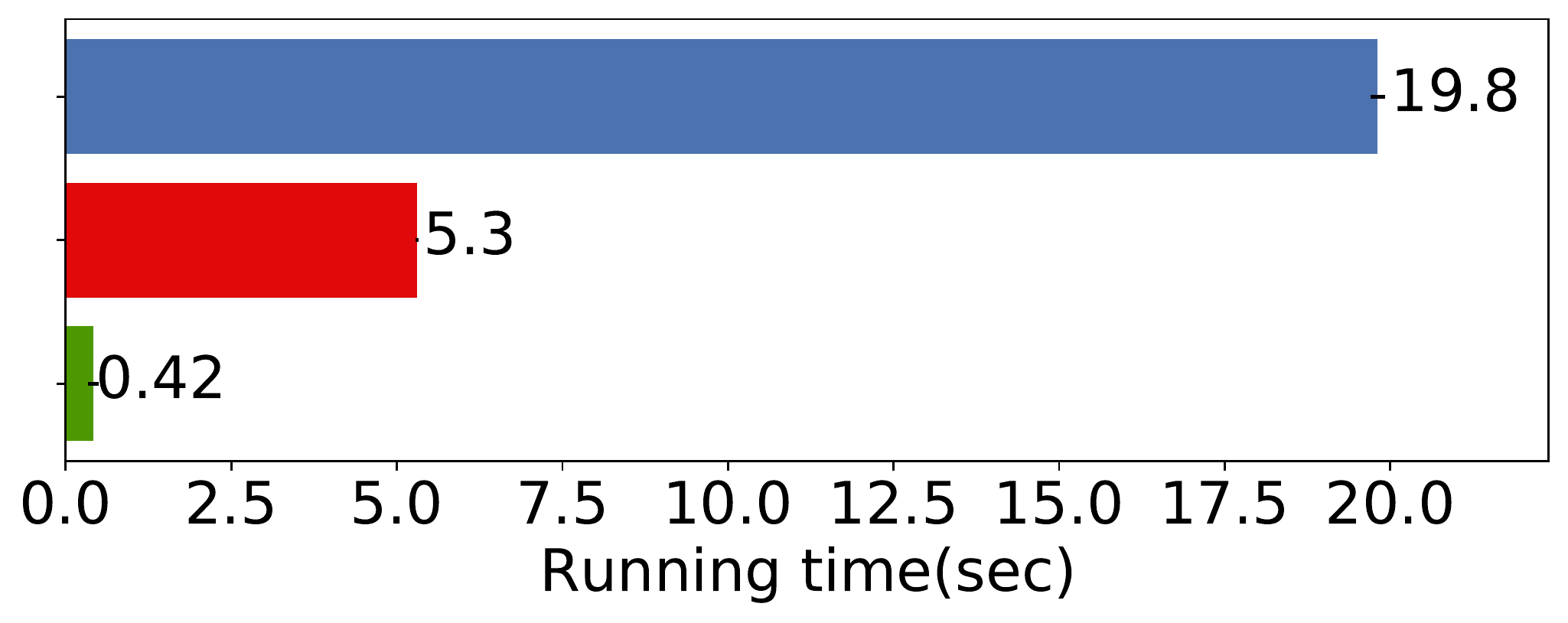}}
\subfloat[Selection with projection-10M]{\includegraphics[width = 0.31 \textwidth]{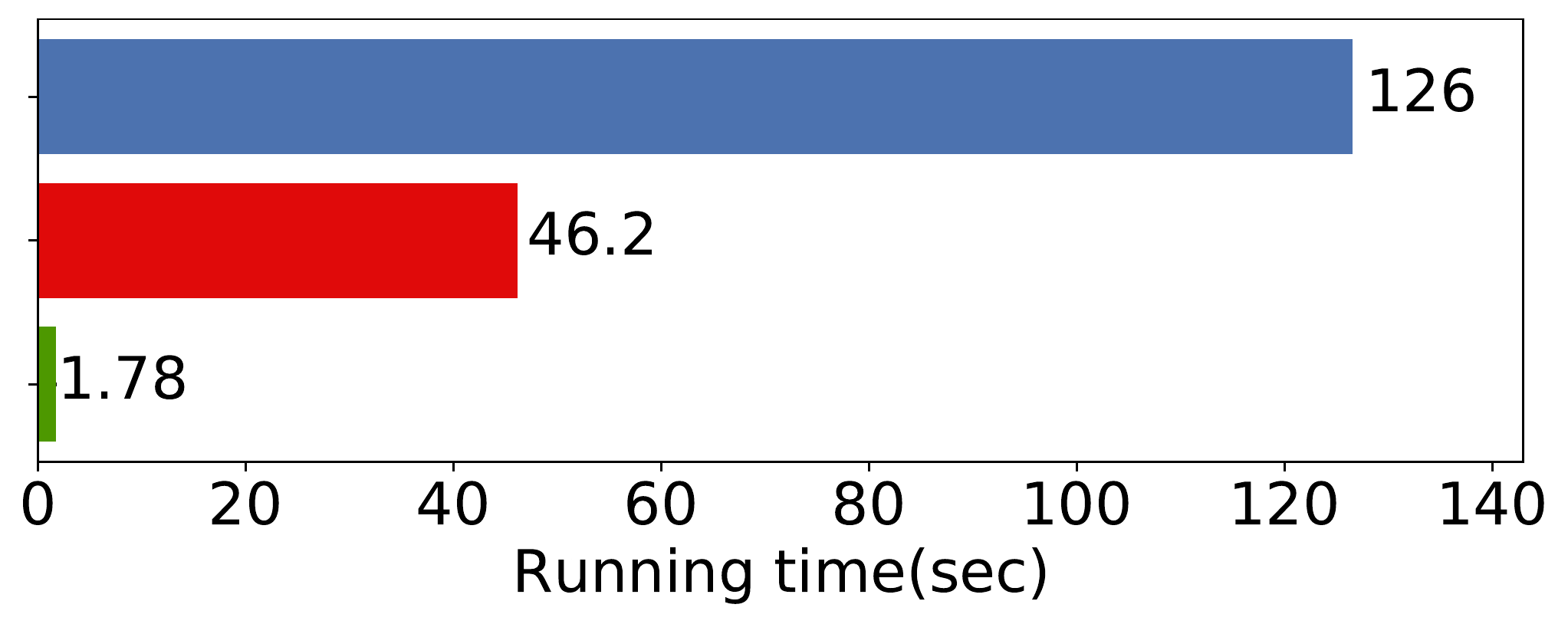}}
\caption{{ Selection with projection operator performance under different input sizes. Error bars show the standard deviations.}}
\label{fig:BDB1}

\end{figure*}
\begin{figure*}[!h]
\centering
\subfloat[Grouping with aggregation-300K ]{\includegraphics[width = 0.37\textwidth]{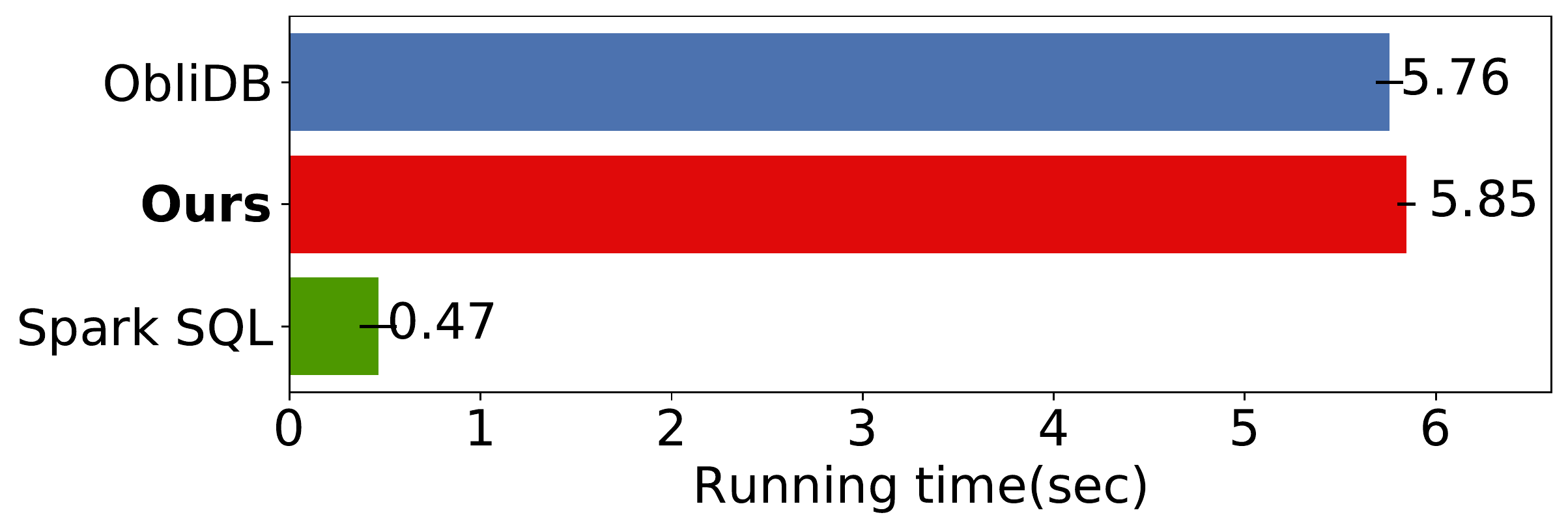}}
\subfloat[Grouping with aggregation-3M ]{\includegraphics[width = 0.31\textwidth]{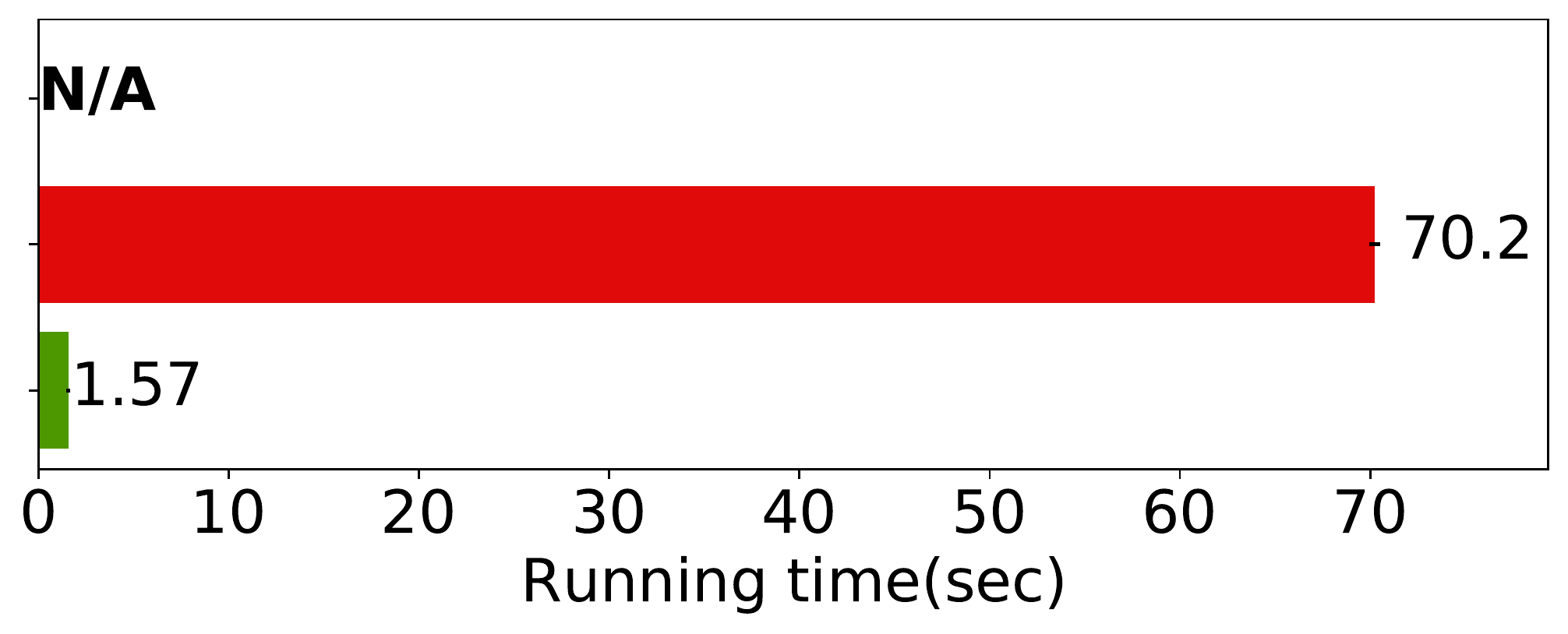}}
\subfloat[Grouping with aggregation-30M]{\includegraphics[width = 0.31\textwidth]{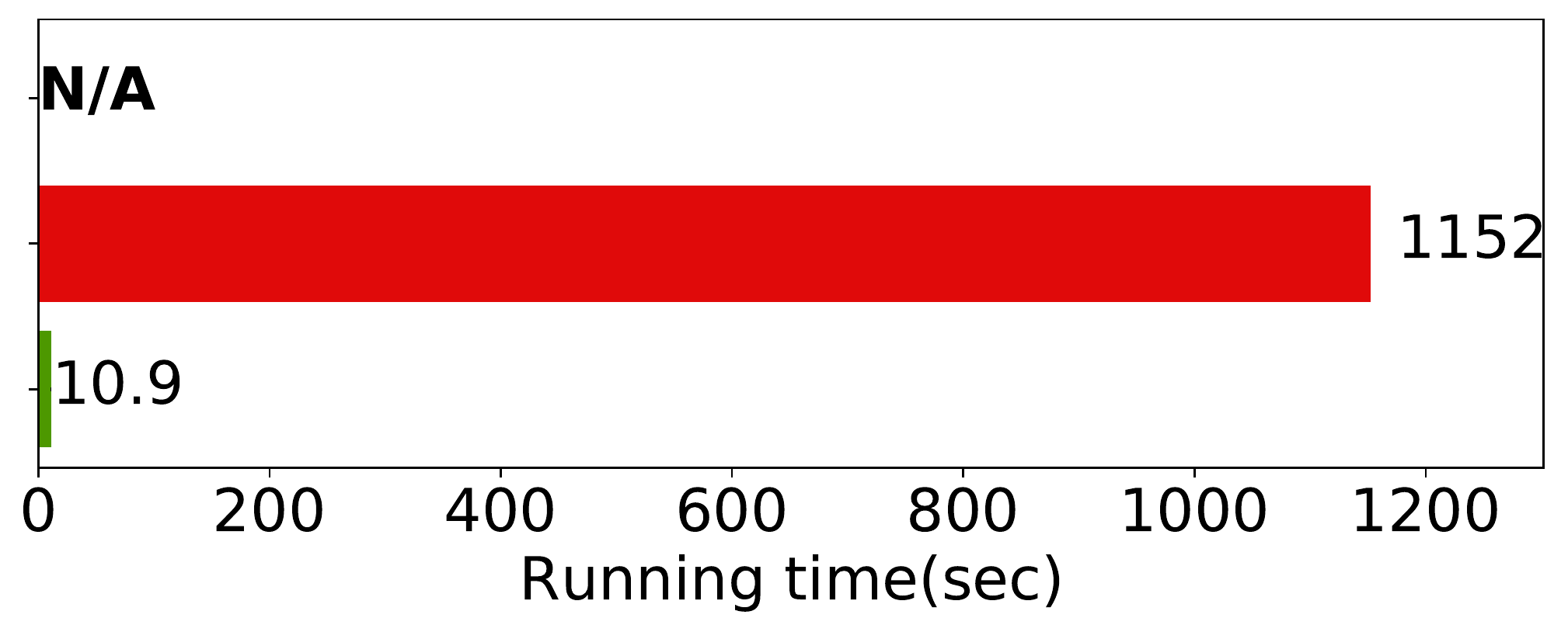}} 
\caption{{Grouping with aggregation operator performance under different input sizes. Error bars show the standard deviations.} }
\label{fig:BDB2}

\end{figure*}

\begin{figure*}[!h]
\centering
\subfloat[Foreign key join-300K]{\includegraphics[width = 0.37\textwidth]{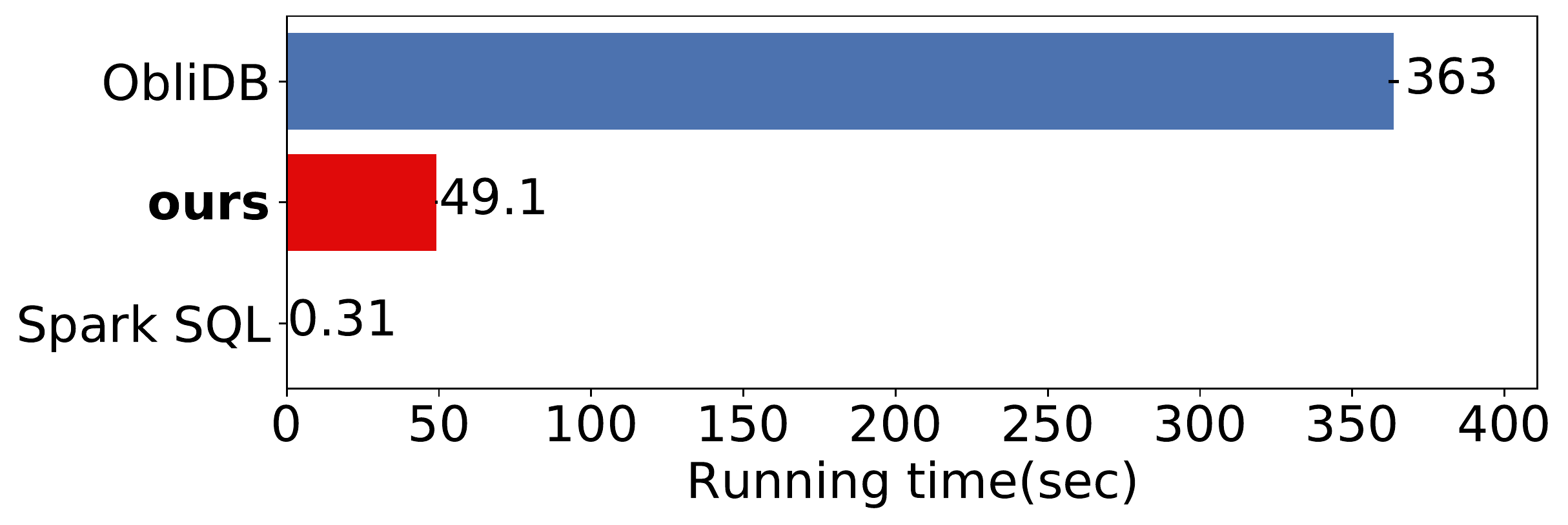}}
\subfloat[Foreign key join-900K ]{\includegraphics[width = 0.31\textwidth]{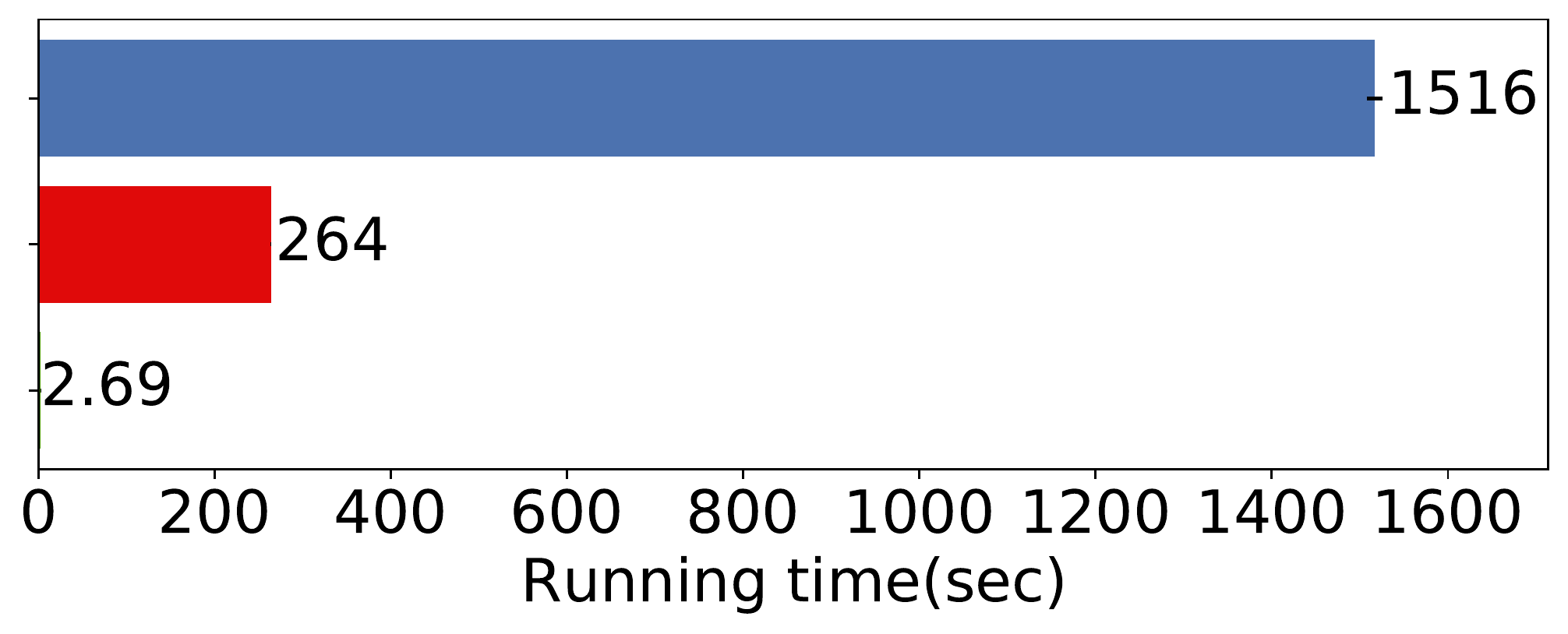}} 
\subfloat[Foreign key join-3M]{\includegraphics[width = 0.31\textwidth]{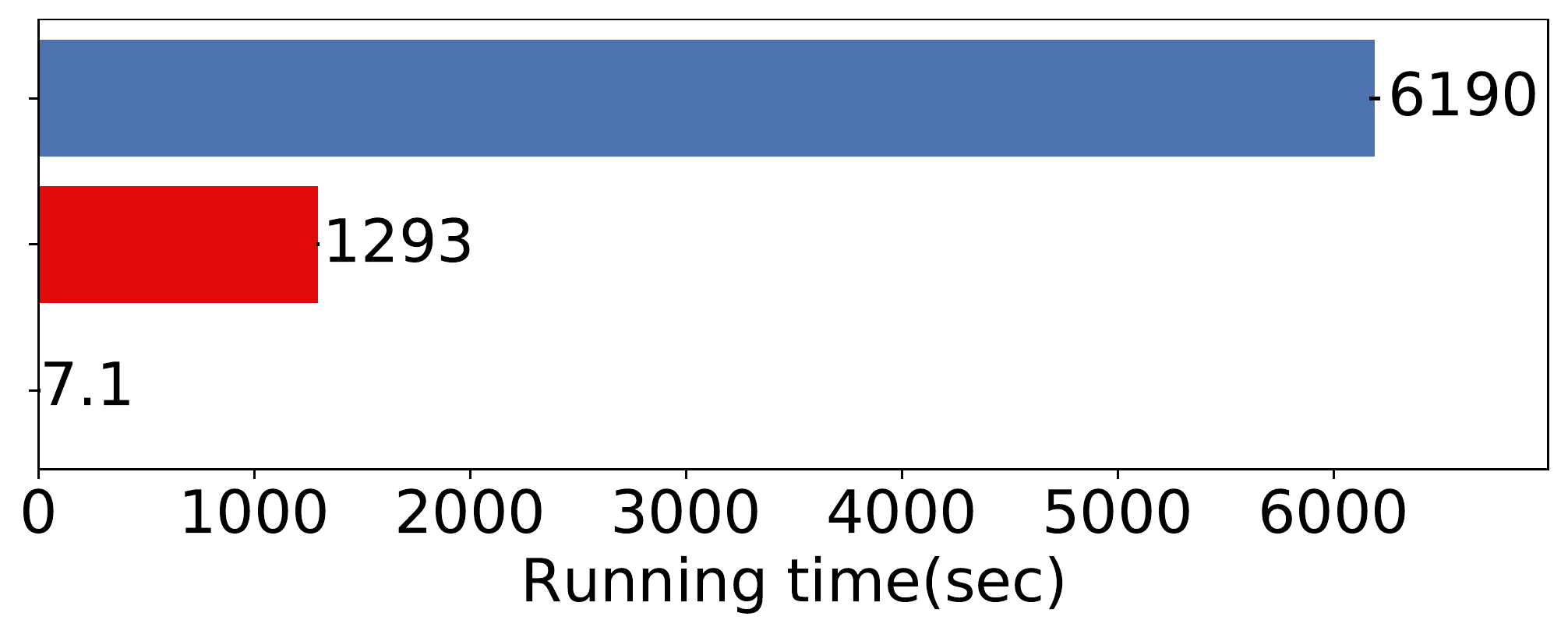}} 
\caption{{ Foreign key join operator performance under different input table sizes. Error bars show the standard deviations. }}
\label{fig:BDB3}
\end{figure*}

\begin{figure}[t]
    \centering
    \includegraphics[width = 0.66\columnwidth]{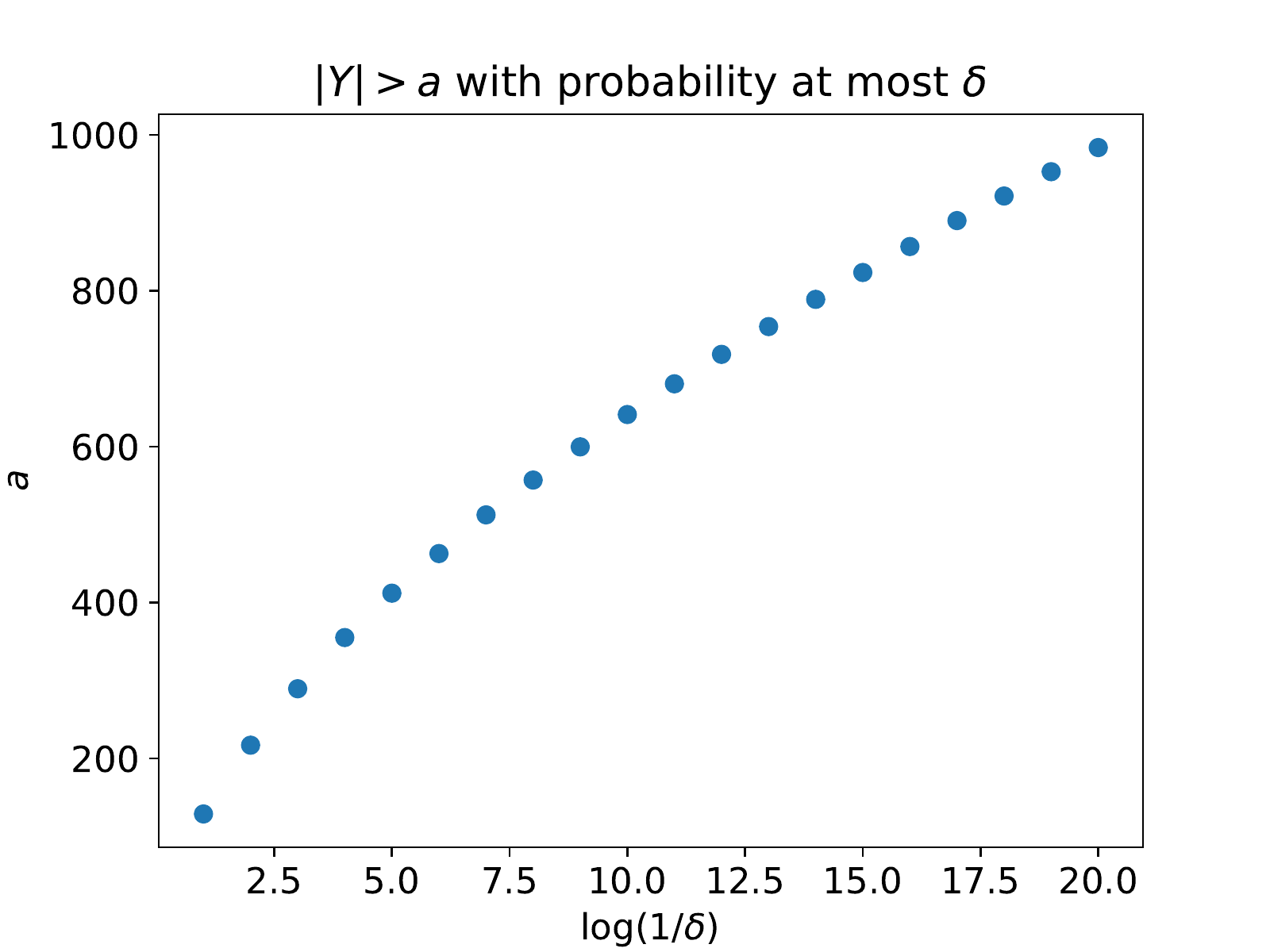}
    \caption{{ Simulated Binary Mechanism Concentration ($\eps = 1$, $N = 10^9 $, each data point uses $10^4/\delta$ trials)}}\label{fig:simulation}
\end{figure}

\noindent\textbf{Setting Privacy Parameters}
We set $\eps$ to $1$ and $\delta$ to $2^{-30}$, which is negligible small (e.g. $\delta < 1/N$, where $N$ is the size of the data). These settings follows the standard 
privacy settings in differentially private systems such as 
PINQ~\cite{pinq}, Vuvuzela~\cite{Vuvuzela}, and RAPPOR~\cite{rappor}. 
To further optimize the privacy parameters, we use numeric simulation to calculate a tighter bound of the differentially private mechanism when possible. For example, for the binary mechanism \cite{css10} that we used as a DP oracle in \autoref{alg:dofilter}, 
we can simulate its approximation error by repeating random trials of sum of Laplace noises.
\autoref{fig:simulation} shows the simulation 
result. We can observe that the sum of independently sampled Laplace noises grows 
linearly as the $\frac{1}{\delta}$ grows exponentially. We can estimate the error by assuming linear growth of $|Y|$ over $\frac{1}{\delta}$'s exponentially growth when $\delta$ is too small to simulate.

\subsection{Comparison to Prior Work}
We now evaluate our three DO operators: selection with projection, grouping with aggregation, and foreign key join. 
The Big Data Benchmark~\cite{bdb} directly contains benchmarks to evaluate selection with projection and grouping with aggregation. 
\vldbrevision{Our DO operators only insert dummy tuples to the results to ensure that our memory access pattern is differentially private and do not add noise to the query results. Hence the accuracy of returned results is not affected.}

\vspace{3mm}
\noindent\textbf{Benchmark \#1: Selection with projection:}\\
{\tt
{\color{thegreen}SELECT} pageURL, pageRank \\
{\color{thegreen}FROM} rankings \\
{\color{thegreen}WHERE} pageRank > {\color{theblue}1000}\\
}

The first benchmark performs a selection with projection on \textit{rankings} table.
We compare the performance of ours with Spark
SQL, ObliDB in \autoref{fig:BDB1}, \autoref{fig:BDB2}, and \autoref{fig:BDB3}.
Compared with non-encrypted and non-oblivious spark SQL, for moderately 
large size datasets, ours exhibits $7.8 - 26.0$x overhead.
As shown in 
\autoref{sec:breakdown}, our overhead mostly comes from encryption and decryption when moving data in and out of SGX enclave memory.
The performance gain comes from the batched read and write implemented in our system.
Compared with oblivious systems, we are $1.5 - 3.7\times$ faster than ObliDB in benchmark 1.
This performance gain comes from more efficient algorithm and the less padding size brought by the differential obliviousness. 




\vspace{3mm}
\noindent\textbf{Benchmark \#2: Grouping with aggregation:}\\
{\tt
{\color{thegreen}SELECT} SUBSTR(sourceIP, {\color{theblue}1}, {\color{theblue}8}), {\color{thegreen}SUM}(adRevenue)\\
{\color{thegreen}FROM} uservisits \\
{\color{thegreen}GROUP BY} SUBSTR(sourceIP, {\color{theblue}1}, {\color{theblue}8})\\
}
The second benchmark aggregates the sum of \textit{adRevenue} based on their \textit{sourceIP} column over \textit{UserVisits} table.
Compared with non-encrypted and non-oblivious spark SQL, for moderately 
large size datasets(300K - 30M), ours exhibits $12.4 - 105.7$x overhead. 
For grouping with aggregation over \textit{UserVisits} table of 300K rows, ours has similar performance with ObliDB. However, ObliDB's grouping operator assumes that the aggregation statistics of all the distinct groups (up to 400,000 under current SGX enclave memory capacity) can fit in enclave memory so that it can calculate the aggregation results in just one pass, but this assumption does not hold for \textit{UserVisits} table of 3 million rows and more. ObliDB fails to run grouping with aggregation over \textit{UserVisits} table of 30 million rows under our hardware settings too. As the number of distinct groups grows, ours has to process aggregation query in more passes, which is another source of overhead to achieve differentially oblivious grouping with aggregation.

BDB does not contain a benchmark that directly evaluate foreign key join. BDB has a complex benchmark that requires composing a series of database operators. Composing DO operators is beyond the scope of this paper. Here, we use a simplified benchmark to evaluate foreign key join.

\vspace{3mm}
\noindent\textbf{Benchmark \#3: Foreign key join:}\\
{\tt
{\color{thegreen}SELECT} * \\
{\color{thegreen}FROM} Rankings {\color{thegreen}AS} R, UserVisits {\color{thegreen}AS} UV\\
{\color{thegreen}WHERE} R.pageURL = UV.destURL\\
}

The third benchmark is to do foreign key join between \textit{Rankings} and \textit{UserVisits}.
Ours exhibits $98.1 - 182.1$x overhead over Spark SQL, but it is $4.8  - 7.4 \times$ faster than ObliDB in benchmark 3.
As stated before, this performance gain mainly comes from less dummy writes to achieve differential obliviousness compared to full obliviousness.
Bucket oblivious sort achieves $O((N/B) \log_{M/B} (N/B))$ number of page swaps and bitonic sort requires $O(N \log^2 N)$ page swaps if implemented naively. The practical speedup we see is $5-7\times$ partly because bitonic has a smaller constant in the big-$O$.

\subsection{Latency Breakdown}
\label{sec:breakdown}
\begin{figure}[t]
    \centering
    \includegraphics[width=0.48\textwidth,height=\textheight,keepaspectratio]{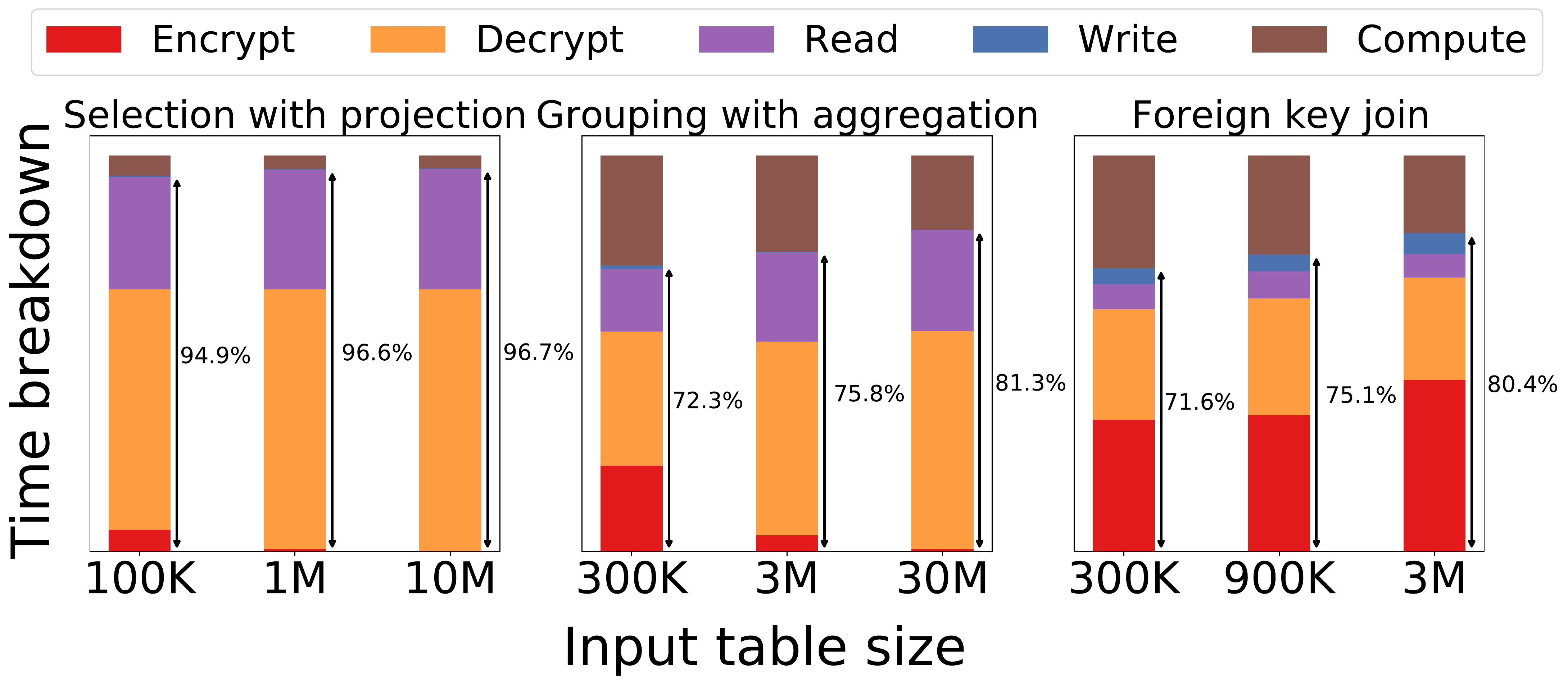}\\
    \caption{{\small DO operators performance breakdown}}
\label{fig:operator-breakdown}
\end{figure}

It is interesting to understand where are the key performance bottlenecks in our differentially oblivious operators.
We break down each of our basic operator's completion time into six categories: (1) decryption within enclave; (2) encryption within enclave; (3) reading from untrusted memory to enclave buffer; (4) writing from enclave buffer to untrusted memory; and (5) computation within enclave. \vldbrevision{We use Read Time-Stamp Counter (RDTSC) to profile the time spent on each category.}
\autoref{fig:operator-breakdown} shows the performance breakdown results. Encryption, decryption, memory copy between untrusted memory and enclave memory are the major overheads in our differentially oblivious operators. Under the largest input table size scenario, the real query computation time only accounts for 2\% of the total execution time of the filter operator. Memory copy between untrusted memory and enclave memory accounts for 30\% of the total time and encryption plus decryption take up the rest 68\%.
Because applying hash function to distinguish different groups in grouping with aggregation operator and oblivious sorting in foreign key join operator are more expensive than the simple comparison in filter operator, the computation constitutes a larger fraction in the operator execution time. 

\vldbrevision{For the encryption time portion, we find that it decreases as the input table size increases in \textsc{DoGroup}$_h$. This is because the number of groups increases sublinearly as the size of input table increases when we group by the first $8$ bytes of IP address in BDB2 benchmark. The encryption time portion increases as the input table size increases in \textsc{DoJoin}, because bucket oblivious sort is the dominant overhead and its cache complexity is $O((N/B) \log_{M/B} (N/B))$ and the overhead from compute part grows linearly to the size of primary and foreign key tables.}

\vldbrevision{Moving data between the enclave and untrusted memory incurs overhead from both SGX ECALL/OCALL and encryption and decryption within the enclave. }
These results validate that the \vldbrevision{data movement} between the trusted and untrusted components is the key bottleneck for our operators, which justify the usage of cache complexity and output size as the key theoretical performance metrics for the analysis of oblivious operators.

\begin{table}[!ht]
\caption{Time percentage spent on inserting padding tuples.}
\vldbrevision{
\begin{tabular}{|c|c|c|c|}
\hline
Dataset Size & \textsc{DoFilter} & \textsc{DoGroup}$_h$ & \textsc{DoJoin} \\ \hline
Small        & 51.0\%    & 28.5\%     & 26.3\%  \\ \hline
Medium       & 12.5\%    & 2.9\%     & 40.1\%  \\ \hline
Large        & 1.7\%    & 0.4\%     & 47.5\% \\ \hline
\end{tabular}

\label{tab:padding_time}
}
\end{table}

\vldbrevision{
\textbf{Cost of tuple padding procedure.} We further profile how much percentage of overall execution time is spent for inserting tuple padding. As shown in Table~\ref{tab:padding_time}, for \textsc{DoFilter} and \textsc{DoGroup}$_h$ operators, the time portion spent on tuple padding decreases when the size of input table grows, and the time portion is smaller than $2\%$ when the input table size is large. The reason is that \textsc{DoFilter} and \textsc{DoGroup}$_h$ do not need to do worst-case padding like full oblivious guarantee and the ratio of number of padding tuples and size of input decreases when the size of input table becomes larger. For \textsc{DoJoin}, the percentage of tuple padding procedure grows when when the size of input table increases, because in the bucket oblivious sort at the end of each iteration of oblivious random bucket assignment, we need to pad each bucket with dummy tuples until full. The number of padding tuples needed in oblivious random bucket assignment grows faster than linear.
}

\section{Related Work}\label{sec:related_work}

\paragraph{Encrypted Databases.}
There are a series of encrypted database systems uses standard or customized encryption schemes. For example,
CryptDB~\cite{popa2011cryptdb} uses a multi-layer encryption scheme 
to allow user to set different security levels for different columns. 
Arx~\cite{arx} uses strong encryption and applies special data structures to enable search. 
Other systems~\cite{Bost16ccs, BostMO17CCS, PPDG16SIGMOD, peng2020falcondb } build on searchable encryption techniques.
All these systems only encrypt data, not access patterns. As a result, they are all vulnerable to access pattern attacks.
Recently, there are 
many new database systems based on hardware enclaves, such as 
TrustedDB~\cite{bajaj2013trusteddb}, Cipherbase~\cite{cipherbase},
EnclaveDB~\cite{priebe2018enclavedb}, VC3~\cite{schuster2015vc3}, VeriDB~\cite{zhou2021veridb} and StealthDB\cite{gribov2017stealthdb}. These systems all leave data outside enclaves encrypted. However, these systems either only support data that can fit into very limited enclave memory (128MB in case of Intel SGX), such as EnclaveDB, or vulnerable to memory access pattern attacks.

\paragraph{Oblivious Databases.}
To address the vulnerability to access pattern attacks, 
recent data analytic systems like Opaque~\cite{zheng2017opaque} and 
ObliDB~\cite{eskandarian2017oblidb} proposed and implemented a few database query processing algorithms that are fully oblivious. However, there are significant performance penalties of their oblivious modes compared 
to the non-oblivious or partial-oblivious (but encrypted) counter-parts.
Obladi~\cite{crooks2018obladi} focuses on providing ACID transactions; federated oblivious database systems ~\cite{bater2017smcql, conclave, shrinkwrap,oblivious-coopetitive-analytics} provide cooperative data analytics for untrusted parties (semi-honest or malicious). Oblix~\cite{mishra2018oblix} is an oblivious search index whose internal memory access is also oblivious.
Shrinkwrap \cite{shrinkwrap} uses fully oblivious operators but padding with DP guarantees, and this greatly reduces its intermediate query results sizes. We are the first work to demonstrate the theoretical and empirical performance of differentially oblivious database operators.

\paragraph{ORAM and Oblivious Algorithms.}
Oblivious RAM and oblivious computation were proposed in the seminar work 
by Goldreich \cite{g87}. Since then, various ORAM schemes and hardware implementations 
were proposed, such as Path ORAM~\cite{pathoram}, Ring ORAM \cite{ringoram}, and PrORAM \cite{proram}.
Despite these exciting advances, ORAM still suffers from a $\log(N)$ factor slow down. For database that potentially has billions of tuples, this overhead is significant. 
In addition, using ORAM while leaking the runtime or result length does not provide full obliviousness.
GhostRider~\cite{liu2015ghostrider} provides an FPGA-based implementation to ensure memory-trace obliviousness by employing ORAM.
ZeroTrace~\cite{zerotrace} is a library of oblivious memory primitives for SGX enclave against side-channel attacks.
Obfuscuro~\cite{ahmad2019obfuscuro} leverage ORAM operations to perform secure code execution and data access, and ensures that  the program always runs for a pre-configured time interval
Apart from ORAM, many other oblivious data structures have been proposed, such as oblivious priority queues \cite{Shi20,JafargholiLS19}. 
Apart from differential obliviousness~\cite{ccms19},
Allen et al. \cite{adknoy19} proposed a security model, ODP, which combines differential obliviousness and differential privacy.
This model is useful when both the published result and the memory access pattern need to be protected.

\vldbrevision{
\paragraph{Other Ways of Mitigating SGX Side-channel Vulnerability.} DR.SGX \cite{drsgx19} designs and implements a compiler-based tool that instruments the enclave code, permuting data locations at fine granularity. By periodically re-randomizing all enclave data, DR.SGX can prevent correlation of repeated memory accesses.
T-SGX~\cite{tsgx17} ensures that no page fault sequence will be leaked to attackers via Intel  Transactional Synchronization Extensions (TSX) in order to mitigate the memory side channel attacks.
}

\paragraph{Differential Privacy.}
Another related development is differential privacy.
Since its introduction~\cite{dmns06},
differential privacy has become the de facto standard 
for protecting user privacy. 
Many differential privacy data analytics systems have been developed, such as 
PINQ~\cite{pinq},
FLEX~\cite{flex-dp},
GUPT~\cite{gupt},
PrivateSQL~\cite{kotsogiannis2019privatesql}.
In this paper, we \vldbrevision{use} a differentially private prefix-sum algorithm ~\cite{css10} as a building block of our differentially oblivious filtering algorithm. 
Additionally, we uses two established theoretical results in differential privacy, the group privacy theorem \cite{Vadhan17dpbook} and the basic composition~\cite{dr14}.

\section{Discussion}
\label{sec:discussion}
Our paper presents the first step towards using different obliviousness in databases. Although several theoretical papers have already been moving in this direction~\cite{ccms19, chu2021dojoin}, our paper is the first one that have designed and implemented database operators and show their empirical speedup against fully oblivious operators. The result is promising: we show that differentially oblivious operators can deliver up to 7.4 $\times$ performance improvement.
Now, one interesting question is how far away we are from an end-to-end differentially oblivious database. This is admittedly our original goal for the project, however, we have encountered substantial challenges. We want to leave them as future works for the research community.

\paragraph{Operator Composition.} A complex SQL query needs to combine multiple operators. DO operators are defined on two neighboring databases. Let's imagine we want to apply a differentially oblivious operator ${\mathcal M_2}$ to the outcome of another
differentially oblivious operator ${\mathcal M_1}$. To ensure differential obliviousness end-to-end, we need to make sure ${\mathcal M_1}$ is distance-preserving. We say that an operator is distance preserving, iff when applying the operator to two neighboring databases, the two output databases are still neighboring databases.
This is required because ${\mathcal M_2}$'s obliviousness guarantee depends on the inputs to ${\mathcal M_2}$ to be neighboring databases.
It is unclear how to build database operators that are both distance preserving and DO. Operators such as join are particularly challenging because join can increase distance.
To date, there is only a theoretical work \cite{zhou2022theory} that is able to compose DO database operators but there is a long way towards practical DO composability.

\paragraph{Query Optimization.} When we have more differentially oblivious operators in the future (e.g,. sort-based grouping with aggregation, hash-based join) and need to run multiple operators to serve one SQL query, we need to choose which operator to use to accelerate query execution. For example, how do we choose sort-based grouping with aggregation or hash-based grouping with aggregation for a given query, and how to generate the optimal query execution plan will be another interesting problem to solve.

\paragraph{Access Patterns for Private Memory.}
Our design patches the side channel of the access pattern leakage for the public memory. The hardware we implement our algorithms on, Intel SGX, has known vulnerabilities for the access pattern leakage for the private memory, and this can also leak sensitive information.
Specifically, popular commodity processors (even the ones with secure enclaves such as Intel SGX) allow time-sharing of the same 
on-chip cache among different processes.
This leads to a series practical cache-timing attacks
~\cite{bernstein2005cache-aes, RistenpartTSS09, DemmeMWS12, ZhangJRR12, ZhangJRR14}.
Fortunately, we can harden our implementation against cache-timing attacks without dramatic changes. 
The recipe is to make the algorithms and data structures within private memory oblivious as well. 
For example, we can change our implementation of bucket oblivious sort (in \textsc{DoGroup}$_s$ and \textsc{DoJoin}) so that it is oblivious within private memory. We can also use oblivious priority queues such as~\cite{Shi20} to implement the priority queue in \autoref{alg:dpcount}. 
Our comparison with ObliDB is fair: both ObliDB and our implementation do not consider private enclave memory access pattern leakage. 
\section{Conclusion}\label{sec:conclusion}

Preventing data leakage in cloud databases has become a critical problem.
Leveraging secure execution in hardware enclaves, such as Intel SGX, is not enough to prevent an attacker from breaking data confidentiality by observing the access patterns of encrypted data. 
Ensuring oblivious access patterns can lead to substantial performance overheads.
In this paper, we study how to incorporate into databases one new notion of obliviousness, $\mathit{differential~obliviousness}$, a novel obliviousness property which ensures that memory access patterns satisfy differential privacy.
We design and implement \textbf{Adore}: \textbf{A} set of \textbf{D}ifferentially \textbf{O}blivious \textbf{RE}lational database operators, and we formally prove that they satisfy the notion of differential obliviousness.
Our evaluations show that our differentially oblivious operators outperform the state-of-the-art fully oblivious databases by up to $7.4\times$ on Big Data Benchmark dataset with the same hardware configuration. 

\section*{Acknowlegments}

The authors would like to thank Bolin Ding, Cong Yan, Derek Leung for helpful discussions and the valuable suggestions from anonymous reviewers.
This work is supported by an NSF award 2128519 and an ONR grant N000142212064.

\newpage
\bibliographystyle{ACM-Reference-Format}
%
\bibliography{ref}

\newpage
\appendix 

\cleardoublepage
\vldbrevision{
\paragraph{Roadmap.} In Section~\ref{sec:tighter_bound} we provide a tighter bound for binary mechanism of deferentially private prefix sum. In Section~\ref{sec:dp_distinct:app} we present our differentially private distinct count algorithm design and analysis. In Section~\ref{sec:rand_partition} we prove some properties of binomial distribution. 

}

\section{A Tighter Bound for Binary Mechanism of DP Prefix-Sum}\label{sec:tighter_bound}

For a Laplace random variable, we have the following fact
\begin{fact}[Basic facts for Laplace random variable]\label{fact:lap1}
Let $x$ denote a random variable sampled from ${\sf Lap}(b)$, then 
\begin{align*}
    \E[x^2] = 2b^2, 
\end{align*}
and with probability $1 - \delta$, we have
\begin{align*}
    | x | \leq  b \ln ( 1 / \delta )
\end{align*}
\end{fact}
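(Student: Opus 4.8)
\textbf{Proof proposal for Fact~\ref{fact:lap1}.}
The plan is to compute both quantities directly from the density of the Laplace distribution. Recall that $x \sim {\sf Lap}(b)$ has probability density $f(x) = \frac{1}{2b}\exp(-|x|/b)$ on $\R$, which is symmetric about $0$, so all computations reduce to integrals over $[0,\infty)$ with a factor of $2$.

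For the second moment, I would write
\[
\E[x^2] = \int_{-\infty}^{\infty} x^2 \, \frac{1}{2b} e^{-|x|/b}\, \d x = \frac{1}{b}\int_{0}^{\infty} x^2 e^{-x/b}\, \d x,
\]
and then evaluate the remaining integral via the substitution $u = x/b$ (or by recalling the Gamma integral $\int_0^\infty u^2 e^{-u}\,\d u = \Gamma(3) = 2$), which gives $\int_0^\infty x^2 e^{-x/b}\,\d x = 2b^3$. Hence $\E[x^2] = \frac{1}{b}\cdot 2b^3 = 2b^2$, as claimed. (As a sanity check, since $\E[x]=0$ by symmetry, this also yields $\Var[x] = 2b^2$.)

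For the tail bound, I would compute, for any $t \geq 0$,
\[
\Pr[\,|x| > t\,] = 2\int_{t}^{\infty} \frac{1}{2b} e^{-x/b}\, \d x = e^{-t/b},
\]
and then set $t = b\ln(1/\delta)$ so that $e^{-t/b} = \delta$. Therefore $\Pr[\,|x| \le b\ln(1/\delta)\,] = 1-\delta$, which is exactly the stated bound.

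Neither step presents a genuine obstacle — this is a standard computation — so the only thing to be careful about is bookkeeping: keeping the factor of $\frac{1}{2b}$ and the symmetry factor of $2$ straight, and making sure the Gamma-function normalization is applied correctly. If one wants to avoid even the Gamma integral, the second moment can instead be obtained by two integrations by parts on $\int_0^\infty x^2 e^{-x/b}\,\d x$, reducing it through $\int_0^\infty x e^{-x/b}\,\d x = b^2$ and $\int_0^\infty e^{-x/b}\,\d x = b$.
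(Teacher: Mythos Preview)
Your proposal is correct and follows essentially the same approach as the paper: both compute the tail probability directly from the Laplace CDF/density and invoke symmetry. In fact your argument is more complete, since the paper only writes out the CDF and evaluates $F(b\ln(1/\delta)) = 1-\delta/2$ for the tail bound, leaving the $\E[x^2]=2b^2$ claim unproved.
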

\begin{proof}
The CDF of ${\sf Lap}(b)$ is:
\begin{align*}
   F(x)=  \begin{cases}
    \frac{1}{2} \exp (  x / b ), &  \text{if } x < 0 ; \\
    1 - \frac{1}{2} \exp ( - x / b ), & \text{if } x \geq 0. \\
    \end{cases}
\end{align*}
we have:
\begin{align*}
    F( b \ln(1/\delta) ) = 1 -  \delta / 2.
\end{align*}
This fact is proved by the symmetry of Laplace distribution. 
\end{proof}

\begin{lemma}\label{lem:lap2}
Let $x_1, x_2, \ldots, x_n$ be $n$ i.i.d. random variables sampled from Laplace distribution 
${\sf Lap}(b)$. For $0 \leq \delta \leq 1$, with at most probability $\delta$, we have for all $i \in [n]$:
\begin{align*}
    | x_i | > b \ln( n / \delta ) .
\end{align*}
\end{lemma}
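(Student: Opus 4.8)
The plan is to bound the probability that \emph{at least one} of the $n$ Laplace variables is large, and then conclude that with the complementary probability \emph{all} of them are small. First I would apply Fact~\ref{fact:lap1} to a single variable $x_i$: for any threshold parameter $\delta' \in (0,1)$, we have $\Pr[|x_i| > b\ln(1/\delta')] \leq \delta'$. The natural choice is $\delta' = \delta/n$, which gives $\Pr[|x_i| > b\ln(n/\delta)] \leq \delta/n$ for each fixed $i \in [n]$.

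Next I would take a union bound over the $n$ indices:
\begin{align*}
    \Pr\Big[ \exists\, i \in [n] : |x_i| > b\ln(n/\delta) \Big] \leq \sum_{i=1}^n \Pr\big[ |x_i| > b\ln(n/\delta) \big] \leq n \cdot \frac{\delta}{n} = \delta.
\end{align*}
This is exactly the statement of the lemma: with probability at most $\delta$, some $x_i$ exceeds $b\ln(n/\delta)$ in absolute value. Note that the i.i.d.\ assumption is not even needed here — only that each marginal is distributed as ${\sf Lap}(b)$ — but stating it as i.i.d.\ is harmless.

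There is essentially no main obstacle; the only point requiring minor care is the direction of the inequalities (the lemma bounds the probability that the bound \emph{fails}, i.e.\ ``with at most probability $\delta$, we have \ldots $> \ldots$''), so I would make sure the union bound is phrased to match that wording rather than the more common ``with probability at least $1-\delta$, for all $i$, $|x_i| \leq \ldots$'' form. The whole argument is three lines: single-variable tail bound from Fact~\ref{fact:lap1}, then a union bound.
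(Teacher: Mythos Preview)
Your proposal is correct and matches the paper's proof essentially line for line: apply Fact~\ref{fact:lap1} with parameter $\delta/n$ to each variable, then take a union bound (the paper calls it Boole's inequality) over the $n$ indices. Your additional remarks about i.i.d.\ being unnecessary and the phrasing of the inequality direction are fine observations but go slightly beyond what the paper states.
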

\begin{proof}
For Fact~\ref{fact:lap1}, we know for each $i \in [n]$, with probability $p_i = \delta / n $:
\begin{align*}
    | x_i | > b \ln ( n / \delta )
\end{align*}
Then, for all $i$, using Boole's inequality:
\begin{align*}
    \Pr [ | x_i | > b \ln ( n / \delta ) ] \leq  \sum_{i=1}^n p_i = \delta .
\end{align*}
\end{proof}

\begin{lemma}[Bernstein inequality \cite{b24}]\label{lem:bernstein}
Let $X_1, \cdots, X_n$ be independent zero-mean random variables. Suppose that $|X_i| \leq M$ almost surely (with probability $1-\delta$), for all $i$. Then, for all positive $t$,
\begin{align*}
\Pr \left[ \sum_{i=1}^n X_i > t \right] \leq \exp \left( - \frac{ t^2/2 }{ \sum_{j=1}^n \E[X_j^2]  + M t /3 } \right) + \delta.
\end{align*}
\end{lemma}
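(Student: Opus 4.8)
The plan is to reduce the claim to the classical Bernstein inequality \cite{b24} — the version with an almost-sure bound on each summand — at the cost of an additive $\delta$ that pays for the event on which the bound on the $X_i$'s fails.

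First I would introduce the good event $A := \bigcap_{i=1}^{n} \{\, |X_i| \le M \,\}$, so that by hypothesis $\Pr[A^c] \le \delta$. For any $t > 0$ I would then split
\begin{align*}
\Pr\Bigl[ \sum_{i=1}^{n} X_i > t \Bigr]
\;\le\; \Pr\Bigl[ \Bigl\{ \sum_{i=1}^{n} X_i > t \Bigr\} \cap A \Bigr] + \Pr[A^c]
\;\le\; \Pr\Bigl[ \Bigl\{ \sum_{i=1}^{n} X_i > t \Bigr\} \cap A \Bigr] + \delta ,
\end{align*}
so it remains to bound the first term by $\exp\bigl( - \tfrac{t^2/2}{\sum_{j} \E[X_j^2] + Mt/3} \bigr)$.

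To control that term rigorously I would pass to the truncated variables $\bar X_i := \max(\min(X_i, M), -M)$. These are bounded by $M$ almost surely, remain mutually independent, and coincide with $X_i$ on $A$, so that $\{\sum_i X_i > t\} \cap A \subseteq \{ \sum_i \bar X_i > t \}$. Writing $\mu_i := \E[\bar X_i]$, the re-centered variables $\bar X_i - \mu_i$ are independent, zero-mean, bounded, with $\E[(\bar X_i - \mu_i)^2] \le \E[X_i^2]$, so classical Bernstein applied to them at threshold $t - \sum_i \mu_i$ delivers the stated exponential bound up to the negligible corrections discussed next.

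The main obstacle is that classical Bernstein needs zero-mean summands, whereas $\sum_i \bar X_i$ carries the truncation bias $\sum_i \mu_i$ (and the naive recentering shifts the effective bound from $M$ to $2M$). I would argue this is harmless: since $\E[X_i] = 0$, Cauchy--Schwarz gives $|\mu_i| = |\E[\bar X_i - X_i]| \le \E[\,|X_i| \mathbf{1}_{|X_i| > M}\,] \le \sqrt{\delta \, \E[X_i^2]}$, hence $|\sum_i \mu_i| \le \sqrt{\delta\, n \sum_{j} \E[X_j^2]}$, which for the negligibly small $\delta$ used throughout the paper is absorbed by the slack already present in the exponent, at the price only of adjusting absolute constants. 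If one is content with the statement up to constants — which is all the paper uses it for, namely to tail-bound a sum of $O(\log n)$ i.i.d.\ Laplace noises in the binary mechanism — one can skip truncation and simply invoke classical Bernstein ``on the event $A$'', which is the intended informal reading of the hypothesis.
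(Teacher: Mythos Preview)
The paper does not actually prove this lemma: it is stated as a cited result (Bernstein \cite{b24}) and immediately used as a black box in the next lemma on sums of Laplace variables. So there is nothing to compare your argument against---you are filling in a proof the authors omitted.

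Your reduction is the natural one and matches how the paper \emph{uses} the lemma (Lemma~\ref{lem:lap2} establishes the event $A=\{\forall i:|X_i|\le M\}$ with probability $\ge 1-\delta$, then Lemma~\ref{lem:lap3} invokes Bernstein on that event and pays the additive $\delta$). Your splitting $\Pr[\sum X_i>t]\le \Pr[\{\sum X_i>t\}\cap A]+\delta$ followed by classical Bernstein on bounded, independent, zero-mean variables is exactly the intended informal reading.

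You are also right to flag the truncation-bias issue: once you pass to $\bar X_i$, the mean need not be zero, and a fully rigorous argument has to absorb $\sum_i\mu_i$ and the shift from $M$ to $2M$. Your Cauchy--Schwarz bound $|\sum_i\mu_i|\le\sqrt{\delta\, n\sum_j\E[X_j^2]}$ is correct and, as you say, negligible in the regime the paper cares about ($\delta$ tiny, $n=O(\log N)$ Laplace summands). So the statement with the \emph{exact} constants $1/2$ and $1/3$ is not literally what your argument yields, but the paper never relies on those constants---only on the form of the bound---and your proof delivers that.
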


\begin{lemma}[similar to Lemma 2.8 in \cite{css11}, sum of independent Laplace distributions]\label{lem:lap3}
Let $x_1, x_2, \cdots, x_n$ denote $n$ i.i.d. random variables sampled from Laplace distribution ${\sf Lap}(b)$. For all $t > 0$, we have
\begin{align*}
    \Pr \left[ \sum_{i=1}^n x_i > t \right] \leq \exp \Big( - \frac{t^2 / 2}{  2 n b^2 + M t /3} \Big) + \delta
\end{align*}
where $M =  b \ln( n / \delta )$.
\end{lemma}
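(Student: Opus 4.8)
The plan is to instantiate Bernstein's inequality (Lemma~\ref{lem:bernstein}) with $X_i = x_i$. First I would observe that the Laplace distribution ${\sf Lap}(b)$ is symmetric about the origin, so each $x_i$ is a zero-mean random variable, as required by Lemma~\ref{lem:bernstein}; and since the $x_i$ are i.i.d., they are in particular independent.

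Next I would supply the two quantities that appear in the Bernstein bound. For the second-moment term, Fact~\ref{fact:lap1} gives $\E[x_i^2] = 2b^2$, hence $\sum_{j=1}^n \E[x_j^2] = 2nb^2$. For the almost-sure bound, Lemma~\ref{lem:lap2} states that the event $\{\exists\, i \in [n] : |x_i| > b\ln(n/\delta)\}$ has probability at most $\delta$; equivalently, with probability at least $1-\delta$ we have $|x_i| \le b\ln(n/\delta) = M$ for all $i$ simultaneously. This is precisely the hypothesis ``$|X_i| \le M$ almost surely (with probability $1-\delta$)'' of Lemma~\ref{lem:bernstein}, with the value $M = b\ln(n/\delta)$ claimed in the statement.

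With these ingredients in hand, applying Lemma~\ref{lem:bernstein} for the given $t > 0$ yields
\[
\Pr\left[\sum_{i=1}^n x_i > t\right] \le \exp\left(-\frac{t^2/2}{\sum_{j=1}^n \E[x_j^2] + Mt/3}\right) + \delta = \exp\left(-\frac{t^2/2}{2nb^2 + Mt/3}\right) + \delta,
\]
which is the desired inequality.

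The argument is essentially a bookkeeping exercise, so I do not expect a serious obstacle. The one point that needs a little care is the direction of the probabilistic statement in Lemma~\ref{lem:lap2}: that lemma bounds the probability of the ``bad'' event that some $|x_i|$ exceeds $b\ln(n/\delta)$, and one must read it as certifying that $M = b\ln(n/\delta)$ is a valid almost-sure bound (up to failure probability $\delta$) for plugging into Bernstein's inequality. Everything else — the symmetry argument for zero mean, and Fact~\ref{fact:lap1} for the variance — is immediate.
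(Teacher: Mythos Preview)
Your proposal is correct and follows essentially the same approach as the paper: apply Bernstein's inequality (Lemma~\ref{lem:bernstein}) to the zero-mean Laplace variables, using Fact~\ref{fact:lap1} for the second-moment term $\sum_j \E[x_j^2] = 2nb^2$ and Lemma~\ref{lem:lap2} for the almost-sure bound $M = b\ln(n/\delta)$ with failure probability $\delta$. The paper's proof is terser but identical in substance.
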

\begin{proof}

The proof is directly from Bernstein inequality and definition of Laplace distribution.
From Lemma~\ref{lem:lap2}, we have for all $i \in [n]$
\begin{align*}
    |x_i| \leq b \ln ( n / \delta)
\end{align*}
holds with at least probability $1-\delta$. 
\end{proof}

\begin{lemma}[a slightly tighter version of Collary 2.9 in \cite{css11}, measure concentration]
Let $x_1, x_2, \ldots, x_n$ be $n$ i.i.d. random variables sampled from Laplace distribution ${\sf Lap}(b)$. For any $0 \leq \delta \leq 1$, we have:
\begin{align*}
    \Pr [ \sum_{i=1}^n x_i > \max \Big\{ \sqrt{4 n b^2 \ln( 3/ \delta) }, \\
    (2/3) b \ln( {3 n}/ {\delta} ) \cdot \ln ( {3}/{\delta}) \Big\} ] 
    \leq  \delta
\end{align*}
\end{lemma}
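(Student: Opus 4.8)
The plan is to obtain this as a direct corollary of the Bernstein-type tail bound for sums of Laplace variables in Lemma~\ref{lem:lap3}. Recall that lemma gives, for every $t>0$ and every auxiliary failure probability $\delta' \in (0,1)$,
\[
\Pr\Big[\sum_{i=1}^n x_i > t\Big] \le \exp\Big(-\frac{t^2/2}{2nb^2 + Mt/3}\Big) + \delta', \qquad M = b\ln(n/\delta'),
\]
where we used $\E[x_i^2] = 2b^2$ from Fact~\ref{fact:lap1}. First I would split the total failure budget into thirds: set $\delta' = \delta/3$, so the additive ``bad event'' term contributes at most $\delta/3$, and the remaining goal is to choose $t$ making the exponential term at most $2\delta/3$ (equivalently at most $\delta/3$, absorbing the factor-$2$ slack into constants).

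The core is a two-regime case analysis on the Bernstein denominator $2nb^2 + Mt/3$. \textbf{Case 1 (variance-dominated):} if $Mt/3 \le 2nb^2$, the denominator is at most $4nb^2$, so the exponent is at most $-t^2/(8nb^2)$; choosing $t \ge \sqrt{c\, nb^2 \ln(3/\delta)}$ for a suitable absolute constant $c$ drives the exponential term below $\delta/3$, which yields the first term $\sqrt{4nb^2\ln(3/\delta)}$ of the maximum (up to the choice of $c$). \textbf{Case 2 (deviation-dominated):} if $Mt/3 > 2nb^2$, the denominator is at most $2Mt/3$, so the exponent is at most $-\tfrac{t^2/2}{2Mt/3} = -\tfrac{3t}{4M}$; since $M = b\ln(3n/\delta)$, taking $t \ge \tfrac{2}{3} b \ln(3n/\delta)\ln(3/\delta)$ makes this at most $\delta/3$, giving the second term of the maximum. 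Setting $t$ equal to the larger of the two thresholds handles both cases at once, and a union bound over the exponential contribution ($\le \delta/3$ in whichever regime applies) and the additive $\delta' = \delta/3$ from Lemma~\ref{lem:lap3} gives total failure probability at most $\delta$.

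I expect the only real obstacle to be constant-chasing rather than any new probabilistic idea: one must reconcile the coefficient coming out of the ``$2nb^2$ versus $Mt/3$'' threshold with the stated $\sqrt{4nb^2\ln(3/\delta)}$, which requires either a slightly sharper regime boundary or absorbing a $\sqrt{2}$ into the gap between $\ln(3/\delta)$ and $\ln(1/\delta)$, and one must resolve the mild circular dependence whereby $M$ depends on $\delta'$ and the chosen $t$ in turn depends on $M$. Everything else is a mechanical application of Lemma~\ref{lem:lap3}.
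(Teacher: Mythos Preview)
Your approach is essentially the same as the paper's: both apply Lemma~\ref{lem:lap3} with $\delta' = \delta/3$ and then control the Bernstein denominator $2nb^2 + Mt/3$ by splitting into the variance-dominated and deviation-dominated regimes. The paper compresses your two-case analysis into the single algebraic observation that for $t = \max\{\sqrt{ak},\, bk\}$ one has $t^2/(a+bt) \ge t^2/(2\max\{a,bt\}) \ge k/2$, and then plugs in $a = 4nb^2$, $b = (2/3)b\ln(3n/\delta)$, $k = \ln(3/\delta)$; the constant-chasing you flag is real and the paper's final line handles it somewhat loosely.
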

\begin{proof}
From Lemma~\ref{lem:lap3}, we have for $0 \leq \delta / 3 \leq 1$ ($M = b \ln( 3 n / \delta) $)
\begin{align*}
    \Pr \left[ \sum_{i=1}^n x_i > t \right] \leq  &
    \exp \left( - \frac{t^2 / 2}{  2 n b^2 + b \ln ( 3 n / \delta )  \cdot t /3} \right) + \delta / 3 \\
    = & \exp \left( - \frac{t^2 }{  4 n b^2 + (2/3) b \ln ( 3 n/ \delta) \cdot  t} \right) +  \delta / 3
\end{align*}
Now we prove an inequality for the general form of the exponent, 
for $a > 0, b > 0, t > 0, k > 0, t = \max \{ \sqrt{ak}, bk \}$:
\begin{align*}
\frac{t^2}{c + b t}   \geq \frac{t^2}{2 \max \{ a, bt \} }  = \frac{\max \{ ak, bkt \} }{2 \max \{ a, bt \} } = \frac{k}{2} .
\end{align*}
We choose $t$ as follows,
\begin{align*}
    t = \max \{ \sqrt{4 n b^2 \ln( 3 / \delta) }, (2/3) b \ln ( 3 n / \delta ) \ln( 3 / \delta ) \},
\end{align*}
then we have:
\begin{align*}
    \Pr \left[ \sum_{i=1}^n x_i > t \right] \leq \exp ( - 0.5 \ln( 3/ \delta ) ) + \delta/3 = \delta .
\end{align*}
\end{proof}

\section{Differentially Private Distinct Count} \label{sec:dp_distinct:app}
In this section, we describe a differentially private distinct count algorithm based on \cite{bjkst02}. We first prove the main technical lemmas 
about order statistics properties of random sampling in \autoref{sec:tech_lemma}. Next, we define $(\eps, \delta)$-sensitivity and 
introduce the Laplacian mechanism for $(\eps, \delta)$-sensitivity in \autoref{sec:newsensitivity}. This follows by our analysis of \cite{bjkst02}: its $(\eps, \delta)$-sensitivity and its 
approximation ratio concentration.
Last, we develop a differentially private distinct count algorithm based
on \cite{bjkst02} (Algorithm~\ref{alg:dpcount:app}) and also demonstrate a simpler 
$1.1$ approximation version (Algorithm~\ref{alg:dpcount2:app}).

\subsection{Order Statistics Properties of Random Sampling}\label{sec:tech_lemma:app}

\begin{claim}[Restatement of Claim~\ref{claim:1}]\label{claim:1:app}
Let $t \in [n]$, and fix any $0 < \delta < 1/2$. Then we have the following two bounds:
\begin{enumerate}
    \item $\Pr[ y_t > \delta \frac{t}{n} ] \geq 1 - \delta$. 
    \item $\Pr[ y_t >  \frac{t}{2n} ] \geq 1 - \exp(-t/6)$. 
\end{enumerate}

\end{claim}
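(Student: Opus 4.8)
\textbf{Proof proposal for Claim~\ref{claim:1:app}.}

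The plan is to convert both statements about the order statistic $y_t$ into statements about how many of the i.i.d. uniform samples land below a fixed threshold. The key observation is the elementary equivalence: for any threshold $\tau \in (0,1)$, the event $\{y_t > \tau\}$ holds if and only if $|\{i \in [n] : x_i \le \tau\}| < t$, since $y_t$ is the $t$-th smallest sample. So in each part I would fix $\tau$ appropriately and bound the probability that the count of samples below $\tau$ reaches $t$.

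For Part 1, I would set $\tau = \delta t/n$. Since each $x_i$ is uniform on $[0,1]$, we have $\Pr[x_i \le \tau] = \delta t/n$, so the count $Z := |\{i : x_i \le \delta t/n\}|$ has $\E[Z] = \delta t$. By Markov's inequality, $\Pr[Z \ge t] \le \E[Z]/t = \delta$. Taking complements and using the equivalence above gives $\Pr[y_t > \delta t/n] \ge 1 - \delta$.

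For Part 2, I would set $\tau = t/(2n)$ and write $Z' := |\{i : x_i \le t/(2n)\}| = \sum_{i=1}^n z_i$ where $z_i$ is the indicator of $\{x_i \le t/(2n)\}$; these are independent Bernoulli variables with $\E[Z'] = t/2$. Here I would apply the multiplicative Chernoff upper tail bound at deviation factor $2$: $\Pr[Z' \ge 2 \cdot \E[Z']] \le \exp(-\E[Z']/3) = \exp(-t/6)$. Since $2\E[Z'] = t$, this says $\Pr[Z' \ge t] \le \exp(-t/6)$, and the equivalence again yields $\Pr[y_t > t/(2n)] \ge 1 - \exp(-t/6)$.

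There is no real obstacle here; the only thing to be slightly careful about is citing the correct form of the Chernoff bound (the $(1+\gamma)\mu$ multiplicative upper tail with $\gamma = 1$, giving the $e^{-\mu/3}$ factor) and noting that the bound holds for every integer $t \in [n]$ regardless of whether $t/(2n) < 1$, which is automatic since $t \le n$. I would remark that the two bounds are incomparable and one uses whichever is stronger for the value of $t$ at hand, exactly as the surrounding text indicates.
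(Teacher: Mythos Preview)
Your proposal is correct and follows essentially the same argument as the paper: both parts reduce the order-statistic event to a count of samples below a threshold, then apply Markov's inequality for Part~1 and a multiplicative Chernoff bound (with $\gamma=1$, giving the $e^{-\mu/3}$ factor) for Part~2.
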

\begin{proof}

{\bf Part 1.} 
Consider the interval $[0,\delta \frac{t}{n}]$. We have 
\begin{align*}
    \ex{ | \{ i \in [n] : x_i < \delta t/n \} | } = \delta t
\end{align*} namely, the expected number of points $x_i$ will fall in this interval is exactly $\delta t$. Then by Markov's inequality, we have 
\begin{align*}
    \Pr[ | \{ i \in [n] : x_i < \delta t/n \} | \geq t] \leq \delta .
\end{align*} 
So with probability $1-\delta$, we have $| \{ i : x_i < \delta t/n \} | < t$, and conditioned on this we must have $y_t > \delta t/n$ by definition, which yields the first inequality.

{\bf Part 2.} 
For the second inequality, note that we can write  
\begin{align*}
    | \{ i \in [n] : x_i <  t/(2n) \} | = \sum_{i=1}^n z_i
\end{align*}
where $z_i \in \{0,1\}$ is an random variable that indicates the event that $x_i <  t /2n$. Moreover, $\ex{\sum_{i=1}^n z_i} = t/2$. Applying Chernoff bounds, we have
\begin{align*}
    \Pr[ | \{ i \in [n] : x_i <  t/(2n) \} | \geq t ] \leq \exp( - t / 6 )    
\end{align*}
Which proves the second inequality.
\end{proof}

\begin{claim}[Restatement of Claim~\ref{claim:2}]\label{claim:2:app}
Fix any $4 <\alpha <n/2$, and $1 \leq t \leq n/2$. Then we have
\begin{align*}
  \Pr[  y_{t+1} < y_t + \alpha/n ] \geq 1- \exp( - \alpha / 4 ) .
\end{align*}
\end{claim}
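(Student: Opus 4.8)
The plan is to condition on the realization of the first $t$ order statistics $y_1,\dots,y_t$ and then analyze the remaining $n-t$ samples, which — conditioned on all being at least $y_t$ — are distributed as $n-t$ i.i.d.\ uniform draws from the interval $[y_t,1]$. First I would observe that the event $\{y_{t+1} \geq y_t + \alpha/n\}$ is precisely the event that \emph{none} of these remaining $n-t$ variables lands in the sub-interval $[y_t, y_t+\alpha/n]$. Since each remaining variable is uniform on $[y_t,1]$ and the sub-interval has length $\alpha/n$ (and $[y_t, y_t+\alpha/n] \subseteq [y_t,1]$ because $y_t + \alpha/n \le 1$, which holds as $\alpha < n/2$), the probability a given one of them misses the sub-interval is at most $1 - \alpha/n$ — actually it equals $1 - \frac{\alpha/n}{1-y_t} \le 1-\alpha/n$, so the bound only improves.

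Next I would bound the probability that all $n-t$ of them miss:
\begin{align*}
\Pr[\,y_{t+1} \geq y_t + \alpha/n \mid y_1,\dots,y_t\,]
\;\leq\; \Bigl(1 - \tfrac{\alpha}{n}\Bigr)^{n-t}
\;\leq\; \Bigl(1 - \tfrac{\alpha}{n}\Bigr)^{n/2},
\end{align*}
using $t \le n/2$ so that $n - t \ge n/2$. Taking logarithms and using $\log(1-x) < -x + 2x^2$ for $x \in (0,1/2)$ with $x = \alpha/n < 1/2$, I get
\begin{align*}
\Bigl(1 - \tfrac{\alpha}{n}\Bigr)^{n/2}
= \exp\!\Bigl(\tfrac{n}{2}\log(1-\alpha/n)\Bigr)
< \exp\!\Bigl(\tfrac{n}{2}\bigl(-\tfrac{\alpha}{n} + 2(\tfrac{\alpha}{n})^2\bigr)\Bigr)
< \exp\!\Bigl(-\tfrac{n}{2}\cdot\tfrac{\alpha}{2n}\Bigr)
= \exp(-\alpha/4),
\end{align*}
where the second-to-last inequality uses $2(\alpha/n)^2 < \alpha/(2n)$, i.e.\ $\alpha/n < 1/4$; since $\alpha < n/2$ this needs a touch of care, but the assumption $\alpha > 4$ together with the intended regime ($\alpha$ at most a small constant times $t$, $t \le n/2$) makes $\alpha/n$ small, and in any case the bound $\log(1-x)<-x/\sqrt{2}$ type estimates suffice. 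Since this bound holds for every realization of $y_1,\dots,y_t$, it holds unconditionally after taking expectation, giving $\Pr[y_{t+1} < y_t + \alpha/n] \ge 1 - \exp(-\alpha/4)$.

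The only mild obstacle is bookkeeping around the constant in the exponent: one must confirm that $\frac{n}{2}\log(1-\alpha/n) \le -\alpha/4$ throughout the allowed range of $\alpha$, which reduces to checking $-\log(1-\alpha/n) \ge \alpha/(2n)$, a statement that is true for all $\alpha/n \in (0,1)$ since $-\log(1-x) \ge x \ge x/2$. So actually the $2x^2$ correction term is not even needed — the elementary bound $\log(1-x) \le -x$ already yields $(1-\alpha/n)^{n/2} \le e^{-\alpha/2} \le e^{-\alpha/4}$, which is cleaner. I would present the argument using just $\log(1-x)\le -x$ to keep it short, and note that this in fact gives the slightly stronger $1 - e^{-\alpha/2}$ if desired. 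This matches the structure of the (elided) proof sketched in the excerpt.
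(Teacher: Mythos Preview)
Your proposal is correct and follows essentially the same approach as the paper: condition on $y_1,\dots,y_t$, observe the remaining $n-t$ samples are i.i.d.\ uniform on $[y_t,1]$, and bound the probability that none fall in $[y_t,y_t+\alpha/n]$ by $(1-\alpha/n)^{n/2}$. Your observation that the elementary bound $\log(1-x)\le -x$ already gives $(1-\alpha/n)^{n/2}\le e^{-\alpha/2}\le e^{-\alpha/4}$ is in fact cleaner than the paper's chain through $\log(1-x)<-x+2x^2$, which as you noticed needs $\alpha/n<1/4$ rather than the stated $\alpha/n<1/2$ to make the final inequality go through.
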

\begin{proof}\label{proof:b.2}
Note that we can first condition on any realization of the values $y_1,y_2,\dots,y_t$ one by one. Now that these values are fixed, the remaining distribution of the $(n-t)$ uniform variables is the same as drawing $(n-t)$ uniform random variables independently from the interval $[y_t,1]$. Now observed that for any of the remaining $n-t$ uniform variables $x_i$, the probability  that $x_i \in [y_t, y_t + \alpha/n]$ is at least $\frac{\alpha}{n}$, which follows from the fact that $x_i$ is drawn uniformly from $[y_t,1]$. Thus, 
\begin{align*}
        & ~ \Pr[ | \{  i  \in S : x_i \in [y_t, y_t + \alpha/n] \}| = 0 ] \\
       \leq & ~ (1 - {\alpha}/{n} )^{n-t} \\ 
       \leq & ~ (1 - {\alpha}/{n} )^{n/2} \\ 
        = & ~ \exp( \frac{n}{2} \log( 1 - \alpha / n ) ) \\ 
        < & ~ \exp( \frac{n}{2} (- \alpha/n + 2 (\alpha/n)^2 ) ) \\ 
        < & ~ \exp( - \frac{n}{2} \frac{ \alpha }{ 2n } )\\ 
        = & ~ \exp( - \alpha / 4 ). 
\end{align*}

Thus $| \{  i  \in S : x_i \in [y_t, y_t + \alpha/n] \}| \geq 1$ with probability at least $1-e^{-\alpha/4}$. Conditioned on this, we must have $y_{t+1} < y_t + \alpha/n$, as desired.

\end{proof}

\begin{lemma}[Restatement of Lemma~\ref{lem:main}]\label{lem:main:app}
Fix any $0 < \beta \leq 1/2$, $1 \leq t \leq n/2$, and $\alpha$ such that $4 < \alpha < \beta t/2$. Then we have the following two bounds:

\begin{enumerate}
    \item $\Pr[ | \frac{1}{y_t} - \frac{1}{y_{t+1}}| < \frac{\alpha}{\beta^2} \frac{n}{t^2} ] \geq 1-\beta - \exp( - \alpha / 4 ) $.
    \item $ \Pr[ | \frac{1}{y_t} - \frac{1}{y_{t+1}}| < 4 \alpha \frac{n}{t^2} ] \geq 1- \exp( - t / 6 ) - \exp( - \alpha / 4 ) $.
\end{enumerate}
\end{lemma}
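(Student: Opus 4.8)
The plan is to prove both parts with the same two-step recipe, differing only in which lower bound on $y_t$ we pull from Claim~\ref{claim:1:app}: first establish a high-probability lower bound on $y_t$ together with a high-probability upper bound on the gap $y_{t+1}-y_t$, and then combine the two deterministically.

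For Part~1, I would condition on the events $\{y_t > \beta \tfrac{t}{n}\}$ and $\{y_{t+1} < y_t + \tfrac{\alpha}{n}\}$. The former holds with probability at least $1-\beta$ by Part~1 of Claim~\ref{claim:1:app} (instantiating its free parameter as $\beta$), and the latter with probability at least $1-\exp(-\alpha/4)$ by Claim~\ref{claim:2:app}; a union bound gives that both hold simultaneously with probability at least $1-\beta-\exp(-\alpha/4)$. On this event, writing $y_t = t'/n$ so that $t' > \beta t$, the bound follows from the chain
\begin{align*}
\Big|\tfrac{1}{y_t}-\tfrac{1}{y_{t+1}}\Big| < \frac{n}{t'}-\frac{n}{t'+\alpha} = \frac{n}{t'}\Big(1-\frac{1}{1+\alpha/t'}\Big) < \frac{n}{t'}\cdot\frac{\alpha}{t'} = \frac{\alpha n}{(t')^2} \le \frac{\alpha n}{\beta^2 t^2},
\end{align*}
where the middle step uses $\frac{1}{1+x}>1-x$ for $x\in(0,1)$, which is valid because $\alpha/t' < \alpha/(\beta t) < 1/2$ by the hypothesis $\alpha < \beta t/2$.

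For Part~2, I would run the identical argument but condition instead on $\{y_t > \tfrac{t}{2n}\}$, which holds with probability at least $1-\exp(-t/6)$ by Part~2 of Claim~\ref{claim:1:app}. Combining with Claim~\ref{claim:2:app} by a union bound gives both events with probability at least $1-\exp(-t/6)-\exp(-\alpha/4)$, and rerunning the same inequality chain with the effective value $\beta=1/2$ yields the bound $\frac{\alpha n}{(1/2)^2 t^2} = 4\alpha\frac{n}{t^2}$.

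The calculation itself is routine, so there is no genuine obstacle here; the only points requiring care are checking that the parameter constraints are admissible — in particular that $\alpha/t' < 1$, which is exactly where the hypothesis $4 < \alpha < \beta t/2$ enters, justifying $\frac{1}{1+x}>1-x$ — and making sure that the two events combined in each union bound are precisely those furnished by Claims~\ref{claim:1:app} and~\ref{claim:2:app}, whose proofs already carry the underlying concentration arguments (Markov/Chernoff for the lower bound on $y_t$ and the geometric-tail estimate for the gap).
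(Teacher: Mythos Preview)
Your proposal is correct and essentially identical to the paper's proof: both parts condition on the same pair of events from Claims~\ref{claim:1:app} and~\ref{claim:2:app}, apply a union bound, introduce $t'$ via $y_t = t'/n$, and run the same inequality chain using $1/(1+x) > 1-x$ with the constraint $\alpha/t' < \alpha/(\beta t) < 1/2$. There is nothing to add.
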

\begin{proof}

{\bf Part 1.}
For the first statement, we condition on $y_t > \beta \frac{t}{n}$ and, $y_{t+1} < y_t + \frac{\alpha}{n}$, which by a union bound hold together with probability $1-\beta - e^{-\frac{\alpha}{4}}$ by Claims \ref{claim:1:app} and \ref{claim:2:app}. Define the value $t'$ such that $y_t = \frac{t'}{n}$. By the above conditioning, we know that $t' > \beta t$. Conditioned on this, we have 
\begin{align}\label{lem:main_lem_eq1:app}
        \Big| \frac{1}{y_t} - \frac{1}{y_{t+1}} \Big| 
        < & ~ \frac{n}{t'} - \frac{1}{t'/n + \alpha/n} \notag \\
        = & ~ \frac{n}{t'} - \frac{n}{t' + \alpha} \notag \\
        = & ~ \frac{n}{t'}\left(1- \frac{1}{1 + \alpha/t'} \right) \notag \\
        < & ~ \frac{n}{t'}\left(1- (1 - \alpha/t') \right) \notag \\
        \leq & ~ \frac{\alpha n}{(t')^2} \notag\\
        \leq & ~ \frac{\alpha n}{\beta^2 t^2}
\end{align}
Where we used that $\alpha / t' < \alpha /( \beta t) < 1/2$, and the fact that $1/(1+x) >1-x$ for any $x \in (0,1)$.

{\bf Part 2.}
For the second part, we condition on $y_t >  \frac{t}{2n}$ and, $y_{t+1} < y_t + \frac{\alpha}{n}$, which by a union bound hold together with probability $1- e^{-\frac{t}{6}}- e^{-\frac{\alpha}{4}}$ by Claims \ref{claim:1:app} and \ref{claim:2:app}. Then from Lemma~\ref{lem:main_lem_eq1:app}:
\begin{align*}
    \Big| \frac{1}{y_t} - \frac{1}{y_{t+1}} \Big| & \leq  ~ \frac{\alpha n}{\beta^2 t^2} \\
    & = ~4 \alpha \frac{n}{t^2}
\end{align*}

In this case, the same inequality goes throguh above with the setting $\beta = 1/2$, which finishes the proof.
\end{proof}

\subsection{\texorpdfstring{$(\eps,\delta)$}{}-Sensitivity}\label{sec:newsensitivity:app}

In what follows, let $\mathcal{X}$ be the set of databases, and say that two databases $X,X' \in \mathcal{X}$ are neighbors if $\|X-X'\|_1 \leq 1$.
\begin{definition}
Let $f: \mathcal{X} \to \R$ be a function. We say that $f$ has sensitivity $\ell$ if for every two neighboring databases $X,X' \in \mathcal{X}$, we have $|f(X) - f(X')| \leq \ell$.
\end{definition}
\begin{theorem}[The Laplace Mechanism \cite{dmns06}]
Let $f: \mathcal{X} \to \R$ be a function that is $\ell$-sensitive. Then the algorithm $A$ that on input $X$ outputs $A(X) = f(X) +\text{Lap}(0,\ell/\eps)$ preserves $(\eps,0)$-differential privacy.
\end{theorem}

In other words, we have $\Pr[ A(X) \in S ] = (1 \pm \eps) \Pr[ A(X') \in S ]$ for any subset $S$ of outputs and neighboring data-sets $X,X' \in \mathcal{X}$. Now consider the following definition. \vldbrevision{We now introduce a small generalization of pure sensitivity (Definition \ref{def:puresensitive}), that allows the algorithm to \textit{not} be sensitive with a very small probability $\delta$. The difference between $\ell$-sensitivity and $(\ell,\delta)$-sensitivity is precisely analogous to the difference between $\eps$-differential privacy and $(\eps,\delta)$-differential privacy, where in the latter we only require the guarantee to hold on a $1-\delta$ fraction of the probability space. Thus, to achieve $(\eps,\delta)$-differential privacy (as is our goal), one only needs the weaker $(\ell,\delta)$ sensitivity bounds.}

\begin{definition}[$(\ell,\delta)$-sensitive]\label{def:generalsensitive:app}
Fix a randomized algorithm $\mathcal{A}: \mathcal{X} \times R \to \R$ which takes a database $X \in \mathcal{X}$ and a random string $r \in R$, where $R = \{0,1\}^m$ and $m$ is the number of random bits used. We say that $\mathcal{A}$ is $(\ell,\delta)$-sensitive if for every $X \in \mathcal{X}$ there is a subset $R_X \subset R$ with $|R_X| > (1-\delta)|R|$ such that for any neighboring datasets $X,X' \in \mathcal{X}$ and any $r \in R_X$ we have $|\mathcal{A}(X,r) - \mathcal{A}(X',r)| \leq \ell$
\end{definition}
Notice that our algorithm for count-distinct is $(O(\alpha\frac{n}{t}),O(e^{-t} + e^{-\alpha}))$-sensitive, following from the technical lemmas proved above. We now show that this property is enough to satisfy $(\eps,\delta$)-differential privacy after using the Laplacian mechanism.

\begin{lemma}\label{lem:generaldp:app}
Fix a randomized algorithm $\mathcal{A}: \mathcal{X} \times R \to \R$ that is $(\ell,\delta)$-sensitive. Then consider the randomized laplace mechanism $\overline{\mathcal{A}}$ which on input $X$ outputs $\mathcal{A}(X,r) + \text{Lap}(0,\ell/\eps)$ where $r \sim R$ is uniformly random string. Then the algorithm $\overline{\mathcal{A}}$ is $(\eps,2(1+e^\eps) \delta )$-differentially private.
\end{lemma}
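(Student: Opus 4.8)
The plan is to mimic the textbook analysis of the Laplace mechanism, but carried out only on the ``good'' set of random strings on which the $(\ell,\delta)$-sensitivity guarantee holds, and then to pay an additive price for the remaining bad strings. First I would fix an arbitrary pair of neighboring databases $X,X'\in\mathcal{X}$ and set $R^\ast = R_X\cap R_{X'}$, where $R_X,R_{X'}$ are the subsets of random strings promised by Definition~\ref{def:generalsensitive:app}. Since $|R_X|>(1-\delta)|R|$ and $|R_{X'}|>(1-\delta)|R|$, a union bound gives $|R^\ast|>(1-2\delta)|R|$, and for every $r\in R^\ast$ we have $|\mathcal{A}(X,r)-\mathcal{A}(X',r)|\le \ell$.

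Next, for each fixed $r\in R^\ast$ I would treat $\mathcal{A}(X,r)$ and $\mathcal{A}(X',r)$ as constants and run the standard pointwise computation: writing $b=\ell/\eps$, the ratio of Laplace densities at any point $z$ is bounded by $\exp(|\mathcal{A}(X,r)-\mathcal{A}(X',r)|/b)\le e^\eps$ using the triangle inequality $|u|-|v|\le|u-v|$. Integrating over any measurable $S\subseteq\R$ yields, for every $r\in R^\ast$,
\[
\Pr_{Z\sim\text{Lap}(0,b)}[\mathcal{A}(X,r)+Z\in S]\ \le\ e^\eps\cdot\Pr_{Z\sim\text{Lap}(0,b)}[\mathcal{A}(X',r)+Z\in S],
\]
and since this holds for every $r\in R^\ast$ it also holds when $r$ is drawn uniformly from $R^\ast$. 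Finally I would pass from conditioning on $R^\ast$ to the full uniform distribution over $R$ via the law of total probability: decomposing $\Pr_{Z,r\sim R}[\cdot]$ over $r\in R^\ast$ and $r\notin R^\ast$ (the latter event having probability $<2\delta$) gives on one side $\Pr_{Z,r\sim R}[\mathcal{A}(X,r)+Z\in S]\le \Pr_{Z,r\sim R^\ast}[\mathcal{A}(X,r)+Z\in S]+2\delta$, and on the other side $\Pr_{Z,r\sim R^\ast}[\mathcal{A}(X',r)+Z\in S]\le \Pr_{Z,r\sim R}[\mathcal{A}(X',r)+Z\in S]+2\delta$ (using $\Pr[r\in R^\ast]>1-2\delta$ and that probabilities are at most $1$). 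Chaining these with the displayed inequality yields
\[
\Pr_{Z,r\sim R}[\mathcal{A}(X,r)+Z\in S]\ \le\ e^\eps\big(\Pr_{Z,r\sim R}[\mathcal{A}(X',r)+Z\in S]+2\delta\big)+2\delta\ =\ e^\eps\cdot\Pr[\overline{\mathcal{A}}(X')\in S]+2(1+e^\eps)\delta,
\]
which is exactly $(\eps,2(1+e^\eps)\delta)$-differential privacy since $\Pr[\overline{\mathcal{A}}(X)\in S]=\Pr_{Z,r\sim R}[\mathcal{A}(X,r)+Z\in S]$ by definition of $\overline{\mathcal{A}}$.

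The pointwise Laplace-ratio step is entirely routine; I expect the only delicate part to be the bookkeeping in the final conversion, namely making sure the $2\delta$ slack is introduced in the direction that \emph{weakens} the inequality (an upper bound on the $X$-probability and a lower bound on the $X'$-probability), and tracking that the slack on the $X'$ side gets multiplied by $e^\eps$ — this is precisely what produces the constant $2(1+e^\eps)\delta$ rather than a naive $2\delta$.
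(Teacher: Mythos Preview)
Your proposal is correct and follows essentially the same approach as the paper: define $R^\ast=R_X\cap R_{X'}$, run the standard pointwise Laplace-ratio argument for each $r\in R^\ast$, then use the law of total probability to absorb the $\le 2\delta$ mass of bad strings as additive slack on both sides, yielding the $2(1+e^\eps)\delta$ constant. The bookkeeping you flag as delicate (upper-bounding the $X$ side, lower-bounding the $X'$ side so the $2\delta$ gets multiplied by $e^\eps$) is exactly how the paper handles it as well.
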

\begin{proof}
Fix any neighboring datasets $X,X' \in \mathcal{X}$. Let $R^* = R_X \cap R_{X'}$ where $R_X,R_{X'}$ are in Definition \ref{def:generalsensitive:app}. Since$|R_X| > (1-\delta)|R|$ and $|R_{X'}| > (1-\delta)|R|$, we have $|R_X \cap R_{X'}| > (1-2\delta)|R|$. Now fix any $r \in R^*$. By Definition \ref{def:generalsensitive:app}, we know that 
$|\mathcal{A}(X,r) - \mathcal{A}(X',r)| < \ell$.

From here, we follow the standard proof of correctness of the Laplacian mechanism by bounding the ratio 
\begin{align*}
\frac{\Pr[\mathcal{A}(X,r)  + \text{Lap}(0,\frac{\ell}{\eps} ) = z]}{\Pr[\mathcal{A}(X',r)  + \text{Lap}(0,\frac{\ell}{\eps} ) = z]}
\end{align*}
for any $z \in \R$.

In what follows, set $b = \frac{\ell}{\eps}$
\begin{align*}
        & ~\frac{\Pr[\mathcal{A}(X,r)  + \text{Lap}(0,b)= z]}{\Pr[\mathcal{A}(X',r)  + \text{Lap}(0,b)= z]} \\ 
        = & ~ \frac{\Pr[\text{Lap}(0,b)= z - \mathcal{A}(X,r) ]}{\Pr[  \text{Lap}(0,b)= z - \mathcal{A}(X',r) ]}\\
        = & ~ \frac{ \frac{1}{2b}\exp(- {|z-\mathcal{A}(X,r)|}/{b} )}{\frac{1}{2b}\exp(-{|z-\mathcal{A}(X',r)|}/{b}) }\\
        = & ~ \exp\left( ( |z-\mathcal{A}(X',r)| - |z-\mathcal{A}(X,r)| )  / b \right) \\
        \leq & ~ \exp\left(  |\mathcal{A}(X,r) -\mathcal{A}(X',r)| / b \right) \\
        \leq & ~ \exp( \ell / b ) \\
        \leq & ~ e^\eps ,
\end{align*}
where the forth step follows from triangle inequality $|x|-|y|\leq |x-y|$, the last step follows from $\ell/b=\eps$.

 It follows that for any set $S \subset \R$ and any $r \in R^*$, we have 
 \begin{align*}
 &~\Pr\left[\mathcal{A}(X,r)+ \text{Lap}(0,b) \in S\right]  \\
 \leq&~ e^\eps \cdot \Pr\left[\mathcal{A}(X',r) + \text{Lap}(0,b)\in S\right] ,
 \end{align*}
 where the randomness is taken over the generation of the Laplacian random variable $\text{Lap}(0,b)$. Since this holds for all $r \in R^*$,
 in particular it holds for a random choice of $r \in R^*$, thus we have
 \begin{align}\label{eq:dp_distinct_eq_1:app}
      & ~ \Pr_{Z \sim \text{Lap}(0,b), r \sim R^*} \left[\mathcal{A}(X,r)+ Z \in S\right] \notag \\
      & ~ \leq e^\eps \cdot   \Pr_{Z \sim \text{Lap}(0,b), r \sim R^*} \left[\mathcal{A}(X',r) + Z\in S\right]  
\end{align}
Now since $|R^*| \geq (1-2\delta)|R|$, by the law of total probability we have 
\begin{align}\label{eq:dp_distinct_eq_2:app}
     & \Pr_{Z \sim \text{Lap}(0,b), r \sim R}  \left[\mathcal{A}(X,r)+ Z \in S\right]  \notag \\
     = &  \Pr_{Z \sim \text{Lap}(0,b), r \sim R^*} \left[\mathcal{A}(X,r)+ Z \in S\right] \cdot \Pr[ r \in R^* ] \notag \\ 
     &+  \Pr_{Z \sim \text{Lap}(0,b), r \sim R\setminus R^*} \left[\mathcal{A}(X,r)+ Z \in S\right] \cdot \Pr [ r \notin R^* ]  \notag \\
    < &   \Pr_{Z \sim \text{Lap}(0,b), r \sim R^*} \left[\mathcal{A}(X,r)+ Z \in S\right] \notag \\ 
     &+  \Pr_{Z \sim \text{Lap}(0,b), r \sim R\setminus R^*} \left[\mathcal{A}(X,r)+ Z \in S\right] \cdot 2\delta  \notag  \\
    \leq &  \Pr_{Z \sim \text{Lap}(0,b), r \sim R^*} \left[\mathcal{A}(X,r)+ Z \in S\right] + 2\delta 
\end{align}
Similarly, it follows that 
\begin{align}\label{eq:dp_distinct_eq_3:app}
     & ~ \Pr_{Z \sim \text{Lap}(0,b), r \sim R}  \left[\mathcal{A}(X',r)+ Z \in S\right] \notag \\     
      > & ~  \Pr_{Z \sim \text{Lap}(0,b), r \sim R^*} \left[\mathcal{A}(X',r)+ Z \in S\right] (1-2\delta) \notag \\ 
      \geq & ~ \Pr_{Z \sim \text{Lap}(0,b), r \sim R^*} \left[\mathcal{A}(X',r)+ Z \in S\right] - 2\delta 
\end{align}
where the last step follows from probability $\Pr[] \leq 1$.

Combining Eq.~\eqref{eq:dp_distinct_eq_1:app}, \eqref{eq:dp_distinct_eq_2:app} and \eqref{eq:dp_distinct_eq_3:app}, we have 
    \begin{align*}
     & ~ \Pr_{Z \sim \text{Lap}(0,b), r \sim R}  \left[\mathcal{A}(X,r)+ Z \in S\right] \\
     \leq & ~ \Pr_{Z \sim \text{Lap}(0,b), r \sim R^*} \left[\mathcal{A}(X,r)+ Z \in S\right] + 2\delta  \\
     \leq & ~ e^\eps \cdot   \Pr_{Z \sim \text{Lap}(0,b), r \sim R^*} \left[\mathcal{A}(X',r) + Z\in S\right]   + 2\delta \\
     \leq & ~ e^\eps \cdot \Big( \Pr_{Z \sim \text{Lap}(0,b), r \sim R}  \left[\mathcal{A}(X',r)+ Z \in S\right] + 2\delta \Big)  + 2\delta 
    \end{align*}
where the first step follows from Eq.~\eqref{eq:dp_distinct_eq_2:app}, the second step follows Eq.~\eqref{eq:dp_distinct_eq_1:app}, and the last step follows from Eq.~\eqref{eq:dp_distinct_eq_3:app}.

    Now recall that for the actual laplacian mechanism algorithm $\overline{\mathcal{A}}$, for any database $X$ we have
    \begin{align*}
        \Pr [ \overline{\mathcal{A}}(X) \in S ] =  \Pr_{Z \sim \text{Lap}(0,b), r \sim R}  \left[\mathcal{A}(X,r)+ Z \in S\right],
    \end{align*} which complets the proof that $\overline{\mathcal{A}}$ is $(\eps,2(1+e^\eps) \delta )$-differentially private. 
    
\end{proof}

\subsection{Analysis of Distinct Count}\label{sec:analysis_distinct_count:app}

In this section, we thoroughly analyze the properties of Distinct Count~\cite{bjkst02}. We first describe the algorithm in Algorithm~\ref{alg:count:app}. Then we prove its $(\ell, \delta)$-sensitivity and a tighter $(\eps, \delta)$-approximation result (compared with the approximation result in~\cite{bjkst02}).

\begin{algorithm}[!ht]\caption{Distinct Count~\cite{bjkst02}}\label{alg:count:app}
\begin{algorithmic}[1]
\Procedure{\textsc{DistinctCount}}{$I, t$} \Comment{Lemma~\ref{lem:dc-sensitivity:app}}
    \State $d \gets \emptyset$ \Comment{$d$ is a priority-queue of size $t$}
    \For{$x_i \in I$}
        \State $y \gets h(x_i) $ \Comment{$h$: $[m] \rightarrow [0,1]$, is a PRF}
        \If{$|d| < t$} 
        \State $d.\code{push}(y)$
        \ElsIf{$y < d.\code{top}() \; \land \; y \notin d$} 
        \State $d.\code{pop}()$
        \State $d.\code{push}(y)$
        \EndIf
    \EndFor
    \State $v \gets d.\code{top}()$
    \State \Return $t/v$
\EndProcedure
\end{algorithmic}
\end{algorithm}

\noindent\textbf{Sensitivity of distinct count}

\begin{lemma}[Sensitivity of DistinctCount]\label{lem:dc-sensitivity:app}
Assume $r \in R$ is the source of randomness 
of the PRF in DistinctCount (Algorithm~\ref{alg:count:app}), 
where $R \in \{0, 1\}^m$, $n$ is the number of 
distinct element of the input,
for any $16 < t < n/2$, DistinctCount is $(20\log(4/\delta) \frac{n}{t}, \delta)$-sensitive.
\end{lemma}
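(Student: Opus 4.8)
The plan is to reduce the claimed sensitivity bound to the order-statistics estimates already proved in Lemma~\ref{lem:main:app}. Fix the random string $r$ and model the PRF $h$ (seeded by $r$) as a random function, so that the hashes of the distinct keys behave as i.i.d.\ uniform samples on $[0,1]$; this is the idealization under which the order-statistics analysis and Lemma~\ref{lem:main:app} are phrased, and the statement for a genuine PRF follows by PRF security. Write $F(X,r)$ for the value returned by $\textsc{DistinctCount}(X,t)$ (Algorithm~\ref{alg:count:app}); if $X$ has $n$ distinct values whose hashes have order statistics $y_1 \le y_2 \le \cdots \le y_n$, then $F(X,r) = t/y_t$.

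The core step is to control how $F$ moves to a neighbor $X'$. Since $X$ and $X'$ have the same length and differ in exactly one record — say that record has value $a$ in $X$ and value $b$ in $X'$ — the multiset of values of $X'$ equals that of $X$ with one copy of $a$ deleted and one copy of $b$ inserted, so the set of distinct values changes by removing $a$ (only if $a$ occurred once in $X$) and adding $b$ (only if $b$ was not already present). A short case analysis on these two possibilities shows that the $t$-th smallest hash of $X'$, call it $y'_t$, always satisfies $y_{t-1} \le y'_t \le y_{t+1}$: deleting one element advances the $t$-th order statistic by at most one rank (from $y_t$ to at most $y_{t+1}$), while inserting one element moves it back by at most one rank (from $y_t$ to at least $y_{t-1}$). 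Hence
\[
|F(X,r) - F(X',r)| = \Big| \frac{t}{y_t} - \frac{t}{y'_t} \Big| \le \max\Big\{ \Big| \frac{t}{y_t} - \frac{t}{y_{t-1}} \Big| ,\ \Big| \frac{t}{y_t} - \frac{t}{y_{t+1}} \Big| \Big\},
\]
and, importantly, the right-hand side depends only on the order statistics of $X$, so a single event over $r$ will serve uniformly for every neighbor $X'$.

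It then remains to bound the two gaps on the right-hand side. I would apply Lemma~\ref{lem:main:app}(2) with $\beta = 1/2$ and $\alpha = 4\log(4/\delta)$: once directly, giving $|1/y_t - 1/y_{t+1}| < 4\alpha n/t^2$ and hence $|t/y_t - t/y_{t+1}| < 4\alpha n/t$; and once with the index shifted to $t-1$, giving $|1/y_{t-1} - 1/y_t| < 4\alpha n/(t-1)^2$, hence, using $t > 16$ so that $(t-1)^2 > 0.8\,t^2$, $|t/y_t - t/y_{t-1}| < 5\alpha n/t$. Using $t > 16$ and $\alpha < t/4$ to dominate the $e^{-t/6}$ and $e^{-(t-1)/6}$ terms by $e^{-\alpha/4}$, each of these two bounds fails with probability at most $2e^{-\alpha/4} = \delta/2$, so a union bound makes the total failure probability at most $\delta$. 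Letting $R_X$ be the set of $r$ on which both gap bounds hold, we get $|R_X| \ge (1-\delta)|R|$ and $|F(X,r) - F(X',r)| \le 5\alpha n/t = 20\log(4/\delta)\, n/t$ for every $r \in R_X$ and every neighbor $X'$, which is precisely $(20\log(4/\delta)\, n/t ,\, \delta)$-sensitivity in the sense of Definition~\ref{def:generalsensitive:app}.

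I expect the main obstacle to be the reduction in the second paragraph: establishing cleanly that a single-record change perturbs the relevant order statistic by at most one rank in either direction, and hence that $y'_t$ lies in the window $[y_{t-1}, y_{t+1}]$ in every case (including when the old value $a$ or the new value $b$ is, or is not, already represented by another record). Everything after that is bookkeeping — checking the parameter constraints $4 < \alpha < (t-1)/4$ needed to invoke Lemma~\ref{lem:main:app} (which is where the hypothesis $t > 16$, together implicitly with $t = \Omega(\log(1/\delta))$, is used) and the elementary inequality $(t-1)^2 > 0.8\,t^2$.
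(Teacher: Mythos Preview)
Your proposal is correct and follows essentially the same route as the paper: bound the change in $t/y_t$ by the two adjacent-rank gaps, then apply Lemma~\ref{lem:main:app}(2) with $\alpha = 4\log(4/\delta)$ to each gap (using $(t-1)^2 > 0.8\,t^2$ for the $t-1$ case) and union-bound. If anything, your write-up is more careful than the paper's, which simply asserts the inequality $|F(X,r)-F(X',r)| \le \max\{|t/y_t - t/y_{t-1}|,\,|t/y_t - t/y_{t+1}|\}$ without the neighbor case analysis you sketch.
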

\begin{proof}
We denote DistinctCount (Algorithm~\ref{alg:count:app}) 
$F: \mathcal{X} \times R \to \R$, and define the same $y_t$ as Section~\ref{sec:tech_lemma}.
Thus, for two neighboring database $X, X' \in \mathcal{X}$ ($\|X\|_0 = n$):

\begin{align*}
    |F(X,r) - F(X',r)| \leq \max \Big\{ \big| \frac{t}{y_t} - \frac{t}{y_{t-1}} \big|,\big| \frac{t}{y_t} - \frac{t}{y_{t+1}} \big| \Big\} .
\end{align*}

{\bf Part 1.}
From second inequality of Lemma~\ref{lem:main} 
(the case $\beta = 1/2$), 
for any $5 < t\leq n/2$ and $4 < \alpha < t/4 $, we have: 
\begin{align*}
    \Pr\left[ \Big| \frac{1}{y_t} - \frac{1}{y_{t+1}} \Big| \leq 4 \alpha \frac{n}{t^2} \right] \geq 1- \exp( - t / 6 ) - \exp( - \alpha / 4 ) 
\end{align*}
It follows that 
\begin{align} \label{eq:dpcount_lem1_1:app}
    & ~ \Pr\left[ \Big| \frac{t}{y_t} - \frac{t}{y_{t+1}} \Big| \leq 5 \alpha \frac{n}{t} \right] \notag \\
    >
    & ~ \Pr\left[ \Big| \frac{t}{y_t} - \frac{t}{y_{t+1}} \Big| \leq 4 \alpha \frac{n}{t} \right] \notag \\  
    = & ~ \Pr\left[ \Big| \frac{1}{y_t} - \frac{1}{y_{t+1}} \Big| \leq 4\alpha \frac{n}{t^2} \right] \notag \\
    \geq & ~ 1- \exp( - t / 6 ) - \exp( - \alpha / 4 ) \notag \\
    \geq & ~ 1- \exp( - (t/4) \cdot (2/3)  ) - \exp( - \alpha / 4 ) \notag \\
    \geq & ~ 1-\exp(- 2 \alpha / 3) - \exp(- \alpha / 4) \notag \\
    \geq & ~ 1 - 2 \exp(-\alpha / 4)
\end{align}
Set $\alpha = 4 \log(4/\delta) $ in Eq.~\eqref{eq:dpcount_lem1_1:app}:
\begin{align*}
    \Pr\left[ \Big| \frac{t}{y_t} - \frac{t}{y_{t+1}} \Big| \leq 20 \log(4/\delta) \frac{n}{t} \right] \geq 1 - \delta/2
\end{align*}

{\bf Part 2.}
Similarly, from the the second inequality of Lemma~\ref{lem:main} (the case $\beta = 1/2$), 
for any $10 < t\leq n/2$ and $4 < \alpha < t/4 $, we have: 
\begin{align*}
    & ~ \Pr\left[ \Big| \frac{1}{y_{t-1}} - \frac{1}{y_t} \Big| \leq 4 \alpha \frac{n}{(t-1)^2} \right] \\
    \geq & ~ 1- \exp( - (t - 1) / 6 ) - \exp ( - \alpha / 4 ) 
\end{align*}
From $t > 16 $, we know $ 0.8t^2 < (t-1)^2$ and $t-1 > 0.75t > 0$. Thus:
\begin{align}\label{eq:dpcount_lem1_2:app}
    & ~ \Pr \left[ \Big| \frac{t}{y_t} - \frac{t}{y_{t-1}} \Big| \leq 5 \alpha \frac{n }{t} \right] \notag \\
    = & ~
    \Pr \left[ \Big| \frac{t}{y_t} - \frac{t}{y_{t-1}} \Big| \leq 4 \alpha \frac{nt}{0.8t^2} \right] \notag \\
    > & ~ \Pr \left[ \Big| \frac{t}{y_t} - \frac{t}{y_{t-1}} \Big| \leq 4 \alpha \frac{nt}{(t-1)^2} \right] \notag\\
    = & ~ \Pr \left[ \Big| \frac{1}{y_{t-1}} - \frac{1}{y_{t}} \Big| 
    \leq 4 \alpha \frac{n}{(t-1)^2} \right] \notag \\
    \geq & ~ 1- \exp( - (t-1) / 6 ) - \exp( - \alpha / 4) \notag \\
    \geq & ~ 1 - \exp( - 0.75t / 6 )  - \exp( - \alpha / 4)  \notag \\
    \geq & ~ 1 - \exp( - \alpha / 2 )  - \exp( - \alpha / 4)  \notag \\
    \geq & ~ 1 - 2 \exp( - \alpha / 4) 
\end{align}
Set $\alpha = 4 \log(4/\delta) $ in Eq.~\eqref{eq:dpcount_lem1_2:app}:
\begin{align*}
    \Pr\left[ \Big| \frac{t}{y_t} - \frac{t}{y_{t+1}} \Big| \leq 20 \log(4/\delta) \cdot \frac{n}{t} \right] \geq 1 - \delta/2
\end{align*}

{\bf Part 3.}
Now apply union bound combining the results of 
{\bf Part 1.} and {\bf Part 2.}. 
Hence, for any $X$, $0 < \delta < 1$, $16 < t < n/2 $:
\begin{align*}
\Pr \left[ |F(X,r) - F(X',r)| \leq 20 \log(4/\delta) \cdot \frac{n}{t} \right] \leq 1 - \delta  .   
\end{align*}
Now, we proved the sensitivity of Algorithm~\ref{alg:count:app}.
\end{proof}

\noindent\textbf{Lemma for approximation guarantees}

\begin{lemma}[Restatement of Lemma~\ref{lem:approximation}]\label{lem:approximation:app}
Let $x_1,x_2,\dots,x_n \sim [0,1]$ be uniform random variables, and let $y_1,y_2,\dots,y_n$ be their order statistics; namely, $y_i$ is the $i$-th smallest value in $\{x_j\}_{j=1}^n$. Fix $\eta \in (0,1/2),\delta \in (0, 1/2)$. Then if $t > 3(1+\eta) \eta^{-2} \log(2/\delta)$, with probability $1-\delta$ we have 
\begin{align*}
    (1-\eta) \cdot n \leq \frac{t}{y_t} \leq (1+\eta) \cdot n .
\end{align*}     
\end{lemma}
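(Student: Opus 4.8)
The plan is to relate the target event $(1-\eta)n \le t/y_t \le (1+\eta)n$ to the number of samples landing in two nested intervals near the origin, and then control those counts with Chernoff bounds.

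First I would define $I_1 = [0, \frac{t}{n(1+\eta)}]$ and $I_2 = [0, \frac{t}{n(1-\eta)}]$, and set $Z = |\{i \in [n] : x_i \in I_1\}|$ and $Z' = |\{i \in [n] : x_i \in I_2\}|$. Since each $x_i$ is uniform on $[0,1]$, $Z$ and $Z'$ are sums of $n$ i.i.d.\ Bernoulli variables with $\ex{Z} = t/(1+\eta)$ and $\ex{Z'} = t/(1-\eta)$. The key observation is purely combinatorial: the $t$-th smallest sample $y_t$ lies strictly to the right of $I_1$ exactly when fewer than $t$ samples fall in $I_1$, i.e.\ when $Z < t$; and $y_t$ lies inside $I_2$ exactly when at least $t$ samples fall in $I_2$, i.e.\ when $Z' \ge t$. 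Hence on the event $\{Z < t\} \cap \{Z' \ge t\}$ we get $\frac{t}{n(1+\eta)} < y_t \le \frac{t}{n(1-\eta)}$, which rearranges to exactly $(1-\eta) n \le t/y_t \le (1+\eta)n$.

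Next I would bound the two bad events. For the upper tail I would apply the multiplicative Chernoff bound with relative deviation $\eta$ about the mean $t/(1+\eta)$, yielding $\Pr[Z \ge t] \le \exp\!\big(-\frac{\eta^2 t}{3(1+\eta)}\big)$; this is at most $\delta/2$ precisely because of the hypothesis $t > 3(1+\eta)\eta^{-2}\log(2/\delta)$. For the lower tail I would apply the lower Chernoff bound about the mean $t/(1-\eta)$, giving $\Pr[Z' \le t] \le \exp(-\eta^2 t/2) \le \delta/2$, using $t > 3(1+\eta)\eta^{-2}\log(2/\delta) \ge 2\eta^{-2}\log(2/\delta)$. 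A union bound then shows $\{Z < t\} \cap \{Z' \ge t\}$ holds with probability at least $1-\delta$, which together with the previous paragraph completes the proof.

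There is no deep obstacle; this is a routine two-sided Chernoff argument. The only thing requiring care is picking the Chernoff deviation parameters so the stated threshold on $t$ falls out cleanly — in particular keeping the weaker $\exp\!\big(-\frac{\eta^2 t}{3(1+\eta)}\big)$ form of the upper tail (rather than $\exp(-\eta^2 t/3)$) so that the $(1+\eta)$ factor in the hypothesis is matched exactly — and making sure the translation between "number of samples below a threshold" and "position of the $t$-th order statistic" is stated with the inequalities pointing the correct way.
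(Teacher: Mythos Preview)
Your proposal is correct and is essentially identical to the paper's own proof: the paper defines the same two intervals $I_1,I_2$, applies the same multiplicative Chernoff bounds $\exp(-\eta^2 t/(3(1+\eta)))$ and $\exp(-\eta^2 t/2)$ to $Z$ and $Z'$, and union-bounds. Your write-up is in fact slightly more careful than the paper's in spelling out the order-statistic $\Leftrightarrow$ count equivalence and in tracking strict versus non-strict inequalities.
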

\begin{proof}
We define $I_1$ and $I_2$ as follows
\begin{align*}
    I_1 =  ~ [0,\frac{t}{n(1+\eta)} ], ~~~~
    I_2 =  ~ [0,\frac{t}{n(1-\eta)} ].
\end{align*}
First note that if $x \sim [0,1]$, $\Pr[x \in I_1] = \frac{t}{n(1+\eta)}$. Since we have $n$ independent trials, setting $Z = |\{x_i : i \in I_1\}|$ we have $\ex{Z} = \frac{t}{(1+\eta)}$. 

Then by the upper Chernoff bound, we have  
\begin{align*}
    \Pr[ Z > t ] \leq \exp\left(   -\frac{\eta^2 t}{3(1+\eta)} \right) \leq 1- \delta /2 .
\end{align*} 
Similarly, setting $Z' = |\{x_i : i \in I_2\}|$, we have $\ex{Z'} = \frac{t}{(1-\eta)}$, so by the lower Chernoff bound, we have 
\begin{align*}
\Pr[ Z' < t ] \leq \exp ( - \eta^2 t / 2 )  \leq 1 - \delta / 2.
\end{align*}
Thus by a union bound, we have both that $Z < t$ and $Z' > t$ with probability $1 - \delta$. Conditioned on these two events, it follows that $y_t \notin I_1$ but $y_t \in I_2$, which implies that $\frac{t}{n(1+\eta) }< y_t < \frac{t}{n(1-\eta)}$, and so we have 
\begin{align*}
  (1-\eta) n< \frac{t}{y_t} < (1+\eta)n  
\end{align*}
as desired.

\end{proof}

\subsection{Differentially Private Distinct Count}\label{sec:dpdc:app}

\begin{algorithm}\caption{\textsc{DPDistinctCount}: Differentially Private Distinct Count}\label{alg:dpcount:app}
\begin{algorithmic}[1]
\Procedure{\textsc{DPDistinctCount}}{$I, \eps, \eta, \delta$} \Comment{Theorem~\ref{thm:main:app}}
    \State $\textsc{pQueue} \gets \emptyset$ \Comment{$\textsc{pQueue}$ is a priority-queue of size $t$}
    \Comment{ $t \geq \max\big(3(1+ {\eta}/{4})({\eta}/{4})^{-2} \cdot \log({6}/{\delta}),\; 20 \eps^{-1} ({\eta}/{4})^{-1} \cdot \log({24(1+e^{-\eps})}/{\delta}) \cdot \log({3}/{\delta}) )$}
    \For{$x_i \in I$}
        \State $y \gets h(x_i) $ \Comment{$h$: $[m] \rightarrow [0,1]$, is a PRF}
        \If{$|\textsc{pQueue}| < t$} 
        \State $\textsc{pQueue}.\code{push}(y)$
        \ElsIf{$y < \textsc{pQueue}.\code{top}() \; \land y \notin \textsc{pQueue}$} 
        \State $\textsc{pQueue}.\code{pop}()$
        \State $\textsc{pQueue}.\code{push}(y)$
        \EndIf
    \EndFor
    \State $v \gets \textsc{pQueue}.\code{top}()$
    \State $\vldbrevision{\tilde{G}} \leftarrow (1+\frac{3}{4} \eta ) \frac{t}{v} + \text{Lap}(20 \eps^{-1}\frac{n}{t}\log(24(1+e^{-\eps})/\delta))$
    \State \Return \vldbrevision{$\tilde{G}$}
\EndProcedure
\end{algorithmic}
\end{algorithm}

\begin{theorem}[main result. Restatement of Theorem~\ref{thm:main}]\label{thm:main:app}
For any $0 < \eps < 1$, $0 < \eta < 1/2$, $0 < \delta < 1/2$, there is an distinct count algorithm (Algorithm~\ref{alg:dpcount:app}) such that:
\begin{enumerate}
    \item The algorithm is $(\eps, \delta)$-differentially private.
    \item  With probability at least $1 - \delta$, the estimated distinct count $\vldbrevision{\tilde{G}}$ satisfies: 
    \begin{align*}
        n \leq \vldbrevision{\tilde{G}} \leq (1+\eta) \cdot n,
    \end{align*}
    where $n$ is the number of distinct elements in the data stream.
\end{enumerate}
The space used by the distinct count algorithm is \begin{align*}
O\Big( ( \eta^{-2} + \eps^{-1} \eta^{-1}  \log(1/\delta)) \cdot \log(1/\delta) \cdot \log n \Big)
\end{align*}
bits.
\end{theorem}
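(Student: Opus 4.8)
The plan is to view Algorithm~\ref{alg:dpcount:app} as a two-stage pipeline: stage one is the unmodified \textsc{DistinctCount} routine (Algorithm~\ref{alg:count:app}) run with the stated $t$, producing the raw estimate $\tilde{F}_0 = t/v = t/y_t$; stage two is the post-processing $\tilde{G} = (1+\tfrac34\eta)\tilde{F}_0 + \text{Lap}(\ell/\eps)$ with $\ell = 20\,\tfrac nt\log(24(1+e^{-\eps})/\delta)$. The privacy guarantee will then follow purely from the sensitivity of \textsc{DistinctCount} combined with the relaxed-sensitivity Laplace mechanism, and the accuracy guarantee from the concentration of $\tilde{F}_0$ plus a Laplace tail bound; the deliberate multiplicative inflation $(1+\tfrac34\eta)$ is included precisely so that the downward approximation error and the noise can never push $\tilde{G}$ below $n$. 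I would split the failure probability as $\delta/3$ for each of: a privacy failure, a too-inaccurate $\tilde{F}_0$, and a too-large Laplace draw.

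\textbf{Privacy.}
First I would apply Lemma~\ref{lem:dc-sensitivity:app}: for $16 < t < n/2$, \textsc{DistinctCount} is $\bigl(20\log(4/\delta')\tfrac nt,\,\delta'\bigr)$-sensitive for any $\delta'\in(0,1)$. Taking $\delta' = \delta/(6(1+e^{-\eps}))$ rewrites this as $(\ell,\,\delta/(6(1+e^{-\eps})))$-sensitivity with exactly the $\ell$ used in the algorithm. Feeding this into Lemma~\ref{lem:generaldp:app}, whose mechanism adds precisely $\text{Lap}(0,\ell/\eps)$, yields $\bigl(\eps,\,2(1+e^\eps)\cdot\delta/(6(1+e^{-\eps}))\bigr)$-differential privacy; using $1+e^\eps=e^\eps(1+e^{-\eps})$ the additive term equals $e^\eps\delta/3$, which is at most $\delta$ (indeed at most $\delta/3$ up to the factor $e^\eps<e<3$), so stage two is $(\eps,\delta)$-differentially private. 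Since stage one uses no data-dependent randomness beyond the PRF, post-processing invariance gives the first claim. A bookkeeping point to check is that the algorithm's $t$ lies in $(16,n/2)$: the lower bound is immediate from the definition of $t$, and $t<n/2$ is exactly the regime in which the statement is meaningful (for $n$ comparable to $t$ one simply returns the exact count).

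\textbf{Accuracy.}
Next I would condition on the two events
\[
\text{(A)}\quad (1-\tfrac\eta4)n \le \tilde{F}_0 \le (1+\tfrac\eta4)n,
\qquad
\text{(B)}\quad |\text{Lap}(\ell/\eps)| \le \tfrac\eta4\,n .
\]
Event (A) is Lemma~\ref{lem:approximation:app} applied with approximation parameter $\eta/4$ and failure probability $\delta/3$, which is exactly what the first branch of the lower bound $t \ge 3(1+\tfrac\eta4)(\tfrac\eta4)^{-2}\log(6/\delta)$ buys. Event (B) follows from the Laplace tail bound of Fact~\ref{fact:lap1} ($|\text{Lap}(b)|\le b\ln(3/\delta)$ with probability $1-\delta/3$): the second branch of the lower bound on $t$ forces the scale $\ell/\eps \le \eta n/(4\log(3/\delta))$, hence $b\ln(3/\delta)\le\tfrac\eta4 n$. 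On (A)$\cap$(B) I would expand $\tilde{G}=(1+\tfrac34\eta)\tilde{F}_0+\text{Lap}(\ell/\eps)$ and compare $(1\pm\tfrac34\eta)(1\pm\tfrac\eta4)n\mp\tfrac\eta4 n$ against $n$ and $(1+\eta)n$; for $0<\eta<1/2$ these elementary inequalities give $n\le\tilde{G}\le(1+\eta)n$. A union bound over (A), (B) and the privacy-failure event bounds the total failure probability by $\delta$.

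\textbf{Space, and the main obstacle.}
The space bound is the easiest part: \textsc{DPDistinctCount} only stores the identities of the $t$ smallest hashes, costing $O(t\log n)$ bits, and plugging in $t=\Theta\bigl((\eta^{-2}+\eps^{-1}\eta^{-1}\log(1/\delta))\log(1/\delta)\bigr)$ gives the claimed bound. The subtle part of the whole argument is the privacy step: $t/y_t$ has no finite worst-case sensitivity, so one cannot simply calibrate Laplace noise to an $\ell$-sensitivity bound, and the entire point is the $(\ell,\delta)$-relaxed sensitivity notion together with Lemma~\ref{lem:generaldp:app}. The constants are engineered so that the $\delta$ lost in the relaxed sensitivity bound, after being inflated by the $2(1+e^\eps)$ factor of that lemma, still fits in the budget --- which is exactly why the sensitivity failure probability is rescaled by $1/(6(1+e^{-\eps}))$ and why the noise scale carries the $\log(24(1+e^{-\eps})/\delta)$ factor. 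I expect the remaining work --- verifying those cancellations and checking that the two side conditions on $t$ (namely $16<t<n/2$ and $t\ge 3(1+\tfrac\eta4)(\tfrac\eta4)^{-2}\log(6/\delta)$) are simultaneously consistent with the algorithm's choice of $t$ --- to be routine.
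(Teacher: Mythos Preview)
Your proposal is essentially the paper's own proof: decompose the output as $(1+\tfrac34\eta)\tilde F_0 + \text{Lap}(\ell/\eps)$, obtain privacy by rescaling the $(\ell,\delta)$-sensitivity failure probability to $\delta/(6(1+e^{-\eps}))$ and invoking Lemma~\ref{lem:generaldp:app}, obtain accuracy by combining Lemma~\ref{lem:approximation:app} at level $\eta/4$ with a Laplace tail bound, and read off space from $O(t\log n)$. Your arithmetic on the additive privacy term ($e^\eps\delta/3$ rather than the paper's stated $\delta/3$) is in fact slightly more careful than the paper, and your explicit note about the side condition $16<t<n/2$ is a point the paper leaves implicit.
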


\begin{proof}
Let $\tilde{F}_0$ be the result output by the original algorithm (Algorithm~\ref{alg:count:app}) with the same $t$.
Our differentially private distinct count algorithm (Algorithm~\ref{alg:dpcount:app}) essentially output $\tilde{A} = (1+ \frac{3}{4}\eta)\tilde{F}_0 + \text{Lap}(\ell/\eps)$, where: 
\begin{align*}
    \ell = & ~ 20 \frac{n}{t} \log(24(1+e^{-\eps})/\delta)  \\
 t = & ~ \max \Big\{ 3(1+{\eta}/{4})({\eta}/{4})^{-2}\log(6/\delta),\;  \\ 
 & \quad \quad \quad  20  \eps^{-1} ({\eta}/{4})^{-1} \cdot \log(24(1+e^{-\eps})/\delta) \cdot \log(3/\delta) \Big\}
\end{align*}

From Lemma \ref{lem:dc-sensitivity:app}, we know that 
for any $16 < t < n/2$,
the original distinct count algorithm (Algorithm~\ref{alg:count:app})
\begin{align*}
\Big( 20\log(4/\delta) \cdot \frac{n}{t} ~~,~~ \delta \Big)-\text{sensitive}.
\end{align*}

After rescaling $\delta$ by a constant factor, it can be rewritten as 
\begin{align*}
    \Big( 20\log(24(1+e^{-\eps})/\delta) \cdot \frac{n}{t} ~~,~~  \frac{\delta}{6(1+e^{-\eps})} \Big)-\text{sensitive}.
\end{align*}
Then, from Lemma \ref{lem:generaldp:app}, we know that it becomes 
$(\eps, \delta/3)$-DP 
by adding $\text{Lap} (\ell/\eps)$ noise when outputting the estimated count, where $\ell$ is as defined above, which completes the proof of the first part of the Theorem.

Next, from Lemma~\ref{lem:approximation:app} and the fact that $t \geq 3(1+{\eta}/{4}) \cdot ({\eta}/{4})^{-2} \cdot \log(6/\delta)$, 
we know that $(1 - \frac{\eta}{4}) n < \tilde{F}_0 < (1 + \frac{\eta}{4}) n$ with probability at least $1-\delta/3$.

Since $(1-\frac{1}{4}\eta)(1+\frac{3}{4}\eta) \leq 1 + \frac{1}{4} \eta $ for any $0 < \eta \leq 1$, we get:
\begin{align*}
\Pr\Big[ (1+\frac{1}{4}\eta) \cdot n <(1+\frac{3}{4}\eta)\tilde{F}_0 < (1+\frac{1}{2}\eta) \cdot n \Big] \geq 1 - {\delta}/{3}
\end{align*}

Next, using the exponential $\Theta(e^{-x})$ tails of the Laplace distribution, and the fact that $\ell = \Omega(\log(1/\delta) \frac{n}{t})$ and $t = \Omega(   \eps^{-1} \eta^{-1} \log^2(1/\delta) )$, 
we have:
\begin{align*}
\Pr \Big[ |\text{Lap}(\ell/\eps)| > \frac{1}{4}\eta n \Big] \leq \delta / 3 .    
\end{align*}
Conditioned on both event that 
\begin{align*}
    |\text{Lap}(\ell/\eps)| < \frac{1}{4}\eta n \text{~~~and~~~} \\
    (1+\frac{1}{4}\eta) \cdot n <(1+\frac{3}{4}\eta) \cdot \tilde{F}_0 < (1+\frac{1}{2}\eta) \cdot n
\end{align*} 
which hold together with probability $1-\delta$ by a union bound, it follows that the estimate $\tilde{A}$ of the algorithm indeed satisfies $n \leq \tilde{A} \leq (1+\eta)n$, which completes the proof of the approximation guarantee in the second part of the Theorem. Finally, the space bound follows from the fact that the algorithm need only store the identities of the $t$ smallest hashes in the data stream, which requires $O(t \log  n)$ bits of space, yielding the bound as stated in the theorem after plugging in 
\begin{align*}
    t = \Theta \Big( (\eta^{-2} + \epsilon^{-1} \eta^{-1} \log(1/\delta) ) \cdot \log(1/\delta) \Big) .
\end{align*}
Thus, we complete the proof.
\end{proof}

\begin{algorithm}\caption{\textsc{$1.1$-approx. DPDistinctCount}}\label{alg:dpcount2:app}
\begin{algorithmic}[1]
\Procedure{\textsc{$1.1$-Approx. DPDC}}{$I, \eps, \delta$} \Comment{Lemma~\ref{lem:approxdp:app}}
    \State $\textsc{pQueue} \gets \emptyset$ \Comment{$\textsc{pQueue}$ is a priority-queue of size $t$}
    \State \Comment{ $t = 10^{3} \eps^{-1} \log({24(1+e^{-\eps})}/{\delta}) \log({3}/{\delta}) )$}
    \For{$x_i \in I$}
        \State $y \gets h(x_i) $ \Comment{$h$: $[m] \rightarrow [0,1]$, is a PRF}
        \If{$|\textsc{pQueue}| < t$} 
        \State $\textsc{pQueue}.\code{push}(y)$
        \ElsIf{$y < \textsc{pQueue}.\code{top}() \; \land y \notin \textsc{pQueue}$} 
        \State $\textsc{pQueue}.\code{pop}()$
        \State $\textsc{pQueue}.\code{push}(y)$
        \EndIf
    \EndFor
    \State $v \gets \textsc{pQueue}.\code{top}()$
    \State $\vldbrevision{\tilde{G}}\leftarrow 1.075  \frac{t}{v} + \text{Lap}(0.02 n / \log(3/\delta) )$
    \State \Return $\vldbrevision{\tilde{G}}$
\EndProcedure
\end{algorithmic}
\end{algorithm}

\begin{claim}[Restatement of Claim~\ref{cla:t-max}]\label{cla:t-max:app}
For any $0 < \delta \leq 10^{-3}$, $0.1  \leq \eta < 1$ 
and $0 < \eps < 1$, then we have 
\begin{align*}
    & ~ 3( 1 + \eta / 4 ) \cdot ( \eta / 4 )^{-2} \cdot \log(6/\delta) \\
     & ~ \leq 25  \eps^{-1} ( \eta/ 4 )^{-1} \cdot \log(24(1+e^{-\eps})/\delta) \cdot \log(3/\delta) .
\end{align*}     
\end{claim}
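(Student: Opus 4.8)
The plan is to prove the claim by a direct chain of elementary inequalities that progressively weakens the left-hand side until it is visibly dominated by the right-hand side; no probabilistic or analytic machinery is needed, only careful bookkeeping of the absolute constants over the stated ranges $0<\delta\le 10^{-3}$, $0.1\le\eta<1$, $0<\eps<1$.

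First I would simplify the logarithmic factor on the left. Since $0<\delta<1$ we have $3/\delta>3>2$, so $\log(3/\delta)\ge\log 2$, and hence $\log(6/\delta)=\log 2+\log(3/\delta)\le 2\log(3/\delta)$. Plugging this in and rewriting $(1+\eta/4)(\eta/4)^{-2}=(4/\eta+1)(\eta/4)^{-1}$, the left-hand side is at most $6(4/\eta+1)(\eta/4)^{-1}\log(3/\delta)$. Now the hypothesis $\eta\ge 0.1$ gives $4/\eta+1\le 41$, so $\LHS\le 246\,(\eta/4)^{-1}\log(3/\delta)$. Next, $\delta\le 10^{-3}$ forces $\log(24/\delta)\ge\log(24000)>10$, whence $246<25\cdot 10\le 25\log(24/\delta)$. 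Finally, $1+e^{-\eps}\ge 1$ gives $\log(24/\delta)\le\log(24(1+e^{-\eps})/\delta)$, and $\eps<1$ gives $\eps^{-1}>1$, so inserting both of these only enlarges the bound. Chaining the inequalities yields
\[
\LHS \;\le\; 25\,\eps^{-1}\,(\eta/4)^{-1}\,\log\!\big(24(1+e^{-\eps})/\delta\big)\,\log(3/\delta)\;=\;\RHS,
\]
which is the claimed inequality.

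There is no genuine obstacle here: the only thing to watch is that each constant slack — $\log 2\le\log(3/\delta)$, $4/\eta\le 40$, $\log(24000)\ge 10$, $\eps^{-1}\ge 1$, $1+e^{-\eps}\ge 1$ — is valid over the full parameter range, and that the single $\log$ factor on the left is correctly doubled to match the two $\log$ factors on the right. The purpose of the claim is simply to certify that, once $\eta\ge 0.1$, the first argument of the $\max$ defining $t$ in Algorithm~\ref{alg:dpcount} is never the binding one, which is exactly what lets the $1.1$-approximation variant (Lemma~\ref{lem:approxdp}) collapse $t$ to the single clean expression $t=10^{3}\,\eps^{-1}\log(24(1+e^{-\eps})/\delta)\log(3/\delta)$.
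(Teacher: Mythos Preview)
Your proof is correct and essentially identical to the paper's own argument: the same chain of inequalities is used in the same order, with the same constants ($\log(6/\delta)\le 2\log(3/\delta)$, $4/\eta+1\le 41$, $246<25\cdot 10\le 25\log(24/\delta)$, and then insert $\eps^{-1}\ge 1$ and $1+e^{-\eps}\ge 1$). Your added remark about the claim's role in collapsing the $\max$ defining $t$ is also exactly the intended purpose.
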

\begin{proof}
From $0 < \delta < 1$, we know:
\begin{align*}
    \log(6/\delta) \leq 2\log(3/\delta)
\end{align*}
It follows:
\begin{align*}
\LHS \leq & ~ 3\big[(1+{\eta}/{4}) \cdot ({\eta}/{4})^{-1}\big]  \cdot ({\eta}/{4})^{-1} \cdot 2\log(3/\delta) \\
    \leq & ~ 6 ( 4 / \eta + 1 )\cdot ( {\eta}/{4})^{-1} \cdot \log(3/\delta) 
\end{align*}
From $ \eta \geq 0.1 $, we know $4/\eta + 1 \leq 41$. 
From $\delta \leq 10^{-3}$, we also know $\log(24/\delta) \geq 10$. Thus:
\begin{align*}
   LHS \leq & ~ 246({\eta}/{4})^{-1}\cdot \log(3/\delta) \\
   \leq & ~ 25\cdot 10 \cdot ({\eta}/{4})^{-1}\cdot \log(3/\delta) \\
   \leq & ~ 25\log(24/\delta)\cdot ({\eta}/{4})^{-1}\cdot \log(3/\delta) \\
    \leq & ~ 25\log(24(1+e^{-\epsilon})/\delta)\cdot ({\eta}/{4})^{-1}\cdot \log(3/\delta) \\
    \leq & ~ 25 \eps^{-1} ({\eta}/{4})^{-1} \cdot \log(24(1+e^{-\epsilon})/\delta)\cdot  \log(3/\delta) 
\end{align*}
Now we completes the proof.
\end{proof}

\begin{lemma}[Restatement of Lemma~\ref{lem:approxdp}]\label{lem:approxdp:app}
For any $0 < \eps < 1$, $0 < \delta \leq 10^{-3}$, there is an distinct count algorithm (Algorithm~\ref{alg:dpcount2:app}) such that:
\begin{enumerate}
    \item The algorithm is $(\eps, \delta)$-differentially private.
    \item  With probability at least $1 - \delta$, the estimated distinct count $\vldbrevision{\tilde{G}}$ satisfies: 
    \begin{align*}
        n \leq \vldbrevision{\tilde{G}} \leq 1.1 n,
    \end{align*}
    where $n$ is the number of distinct elements in the data stream.
\end{enumerate}
The space used by the distinct count algorithm is \begin{align*}
O\Big( ( 100 + 10 \eps^{-1} \log(1/\delta)) \cdot \log(1/\delta) \cdot \log n \Big)
\end{align*}
bits.
\end{lemma}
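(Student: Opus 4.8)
The plan is to obtain Lemma~\ref{lem:approxdp:app} as an immediate specialization of Theorem~\ref{thm:main:app} with the approximation parameter fixed to $\eta = 0.1$. First I would check that the hypotheses line up: $0 < \delta \le 10^{-3}$ gives $0 < \delta < 1/2$, and $\eta = 0.1 \in (0,1/2)$, so Theorem~\ref{thm:main:app} applies. Substituting $\eta = 0.1$ turns $1+\eta$ into $1.1$ and $1+\tfrac{3}{4}\eta$ into $1.075$, which is exactly the constant multiplying $t/v$ in Algorithm~\ref{alg:dpcount2:app}, and it yields the stated approximation window $n \le \tilde{G} \le 1.1n$ with probability $1-\delta$.

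The one nontrivial point is to verify that the concrete value $t = 10^{3}\eps^{-1}\log(24(1+e^{-\eps})/\delta)\log(3/\delta)$ hard-coded in Algorithm~\ref{alg:dpcount2:app} meets the lower bound on $t$ required by Theorem~\ref{thm:main:app}, namely
\[
t \ge \max\Bigl\{ 3(1+\tfrac{\eta}{4})(\tfrac{\eta}{4})^{-2}\log(6/\delta),\; 20\eps^{-1}(\tfrac{\eta}{4})^{-1}\log(24(1+e^{-\eps})/\delta)\log(3/\delta)\Bigr\}.
\]
With $\eta = 0.1$ we have $(\eta/4)^{-1} = 40$, so the second term in the max equals $800\eps^{-1}\log(24(1+e^{-\eps})/\delta)\log(3/\delta) \le t$, and the first term is at most $25\eps^{-1}(\eta/4)^{-1}\log(24(1+e^{-\eps})/\delta)\log(3/\delta) = 10^{3}\eps^{-1}\log(24(1+e^{-\eps})/\delta)\log(3/\delta) = t$ by Claim~\ref{cla:t-max:app} (which is precisely the asserted comparison under $\delta \le 10^{-3}$, $\eta \ge 0.1$, $\eps < 1$). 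Hence Theorem~\ref{thm:main:app} applies with this $t$.

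Next I would check that, after this substitution, the Laplace scale $20\eps^{-1}\tfrac{n}{t}\log(24(1+e^{-\eps})/\delta)$ used in Algorithm~\ref{alg:dpcount:app} collapses to the scale $0.02\,n/\log(3/\delta)$ written in Algorithm~\ref{alg:dpcount2:app}: the $\eps^{-1}$ and $\log(24(1+e^{-\eps})/\delta)$ factors cancel against those inside $t$, leaving $20n/(10^{3}\log(3/\delta)) = 0.02\,n/\log(3/\delta)$. With the $(\eps,\delta)$-differential privacy and the approximation guarantee carried over verbatim from Theorem~\ref{thm:main:app}, it only remains to read off the space bound by plugging $\eta^{-2} = 100$ and $\eta^{-1} = 10$ into $O((\eta^{-2} + \eps^{-1}\eta^{-1}\log(1/\delta))\log(1/\delta)\log n)$, giving $O((100 + 10\eps^{-1}\log(1/\delta))\log(1/\delta)\log n)$ bits.

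Since the whole argument is a parameter substitution, I do not anticipate a real obstacle; the only place demanding care is confirming the $t$-lower-bound, and that is exactly the job of Claim~\ref{cla:t-max:app}, which reconciles the $\eta^{-2}$-type and $\eta^{-1}$-type terms once $\eta$ is bounded away from $0$ and $\delta$ is small. One minor technicality inherited from Lemma~\ref{lem:dc-sensitivity:app} is the implicit requirement $16 < t < n/2$, i.e.\ that the stream has sufficiently many distinct elements; for streams with only a constant number of distinct elements the count can be maintained exactly, so this causes no loss and I would note it in passing rather than belabor it.
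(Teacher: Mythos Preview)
Your proposal is correct and follows essentially the same route as the paper: both obtain the lemma by specializing Theorem~\ref{thm:main:app} at $\eta = 0.1$, invoking Claim~\ref{cla:t-max:app} to certify that the hard-coded $t = 10^{3}\eps^{-1}\log(24(1+e^{-\eps})/\delta)\log(3/\delta)$ dominates both terms in the required lower bound, and then simplifying the Laplace scale and space bound by direct substitution. Your write-up is in fact more explicit than the paper's (which leaves the appeal to Claim~\ref{cla:t-max:app} implicit), and your remark about the side condition $16 < t < n/2$ is a reasonable aside that the paper omits.
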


\begin{proof}
This lemma directly follows Theorem~\ref{thm:main:app} by setting $\eta = 0.1$, $0 < \delta < 10^{-3}$ and:
\begin{align*}
t & = ~ 25 \eps^{-1} ({\eta}/{4})^{-1} \cdot \log(24(1+e^{-\epsilon})/\delta)\cdot  \log(3/\delta) \\
& = ~ 10^{3} \eps^{-1} \log(24(1+e^{-\epsilon})/\delta)\cdot  \log(3/\delta)
\end{align*}
Thus, the scale factor in the Lap distribution (line 14 in Algorithm~\ref{alg:dpcount:app}) becomes:
\begin{align*}
     & ~ 20 \eps^{-1}\frac{n}{t}\log(24(1+e^{-\eps})/\delta) \\
     = & ~ ~
    20 \eps^{-1} \frac{n \log(24(1+e^{-\eps})/\delta)} {1000 \eps^{-1} \log(24(1+e^{-\epsilon})/\delta)\cdot  \log(3/\delta)} \\
    = & ~ 0.02 n /\log(3/\delta)
\end{align*}
\end{proof}
\section{Properties of Binomial Distribution}
\label{sec:rand_partition}

\begin{fact}[Tail bounds of binomial distribution]\label{fact:binomial}
If $X \sim B(n, p)$, that is, $X$ is a binomially distributed random variable, where $n$ is 
the total number of experiment and $p$ is the probability of each experiment getting a
successful result, and $k \geq np$, then:
\begin{align*}
    \Pr[ X \geq k ] \leq \exp ( -2n(1 - p - (n-k)/{n})^2 )
\end{align*}

\end{fact}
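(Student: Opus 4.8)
The plan is to recognize this inequality as a direct instance of Hoeffding's inequality. First I would write $X = \sum_{i=1}^n Y_i$, where $Y_1,\dots,Y_n$ are independent Bernoulli random variables with success probability $p$; each $Y_i$ takes values in $[0,1]$, and $\E[X] = np$. Since the hypothesis $k \geq np$ guarantees that $t := k - np$ is nonnegative, Hoeffding's inequality for sums of independent $[0,1]$-valued random variables gives
\[
\Pr[X \geq k] = \Pr\!\left[\,X - \E[X] \geq t\,\right] \leq \exp\!\left(-\frac{2t^2}{n}\right).
\]

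Next I would simplify the exponent to match the claimed form. Substituting $t = k - np$ yields $-\frac{2t^2}{n} = -2n\big(\tfrac{k - np}{n}\big)^2 = -2n\big(\tfrac{k}{n} - p\big)^2$, and since $\tfrac{k}{n} - p = 1 - p - \tfrac{n-k}{n}$, this equals $-2n\big(1 - p - \tfrac{n-k}{n}\big)^2$, which is exactly the bound in the statement. So the entire argument reduces to invoking Hoeffding plus one line of algebra.

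The only place where genuine work is needed — and the closest thing to an obstacle — is if one wants a self-contained derivation rather than citing Hoeffding's inequality: reprove it by the standard moment-generating-function (Chernoff) argument. That is, bound $\Pr[X - \E[X] \geq t] \leq e^{-\lambda t}\,\E\big[e^{\lambda(X - \E[X])}\big]$ via Markov for $\lambda > 0$, factor the expectation over the independent $Y_i$, apply Hoeffding's lemma $\E\big[e^{\lambda(Y_i - \E Y_i)}\big] \leq e^{\lambda^2/8}$ (valid since $Y_i \in [0,1]$), and then optimize, choosing $\lambda = 4t/n$ to obtain the exponent $-2t^2/n$. I do not anticipate any real difficulty here; it is a textbook computation (completing the square in $\lambda$), and one could equally cite the relative-entropy Chernoff bound for the binomial, which gives the same estimate. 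Alternatively, since Lemma~\ref{lem:bernstein} (Bernstein) is already available in the paper, one could note that a Bernstein-type bound also implies the stated inequality, though Hoeffding's bound is the cleanest route and yields precisely the stated constant $2$.
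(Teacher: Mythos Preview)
Your proposal is correct and follows essentially the same approach as the paper: both invoke Hoeffding's inequality for bounded independent random variables and then simplify the exponent. The only cosmetic difference is that the paper first states the lower-tail Hoeffding bound $F(k;n,p)\leq \exp(-2n(p-k/n)^2)$ and then uses the symmetry $\Pr[X\geq k]=F(n-k;n,1-p)$ to obtain the upper tail, whereas you apply Hoeffding directly to the upper tail; the resulting exponents are identical since $k/n-p = 1-p-(n-k)/n$.
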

\begin{proof}
For $k \leq n p$, from the lower tail of the CDF of binomial distribution 
$F(k; n, p) = \Pr[X \leq k]$, we use Hoeffding's inequality~\cite{h63} to get a simple bound:
\begin{align*}
    F(k;n,p) \leq \exp ( -2n (p - {k}/{n})^2 )
\end{align*}

For $k \geq n p$, since $\Pr[X \geq k] = F(n-k; n, 1-p)$, we have:
\begin{align*}
    \Pr[ X \geq k] \leq \exp ( -2n(1 - p - (n-k)/{n})^2 ).
\end{align*}
\end{proof}

\begin{lemma}
If $X \sim B(n, p)$, that is, $X$ is a binomially distributed random variable, where $n$ is 
the total number of experiment and $p$ is the probability of each experiment getting a
successful result, then
\begin{align*}
    \Pr[ X \geq np + \sqrt{ 0.5 n \log (1/\delta)}] \leq \delta
\end{align*}
\end{lemma}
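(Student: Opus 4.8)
The plan is to apply directly the tail bound for the binomial distribution recorded in Fact~\ref{fact:binomial}, specialized to a carefully chosen threshold $k$. Concretely, I would set $k := np + \sqrt{0.5\, n \log(1/\delta)}$. Since $\sqrt{0.5\, n \log(1/\delta)} \geq 0$ for $\delta \in (0,1)$, we have $k \geq np$, so the second case of Fact~\ref{fact:binomial} applies and gives
\begin{align*}
    \Pr[X \geq k] \leq \exp\!\big( -2n\,(1 - p - (n-k)/n)^2 \big).
\end{align*}

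Next I would simplify the exponent. Observe that $1 - p - (n-k)/n = k/n - p = \sqrt{0.5\, n \log(1/\delta)}/n = \sqrt{0.5 \log(1/\delta)/n}$, so that $(1 - p - (n-k)/n)^2 = 0.5\log(1/\delta)/n$ and hence $2n\,(1 - p - (n-k)/n)^2 = \log(1/\delta)$. Substituting back yields $\Pr[X \geq k] \leq \exp(-\log(1/\delta)) = \delta$, which is exactly the claimed bound.

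The only points requiring a word of care are boundary/degeneracy issues: if $k > n$ then the event $\{X \geq k\}$ is empty and the bound is trivial, and if $\delta \geq 1$ the statement is vacuous, so we may assume $0 < \delta < 1$, under which $k \geq np$ as used above. Beyond that there is no real obstacle — the statement is an immediate consequence of Fact~\ref{fact:binomial} together with the algebraic identity $2n \cdot (k/n - p)^2 = \log(1/\delta)$ for this particular $k$. I would present it as a two- or three-line derivation.
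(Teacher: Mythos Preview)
Your proposal is correct and follows essentially the same approach as the paper: set $k = np + \sqrt{0.5 n \log(1/\delta)}$, apply Fact~\ref{fact:binomial}, and simplify the exponent using $1 - p - (n-k)/n = k/n - p$ to obtain $\exp(-\log(1/\delta)) = \delta$. Your version is in fact slightly more careful about the boundary cases ($k > n$, $\delta \geq 1$) than the paper's own writeup.
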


\begin{proof}
From Fact~\ref{fact:binomial}, let $k = np +  \sqrt{0.5n \log (1/\delta) } $, we have:
\begin{align*}
    & ~ \Pr [ X \geq np + \sqrt{0.5 n \log (1/\delta) }
    ] \\
    \leq & ~ \exp (-2n ( 1 - p - 
    (n - (np + \sqrt{- 0.5 n \log(1/ \delta ) } ) ) / n )^2 ) \\
    = & ~ \exp ( -2n \frac{ 1}{2n} \log (1/\delta) ) \\
    = & ~ \delta .
\end{align*}
\end{proof}

\end{document}